\newcommand{\R}{\mathbb{R}} 
\newcommand{\I}{\mathcal{I}}
\newcommand{\J}{\mathcal{J}}
\newcommand{\Z}{\mathcal{Z}}
\newtheorem{assumption}{Assumption}
\newtheorem{remark}{Remark}
\begin{document}

\title{\Large Fast Convergence of Fictitious Play for Diagonal Payoff Matrices}
\author{Jacob Abernethy\thanks{School of Computer Science, Georgia Institute of Technology. Email: \texttt{prof@gatech.edu}}
\and Kevin A.~Lai\thanks{School of Computer Science, Georgia Institute of Technology. Email: \texttt{kevinlai@gatech.edu}}
\and Andre Wibisono\thanks{School of Computer Science, Georgia Institute of Technology.
Email: \texttt{wibisono@gatech.edu}}
}

\date{}

\maketitle


\fancyfoot[R]{\scriptsize{Copyright \textcopyright\ 2021 by SIAM\\
Unauthorized reproduction of this article is prohibited}}





\begin{abstract} \small\baselineskip=9pt 
Fictitious Play (FP) is a simple and natural dynamic for repeated play in zero-sum games. Proposed by Brown in 1949, FP was shown to converge to a Nash Equilibrium by Robinson in 1951, albeit at a slow rate that may depend on the dimension of the problem. 
In 1959, Karlin conjectured that FP converges at the more natural rate of $O(1/\sqrt{t})$. However, Daskalakis and Pan disproved a version of this conjecture in 2014, showing that a slow rate can occur, although their result relies on adversarial tie-breaking. In this paper, we show that Karlin's conjecture is indeed correct for the class of diagonal payoff matrices, as long as ties are broken lexicographically. Specifically, we show that FP converges at a $O(1/\sqrt{t})$ rate in the case when the payoff matrix is diagonal.
We also prove this bound is tight by showing a matching lower bound in the identity payoff case under the lexicographic tie-breaking assumption.
\end{abstract}

\section{Introduction}
 In a two-player zero-sum game, we are given a payoff matrix $A \in \R^{n\times m}$, whose $ij$-th entry denotes how much the row player pays the column player when the two players play actions $i$ and $j$ respectively. When each player selects their actions randomly, with the row player sampling $i$ from some distribution $x \in \Delta_n$ and the column player sampling $j$ from some distribution $y \in \Delta_m$ (where $\Delta_d$ is the $(d-1)$-dimensional simplex in $\R^d$), the expected gain to the column player (or equivalently, the expected loss to the row player) is exactly $x^\top A y$. Following the work of von Neumann and Nash, we say that a pair of distributions $(x^*,y^*)$ is at a \textit{minimax point} or \emph{Nash Equilibrium} if we have:
\begin{align*}
(x^\ast)^\top A y \le (x^\ast)^\top A y^\ast \le x^\top A y^\ast  
\end{align*}
for all $(x,y) \in  \Delta_n\times \Delta_n$.
 
In what might be considered the fundamental theorem of game theory, von Neumann proved \cite{neumann1928theorie} that every zero-sum game admits an equilibrium pair; Nash later showed the same holds for non-zero-sum games \cite{nash1951non}. Von Neumann's theorem is often stated in terms of the equivalence of a min-max versus a max-min:
\begin{align*}
\min_{x\in \Delta_n} \max_{y\in \Delta_m} x^\top A y = \max_{y\in \Delta_m}\min_{x\in \Delta_n} x^\top A y.
\end{align*}
It is easy to check that the minimizer of the left hand side and the maximizer of the right exhibit the desired equilibrium pair.

One of the earliest methods for computing Nash Equilibria in zero sum games is \emph{fictitious play} (FP), proposed by Brown~\cite{brown1949some,brown1951iterative}. FP is perhaps the simplest dynamic one might envision for repeated play in a game---in each round, each player considers the empirical distribution of the actions of the other player and selects their action as the best response to this statistic. Formally, we can define state variables $x^{(t)}, y^{(t)}$ at each iteration $t$ and update according to the rule
\begin{align}\label{eq:fpsimple}
\begin{split}
x^{(t+1)} &= x^{(t)} + e_{ i^{(t)} } , \;\; i^{(t)}=\arg\min_{i \in [n]} \, e_i^\top Ay^{(t)} \\
y^{(t+1)} &= y^{(t)} + e_{ j^{(t)} } , \;\; j^{(t)} =\arg\max_{j \in [m]} \, (x^{(t)})^\top Ae_j.
\end{split}
\end{align}
where $e_\ell$ is the $\ell^{th}$ standard unit basis vector. Brown conjectured that the scaled state variables $(\hat x^{(t)}, \hat y^{(t)}) = (\frac{1}{t} x^{(t)},\frac{1}{t} y^{(t)})$ would converge to a minimax point, and in 1951, Robinson showed that FP converges 
asymptotically to a minimax point \cite{robinson1951iterative}.

In addition to having an intuitive game theoretic interpretation, FP has several other strengths. The update itself is simple, having no step-size parameter to tune. Since the row and column players only need $O(n)$ or $O(m)$ memory respectively, the dynamic is also amenable to distributed computation. These qualities have have FP an appealing object of study, and many works have sought to prove convergence of FP in more general settings \cite{miyasawa1961convergence, shapley1964some, MS96, brandt2010rate}. FP has also inspired many algorithms, such as the Follow-The-Perturbed Leader algorithm \cite{HS02} and other online learning algorithms. More recently, DeepMind used an algorithm called Prioritized Fictitious Self Play as part of the training for their AlphaStar program for playing competitive Starcraft \cite{vinyals2019grandmaster}.

Despite the extensive work on FP, much of it has focused on asymptotic convergence, leaving significant questions about the convergence rate of the dynamic. While not stated as such, Robinson's 1951 proof actually implies that FP converges to within $O(t^{-\frac{1}{m+n-2}})$ of the equilibrium pair $(x^*, y^*)$ after $t$ rounds of play \cite{robinson1951iterative}.  Robinson's result utilized a recursive argument that successively eliminates actions of the players, and she did not address whether this was a tight rate. In what is often known as \emph{Karlin's Conjecture} from 1959, Karlin~\cite{karlin1959games} suggested that the true rate may be significantly faster, perhaps on the order of $O(t^{-1/2})$. This remained an open question for decades, but was seemingly put to rest in 2014 by Daskalakis and Pan~\cite{daskalakis2014counter} who were able to produce an instance of a game and a FP dynamic for which the convergence rate was $\Omega(t^{-1/n})$, in particular the rate is slow and depends on the number of actions, similar to the bound of Robinson. Their lower bound construction follows along the same lines as the upper bound of Robinson, recursively generating harder instances as more actions are given to the players.

The goal of our work is to show that Karlin's conjecture may have only been \textit{ostensibly} resolved, and we argue that a slightly more precise version of the conjecture is likely to be true, namely that a particular form of FP will admit a rate of $O(t^{-1/2})$. The imprecise aspect of Karlin's conjecture is that the $\arg\min$ and $\arg\max$ in \eqref{eq:fpsimple} are not \emph{well-defined} to the extent that many solutions can exist in the event of ties. Daskalakis and Pan distinguish between the model in which ties arising in \eqref{eq:fpsimple} can be broken in an arbitrary (adversarial) fashion and the model in which ties are broken lexicographically; they acknowledge that their lower bound holds only in the former case. Their lower bound construction heavily exploits the ill-defined nature of \eqref{eq:fpsimple}, employing carefully-constructed tie-making and adversarial tie-breaking to obtain the slow rate. We emphasize that one of the appealing properties of FP is that it is a natural game dynamic, yet the dynamic proposed by Daskalakis and Pan, while technically satisfying a definition of fictitious play, is by no means natural.

We consider the convergence of a well-defined version of FP with \emph{lexicographic tie-breaking}, where the $\arg\min$ and $\arg\max$ functions break ties by selecting the winner with the smallest index. Lexicographic tie-breaking is one of the simplest tie-breaking methods, being the default when writing a min or max in code. From a game theoretic perspective, lexicographic tie-breaking corresponds to each player having fixed tie-breaking preferences between actions. We show that this version of FP has a rate of $O(t^{-1/2})$ for the class of diagonal payoff matrices, which includes the matrix used in the lower bound of Daskalakis and Pan. This is the first positive result showing any improvement over Robinson's result for matrices of size $3\times 3$ or larger. We further provide a lower bound of $\Omega(t^{-1/2})$ under lexicographic tie-breaking, showing that our iteration complexity bound is tight.

Our analysis gives a tight characterization of the how the FP dynamic evolves in the diagonal case. We show how lexicographic tie-breaking causes the dynamic to behave in a specific way, which allows us to prove our upper and lower bounds. To our knowledge, this is the first such work that leverages lexicographic tie-breaking to prove fast convergence, and we hope our work lays the groundwork for proving the $O(t^{-1/2})$ upper bound for arbitrary payoff matrices. We conclude by discussing some ways to extend our analysis and related open questions.

\subsection{Related work}

We give a brief overview of prior work on fictitious play and related game dynamics.

\paragraph{Fictitious Play} The original formulation of FP was by Brown~\cite{brown1949some,brown1951iterative}, where he mentions both discrete and continuous time dynamics. Since then, FP has been studied extensively--many works have explored the asymptotic convergence of fictitious play in various game settings \cite{miyasawa1961convergence, shapley1964some, MS96, brandt2010rate}, while another notable line of work examines properties of the continuous time version of FP \cite{harris1998rate, OS11, OS14, SK17}. The first convergence rate for FP was shown by Robinson~\cite{robinson1951iterative}, who proves that FP achieves a rate of $O(t^{-\frac{1}{m+n-2}})$ under arbitrary tie-breaking. Karlin~\cite{karlin1959games} later conjectures that the convergence rate is $O(t^{-\frac{1}{2}})$. This matches the convergence rate of several related dynamics based on no-regret algorithms, as described below. Moreover, FP appears to always achieve this rate empirically, as we illustrate in \Cref{fig:empirical_karlin}. Daskalakis and Pan~\cite{daskalakis2014counter} construct a counter-example for Karlin's strong conjecture using carefully designed adversarial tie-breaking rules, showing that FP for a zero-sum game on the $n\times n$ identity matrix has a worst-case convergence rate of $\Omega(t^{-\frac{1}{n}})$.

\paragraph{No-regret dynamics} The literature on so-called \emph{online learning} \cite{cesa2006prediction} considers the family of problems in which an algorithm must make a decision on each of a sequence of $T$ rounds---this could be a discrete choice among $n$ alternatives, for example, or a real-valued parameter vector $\theta$---and then the decision is evaluated according to some loss which provides appropriate feedback. The algorithm's long-term goal is to minimize its \emph{regret}, defined as the difference between the algorithm's cumulative loss and the loss of the best fixed action in hindsight. There has been a great deal of work on developing such \emph{no regret algorithms} \cite{hazan2016introduction,cesa2006prediction}, where the overall regret scales sublinearly with $T$, and such algorithms are often used to design game dynamics in order to ensure equilibrium-convergent behavior; see for example \cite{abernethy2018faster,BP19b}. Regret-minimizing algorithms are convenient choices for producing well-behaved dynamics in zero-sum games because the rate of convergence to equilibria can be established directly through the time-averaged regret of the players' actions.

What makes understanding Fictitious Play so challenging is that the players' actions in this dynamic do not necessarily exhibit vanishing regret and hence we can not immediately use such results to reason about convergence. But the procedures used by fictitious players \emph{resemble} those of no-regret algorithms in the following sense. FP can be viewed as having both players update their actions using the ``Follow-The-Leader'' (FTL) algorithm, a strategy described by \cite{kalai2005efficient} who provide a precise example showing that FTL can have linear regret. But the FTL algorithm motivates their more sophisticated algorithm,  Follow-The-Perturbed-Leader (FTPL), which applies the follow-the-leader rule only after some random noise is added to the total losses, and they show this has the desired no-regret property. A related algorithm, Follow-The-Regularized-Leader (FTRL) \cite{HS02,hazan2016introduction}, can also guarantee no regret\footnote{There is a surprising and interesting connection between FTPL and FTRL; we refer the reader to \cite{abernethy20178,abernethy2014online} for more.}. FTPL and FTRL both have a bound of $O(t^{-\frac{1}{2}})$ on their time-averaged regret, and it can be shown that this implies a $O(t^{-\frac{1}{2}})$ for the corresponding dynamics in a zero-sum game if both players utilize such algorithms.

Since FTL is provably not a no-regret algorithm, as mentioned above, we must develop novel techniques beyond the no-regret framework to reason about convergence to equilibrium. To summarize the challenges here: while FTL does not exhibit vanishing regret on particular sequences of losses, in the FP dynamic each player's observed loss sequence is generated from the FTL actions of the other player, and such sequences may be ``easier'' and do not induce large regret. We also note that there are benefits to the FP dynamic beyond its naturalness and simplicity: algorithms like FTRL and FTPL require a time-dependent step-size, whereas FTL has no notion of step-size. In the event of a fixed step-size, while it remains unknown whether FTRL- or FTPL-based dynamics converge in general, there have been a number of works on this topic in specific settings \cite{palaiopanos2017multiplicative,daskalakis2018last,BP19b}.

\section{Preliminaries}

\paragraph{Notation.}
Let $\R$ denote the set of real numbers. 
Let $[n] = \{1,\dots,n\}$ denote the set of actions.
Let $\Delta_n = \{x \in \R^n \colon x_i \ge 0, \sum_{i=1}^n x_i = 1\}$ denote the probability simplex.
For $i \in [n]$, let $e_i$ denote the $i^{\text{th}}$ elementary basis vector.
Let $I_n$ denote the $n\times n$ identity matrix. 
For a vector $v \in \R^n$, we let $v_i \in \R$ denote its $i^{\text{th}}$ entry.

\subsection{The minimax theorem and the duality gap} 

In this paper, we assume we are working with a square payoff matrix $A \in \R^{n\times n}$.
The decision set for both row and column players is the simplex $\Delta_n$, which is the set of probability distributions over the actions $[n]$.

A {\em minimax point} is a point $(x^\ast, y^\ast) \in \Delta_n \times \Delta_n$ which satisfies:
\begin{align}\label{Eq:Minimax}
(x^\ast)^\top A y \le (x^\ast)^\top A y^\ast \le x^\top A y^\ast
\end{align}
for all $(x,y) \in  \Delta_n\times \Delta_n$.
By Von Neumann's minimax theorem, we know that a minimax point exists for any $A$, although it is not necessarily unique in general. 

For any $x, y \in \R^n$ we define the {\em duality gap} $\psi \colon \R^n \times \R^n \to \R$ as
\begin{align}\label{Eq:DG}
\psi(x,y) = \max_{\tilde y \in \Delta_n} \, x^\top A \tilde y - \min_{\tilde x \in \Delta_n} \, \tilde x^\top Ay.
\end{align}

While $\psi$ is defined on all of $\R^n \times \R^n$, we are interested in its behavior on $\Delta_n \times \Delta_n$.
It holds that $\psi(x,y) \geq 0$ for any $(x,y) \in \Delta_n \times \Delta_n$. Furthermore, we can characterize a minimax point as the minimizer of the duality gap (see Appendix~\ref{Sec:MinimaxProof} for a proof).
\begin{lemma}\label{Lem:Minimax}
	A point $(x^\ast,y^\ast) \in \Delta_n \times \Delta_n$ is a minimax point if and only if $\psi(x^\ast,y^\ast) = 0$.
\end{lemma}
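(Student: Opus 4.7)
The plan is to prove both directions by explicitly unpacking the definition of $\psi$ and comparing it termwise to the minimax inequalities. The key observation is that both inequalities in \eqref{Eq:Minimax} can be rewritten as a statement that $(x^\ast)^\top A y^\ast$ equals $\max_{\tilde y}(x^\ast)^\top A \tilde y$ and equals $\min_{\tilde x} \tilde x^\top A y^\ast$, which is precisely saying that $\psi(x^\ast,y^\ast)$ is zero (since $\psi$ is the difference of these two quantities). So the proof should really be a matter of massaging the definitions.

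For the forward direction, I would assume $(x^\ast,y^\ast)$ is a minimax point. The left inequality $(x^\ast)^\top A y \le (x^\ast)^\top A y^\ast$ for all $y \in \Delta_n$ gives $\max_{\tilde y \in \Delta_n}(x^\ast)^\top A \tilde y \le (x^\ast)^\top A y^\ast$, and the reverse inequality holds trivially since $y^\ast \in \Delta_n$, so $\max_{\tilde y \in \Delta_n}(x^\ast)^\top A \tilde y = (x^\ast)^\top A y^\ast$. Symmetrically, the right inequality yields $\min_{\tilde x \in \Delta_n} \tilde x^\top A y^\ast = (x^\ast)^\top A y^\ast$. Subtracting, $\psi(x^\ast,y^\ast) = 0$.

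For the reverse direction, assume $\psi(x^\ast,y^\ast) = 0$, i.e.\ $\max_{\tilde y}(x^\ast)^\top A \tilde y = \min_{\tilde x}\tilde x^\top A y^\ast$; call this common value $v$. The point $(x^\ast)^\top A y^\ast$ lies between these two quantities (it is at most the max and at least the min), so $(x^\ast)^\top A y^\ast = v$ as well. Then for any $y \in \Delta_n$, we have $(x^\ast)^\top A y \le v = (x^\ast)^\top A y^\ast$, and for any $x \in \Delta_n$, we have $x^\top A y^\ast \ge v = (x^\ast)^\top A y^\ast$, giving both minimax inequalities.

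There is no real obstacle here; the lemma is essentially a restatement of the definition of a Nash equilibrium in terms of the duality gap, and both implications follow by chasing inequalities. The only point requiring any care is noticing, in the reverse direction, that the common value $v$ must equal $(x^\ast)^\top A y^\ast$ itself, which is what lets one then conclude both saddle-point inequalities simultaneously.
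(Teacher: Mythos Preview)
Your proof is correct and follows essentially the same approach as the paper's own proof: both directions are obtained by unwinding the definitions and using that $(x^\ast)^\top A y^\ast$ is sandwiched between $\min_{\tilde x}\tilde x^\top A y^\ast$ and $\max_{\tilde y}(x^\ast)^\top A \tilde y$. Your write-up is in fact slightly more explicit than the paper's about why the common value $v$ must equal $(x^\ast)^\top A y^\ast$ in the reverse direction.
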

\noindent In this sense, $\psi(x,y)$ is a measure of distance of some $(x,y) \in \Delta_n \times \Delta_n$ to the equilibrium $(x^\ast,y^\ast)$.

We note that in the diagonal case when $A$ is a diagonal matrix with positive diagonal entries $A_{ii} > 0$, 
the minimax point $(x^\ast,y^\ast)$ is unique with $x^\ast_i = y^\ast_i \propto 1/A_{ii}$ for all $i \in [n]$.

\subsection{Fictitious Play}

In this work when we use the term \emph{dynamic} we are referring to the sequence of actions generated by two players in a repeated zero-sum game when each uses a particular decision rule to select their moves. The {\em Fictitious Play} (FP) dynamic arises when each player's decision rule is as follows: compute the empirical distribution of the previous actions taken by their opponent, and then choose \emph{the} action on this round that is a best response to this empirical distribution.
(We said ``the action'' rather than ``an action'' since we must specify a unique choice even when there are ties; see below.)

Concretely, the Fictitious Play algorithm starts at any $(x^{(1)}, y^{(1)}) \in \Delta_n \times \Delta_n$\footnote{Classically, FP is initialized at $(x^{(0)}, y^{(0)}) = (\mathbf{0}, \mathbf{0})$ at time $t = 0$, so $(x^{(1)}, y^{(1)})$ is in $\Delta_n \times \Delta_n$.}
and performs the following update at each time $t \ge 1$:
\begin{align}\label{Eq:FP}
\begin{split}
x^{(t+1)} &= x^{(t)} + e_{i^{(t)}} ,\;\; i^{(t)}=\arg\min_{i \in [n]} \, e_i^\top Ay^{(t)} \\
y^{(t+1)} &= y^{(t)} + e_{j^{(t)}} ,\;\; j^{(t)}=\arg\max_{j \in [m]} \, (x^{(t)})^\top Ae_j.
\end{split}
\end{align}
In order to be well-defined, we need to specify how to break possible ties in the $\arg\min$ and $\arg\max$.
We do this via a lexicographic ordering; see Section~\ref{Sec:TieBreak} for detail.

\begin{assumption}\label{as:no_tie}
Ties in the FP dynamic~\eqref{Eq:FP} are broken according to a lexicographic order.
\end{assumption}

We refer to the iterates $(x^{(t)}, y^{(t)}) \in \R^n \times \R^n$, $t \ge 1$, as the dynamic of the FP algorithm.
We also refer to the time $t \ge 1$ as the {\em rounds} of the algorithm.

At each time $t \ge 1$ we can consider the scaled iterates, which always lie on the simplex:
\begin{align}
\left(\hat x^{(t)}, \hat y^{(t)}\right) = \left( \frac{x^{(t)}}{t}, \, \frac{y^{(t)}}{t} \right) \in \Delta_n \times \Delta_n.
\end{align}
The main focus of this work is to understand how quickly $(\hat x^{(t)}, \hat y^{(t)})$ converges to $(x^\ast, y^\ast)$ as $t \to \infty$.

In particular, we can measure the speed of convergence via how fast the duality gap $\psi(\hat x^{(t)}, \hat y^{(t)})$ converges to $0$.
The classical result by Robinson~\cite{robinson1951iterative} shows that for any payoff matrix $A \in \R^{n \times n}$, $\psi(\hat x^{(t)}, \hat y^{(t)}) = O(t^{-\frac{1}{2n-2}})$.
Karlin's conjecture~\cite{karlin1959games} asks whether Fictitious Play in fact has a faster rate $\psi(\hat x^{(t)}, \hat y^{(t)}) = O(t^{-\frac{1}{2}})$.
Daskalakis and Pan~\cite{daskalakis2014counter} disprove the {\em strong} formulation of Karlin's conjecture by showing that even in the identity case ($A = I_n$), we can have $\psi(\hat x^{(t)}, \hat y^{(t)}) = \Omega(t^{-\frac{1}{n}})$ if {\em arbitrary} tie-breaking is allowed.
Our results in this paper prove the {\em weak} formulation of Karlin's conjecture in the diagonal case by showing that indeed $\psi(\hat x^{(t)}, \hat y^{(t)}) = \Theta(t^{-\frac{1}{2}})$ if we use a {\em fixed} (lexicographic) tie-breaking, as in~\Cref{as:no_tie}.

We finish this introduction by noting that we can evaluate $\psi$ on either $(\hat x^{(t)}, \hat y^{(t)})$ or $(x^{(t)}, y^{(t)})$, and they are related by: $\psi(x^{(t)}, y^{(t)}) = t \, \psi(\hat x^{(t)} \hat y^{(t)})$.
A basic fact about Fictitious Play is that $\psi(x^{(t)}, y^{(t)})$ is always non-decreasing (see Section~\ref{Sec:PsiIncreasing}; see also Appendix~\ref{App:Geom} for a geometric view).

\section{Main Results}

We give an overview of the results in this paper.
We provide details in Section~\ref{Sec:Diagonal}.

\subsection{Fast convergence of Fictitious Play in diagonal case}

Our first main result is to show that Karlin's conjecture is indeed true for the class of diagonal payoff matrices, as long as the tie-breaking \Cref{as:no_tie} holds true. 

We assume we are in the {\em diagonal case}, namely when $A \in \R^{n \times n}$ is a diagonal matrix with positive diagonal entries $A_{ii} > 0$.
This is an important special case, as it includes the identity case used in the lower bound by Daskalakis and Pan~\cite{daskalakis2014counter}. 
This shows that the slow-converging construction in~\cite{daskalakis2014counter} is prohibited under \Cref{as:no_tie}.

\begin{restatable}{theorem}{ThmUpper}\label{Thm:Upper}
Assume $A$ is a diagonal matrix with $0 < A_{ii} \le A_{\max}$ for all $i \in [n]$.
Under~\Cref{as:no_tie}, for any $(x^{(1)}, y^{(1)}) \in \Delta_n \times \Delta_n$, the FP dynamic~\eqref{Eq:FP} satisfies for all $t \ge 1$
\begin{align*}
\psi(\hat x^{(t)}, \hat y^{(t)}) = O \left(\frac{A_{\max}}{\sqrt{t}}\right).
\end{align*}
\end{restatable}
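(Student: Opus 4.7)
The plan is to control the unscaled duality gap $\psi^{(t)} := \psi(x^{(t)}, y^{(t)}) = t\cdot\psi(\hat x^{(t)}, \hat y^{(t)})$ and prove that it grows as $O(A_{\max}\sqrt{t})$. Because $A$ is diagonal, the update~\eqref{Eq:FP} decouples coordinatewise: setting $u_i^{(t)} := A_{ii}\,x_i^{(t)}$ and $v_i^{(t)} := A_{ii}\,y_i^{(t)}$, the rule becomes
\begin{align*}
u_{i^\ast}^{(t+1)} &= u_{i^\ast}^{(t)} + A_{i^\ast i^\ast}, \quad i^\ast = \arg\min_i v_i^{(t)}, \\
v_{j^\ast}^{(t+1)} &= v_{j^\ast}^{(t)} + A_{j^\ast j^\ast}, \quad j^\ast = \arg\max_j u_j^{(t)},
\end{align*}
with lexicographic ties, and $\psi^{(t)} = U^{(t)} - V^{(t)}$, where $U^{(t)} := \max_j u_j^{(t)}$ and $V^{(t)} := \min_i v_i^{(t)}$ are each individually non-decreasing in $t$.

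Next I would leverage the conservation law $\sum_i u_i^{(t)}/A_{ii} = \sum_i x_i^{(t)} = t = \sum_i y_i^{(t)} = \sum_i v_i^{(t)}/A_{ii}$. Setting $C := \sum_i 1/A_{ii}$ and summing the pointwise inequalities $u_j^{(t)} \le U^{(t)}$ and $v_j^{(t)} \ge V^{(t)}$ with weights $1/A_{jj}$ yields two non-negative quantities
\begin{align*}
\alpha^{(t)} := C\,U^{(t)} - t \ge 0, \qquad \beta^{(t)} := t - C\,V^{(t)} \ge 0,
\end{align*}
satisfying the key identity $\alpha^{(t)} + \beta^{(t)} = C\,\psi^{(t)}$. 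The target bound therefore reduces to showing $\alpha^{(t)}, \beta^{(t)} = O(C A_{\max}\sqrt{t})$, after which $\psi^{(t)} \le (\alpha^{(t)} + \beta^{(t)})/C = O(A_{\max}\sqrt{t})$ as desired. The per-step increments $\Delta\alpha = C\,\Delta U - 1$ and $\Delta\beta = 1 - C\,\Delta V$ are individually $O(C A_{\max})$ in absolute value.

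To bound $\alpha^{(t)}$ (and, symmetrically, $\beta^{(t)}$) I would use a second-moment Lyapunov argument with $\Phi^{(t)} := (\alpha^{(t)})^2 + (\beta^{(t)})^2$. Expanding,
\begin{align*}
\Delta \Phi = 2\,\alpha\,\Delta\alpha + 2\,\beta\,\Delta\beta + (\Delta\alpha)^2 + (\Delta\beta)^2,
\end{align*}
the squared-increment tail is trivially $O((C A_{\max})^2)$ per step. If the cross term $\alpha\,\Delta\alpha + \beta\,\Delta\beta$ can be shown to be $O((C A_{\max})^2)$ in an amortized (telescoped-over-a-window) sense, then $\Phi^{(t)} = O((C A_{\max})^2\, t)$, which gives $\alpha^{(t)}, \beta^{(t)} = O(C A_{\max}\sqrt{t})$ and closes the argument.

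The hard part will be controlling the cross term $\alpha\,\Delta\alpha + \beta\,\Delta\beta$. Because the dynamics are deterministic, the required ``mean-reversion'' of $\alpha,\beta$ must be derived structurally rather than through stochastic cancellation---and this is precisely where \Cref{as:no_tie} becomes essential, since under arbitrary tie-breaking an adversary can prevent the self-correction, as in the construction of Daskalakis and Pan~\cite{daskalakis2014counter}. The intuition is that whenever $\alpha^{(t)}$ is large, some coordinates satisfy $u_i \ll U$ and also tend to have $v_i$ at or very near the minimum $V$, so $i^\ast = \arg\min_i v_i^{(t)}$ repeatedly selects such a lagging coordinate and increases its $u_i$ toward $U$ (shrinking $\alpha$) without raising $U$. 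Making this precise will likely require a careful case analysis tracking, over $O(1)$ consecutive rounds, how the index pair $(i^\ast, j^\ast)$ evolves under the lexicographic order and showing that every round contributing positively to the cross term is matched by compensating rounds within a bounded window---this deterministic cycle structure induced by the lex rule is what substitutes for the random-walk cancellation in a standard second-moment analysis.
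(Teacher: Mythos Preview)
Your reformulation is correct and clean: in the paper's notation your $\alpha^{(t)}+\beta^{(t)}$ is exactly $\sum_i w_i^{(T)}$, the total mass of the weight vector, and the identity $\alpha+\beta=C\psi$ is equivalent to the paper's $w_i=(u_i+v_i)/A_{ii}$. But the proposal stops precisely at the only hard step, and the sketch you give for that step is not right.

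The claim that positive contributions to $\alpha\,\Delta\alpha+\beta\,\Delta\beta$ are ``matched by compensating rounds within a bounded window'' is false as stated. The paper shows (Lemma~\ref{Lem:LengthGeneral}) that the length of a sync-split pair is $\Theta(w_j^{(t)})$, which by Lemma~\ref{Lem:wPsi} is $\Theta(\psi^{(t)}/A_{\max})=\Theta(\sqrt{t})$; the relevant windows are \emph{growing}, not $O(1)$. So a per-step or bounded-lookahead amortization cannot work, and the cancellation you need is genuinely long-range. More seriously, the aggregated scalars $\alpha,\beta$ discard exactly the information the paper's argument needs: the length of the next phase is governed by a \emph{specific} coordinate $w^{(T_s)}_{i_{s+1}}$, not by the sum $\sum_i w_i$, and the paper's induction (Lemma~\ref{Lem:wPsi}) tracks each $w_i$ separately to show that every non-active coordinate has accumulated at least $(\psi^{(T_s)}-\psi^{(T_1)})/A_{\max}$. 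Without that coordinatewise lower bound you cannot translate ``$\psi$ is large'' into ``the next phase is long,'' which is the mechanism that makes $\psi$ grow sublinearly. To close your Lyapunov argument you would have to rebuild essentially the sync-split phase structure (Lemma~\ref{Lem:SyncSplit}) and the per-coordinate weight evolution (Lemma~\ref{lem:mass-cons}) anyway; the second-moment wrapper does not bypass them.
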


We provide a proof of Theorem~\ref{Thm:Upper} in Section~\ref{Sec:UpperProof}, relying on three main properties.
First, we show that under \Cref{as:no_tie} the FP dynamic in the diagonal case alternates between two phases, which we call \emph{sync} and \emph{split} phases. We use the term \emph{sync-split pair} to denote a pair of consecutive phases consisting of a sync phase followed by a split phase. 
We note that for this special structure to hold, we crucially require the lexicographic tie-breaking condition.
Second, we show that the duality gap (on the unscaled iterate) can only increase by a constant over the course of a sync-split pair. 
Finally, we show that the duration of each sync-split pair is at least the value of the duality gap at the start of the sync-split pair.
From these properties, we can derive the rate. 

To get some intuition, we can consider the case when 1) the duality gap always increases by a constant $\epsilon > 0$ during each sync-split pair and 2) the duality gap at the start of the first sync-split pair is 0. Then the duality gap at the start of the $s^{\text{th}}$ sync-split pair is $(s-1)\epsilon$.  
Meanwhile, the number of rounds required to complete these $s$ sync-split pairs is $\sum_{r=1}^{s-1} (r-1)\epsilon = \Theta(s^2 \epsilon)$ because the duration of each sync-split pair is at least the duality gap at the start of the pair. We can see that the duality gap grows as the square root of the number of rounds.

This result proves Karlin's conjecture under the tie-breaking \Cref{as:no_tie}.
Previously, the FP dynamic was known to have $O(t^{-1/2})$ convergence rate only in the case $n=2$ (in which case the lower bound from~\cite{daskalakis2014counter} is also $\Omega(t^{-1/2})$).
Therefore, our result greatly expands the class of games for which the FP dynamic has been shown to converge quickly to equilibrium. 
We note that the upper bound above is independent of the dimension $n$. 

This result confirms that Fictitious Play has a fast $O(t^{-1/2})$ convergence in the diagonal case, achieving a rate similar to no-regret algorithms from online learning despite the fact that FP is not a no-regret algorithm. This result also raises further questions and potential generalizations, which we describe in Section~\ref{Sec:Disc}.

\subsection{Matching lower bound in the identity case}

Our second result is to show a lower bound in the identity case, namely when $A = I_n$
and Fictitious Play is initialized at the vertices of the simplex.

\begin{restatable}{theorem}{ThmLower}\label{Thm:Lower}
	Assume $A = I_n$
	and $(x^{(1)},y^{(1)}) = (e_i,e_j)$ for some $i,j \in [n]$.
	Under~\Cref{as:no_tie}, the FP dynamic~\eqref{Eq:FP} satisfies for all $t \ge 1$:
	\begin{align*}
	\psi(\hat x^{(t)}, \hat y^{(t)}) = \Omega\left(\frac{1}{n\sqrt{t}}\right).
	\end{align*}
\end{restatable}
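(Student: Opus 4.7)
My plan is to mirror the upper bound proof of Theorem~\ref{Thm:Upper} with its key inequalities reversed. That argument decomposes the FP trajectory under \Cref{as:no_tie} into sync-split pairs, and establishes (i) a lower bound on each pair's duration in terms of the current duality gap, and (ii) an upper bound on the per-pair growth of $\psi(x^{(t)}, y^{(t)})$. For a matching lower bound, I will establish the reverse bounds: (i$'$) an upper bound $d_r \leq C\, n\, \psi_r$ on the duration of the $r$-th sync-split pair in terms of $\psi_r := \psi(x^{(t_r)}, y^{(t_r)})$, and (ii$'$) a lower bound $\psi_{r+1} \geq \psi_r + c$ on the per-pair increment, for some positive constants $C, c$.

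\textbf{Key steps.} First I reduce to the canonical start $(x^{(1)}, y^{(1)}) = (e_1, e_1)$: for any other vertex-pair initialization, the dynamic reaches a state equivalent to this one (up to coordinate relabeling) within $O(n)$ initial rounds, which does not affect the asymptotic rate. Within each sync-split pair, the column player persistently plays some coordinate $j_\star = \arg\max x$ while the row player plays $\arg\min y$ with lex tie-breaking. The split phase must terminate once the row player's coordinate overtakes $j_\star$ in $x$; since the $x$-gap between the competing coordinates is at most $\psi_r$, each persistent row-coordinate contributes at most $O(\psi_r)$ rounds, and lex cycling forces the row player through at most $n$ distinct coordinates per pair, giving $d_r = O(n \psi_r)$. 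The increment bound (ii$'$) follows from tracking how the terminating sync round of each pair bumps $\max x$ by one while $\min y$ fails to grow by the same amount. Combining these with monotonicity of $\psi$ (Section~\ref{Sec:PsiIncreasing}), after $s$ sync-split pairs one has $\psi_s \geq c s$ and $t_s \leq C n \sum_{r \leq s} \psi_r = O(n s^2)$, hence $\psi(x^{(t)}, y^{(t)}) = \Omega(\sqrt{t/n})$ for all $t$, which yields
$$\psi(\hat x^{(t)}, \hat y^{(t)}) \;=\; \frac{\psi(x^{(t)}, y^{(t)})}{t} \;=\; \Omega\!\left(\frac{1}{\sqrt{n t}}\right) \;\geq\; \Omega\!\left(\frac{1}{n \sqrt{t}}\right).$$

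\textbf{Main obstacle.} The hardest step is (i$'$), the duration upper bound, since the upper bound proof only needs a lower bound on $d_r$, and so establishing the matching upper bound requires a fresh argument. The most delicate case arises when several coordinates of $y$ are tied at the minimum, so that lex tie-breaking forces the row player to cycle through them in a specific order before the split phase can terminate. I expect closing this gap will require either a detailed structural lemma describing how lex tie-breaking resolves ties within a sync-split pair, or a potential function that directly upper-bounds $d_r$ in terms of $\psi_r$; handling the long post-$n^2$ regime (where the clean early-stage pattern $d_r = 2r+1$ breaks down and the two players can get ``stuck'' on a single coordinate for many rounds) is likely where most of the care is needed.
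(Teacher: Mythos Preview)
Your plan has the right shape---pair a per-pair (or per-block) increment lower bound with a duration upper bound---but step (ii$'$) as stated is false, and this is where the $1/n$ in the theorem actually comes from, not from the duration bound.

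\textbf{The increment claim (ii$'$) fails.} In the identity case, the change in $\psi$ at each phase transition is exactly $0$ or $1$, and which one occurs is determined entirely by the lexicographic order: at a sync$(i,i)\to$split$(j,i)$ transition the increment is $0$ iff $\sigma_x(i)>\sigma_x(j)$, and at a split$(j,i)\to$sync$(j,j)$ transition it is $0$ iff $\sigma_y(i)>\sigma_y(j)$ (Lemmas~\ref{lem:large_small} and~\ref{lem:large_small_split}). So a single sync-split pair can perfectly well have total increment $0$. What \emph{is} true (Lemma~\ref{Lem:DG}) is that over any $n$ consecutive sync-split pairs the increment is at least $2$, because a permutation can decrease at most $n-1$ times in a row. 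Your sketch ``the terminating sync round bumps $\max x$ by one while $\min y$ fails to grow'' does not account for the tie case, which is precisely when the increment vanishes.

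\textbf{The duration bound (i$'$) is misattributed.} Your description of a sync-split pair---the row player cycling through up to $n$ coordinates---does not match the structure established in Lemma~\ref{Lem:SyncSplit}: in the diagonal case a sync-split$(i\to j)$ pair consists of exactly one sync$(i,i)$ phase and one split$(j,i)$ phase, so only two coordinates appear. The correct duration bound (Lemma~\ref{Lem:LengthGeneral} with $\kappa=1$) is $d_r = O(\psi_r)$ with \emph{no} factor of $n$. The $n$ in the final rate comes solely from the increment side: $\psi_{T_s}\ge 2\lfloor (s-1)/n\rfloor$ rather than $\Omega(s)$.

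\textbf{How the paper closes the loop.} With the correct increment bound $\psi_{T_s}\ge \Omega(s/n)$, you still need $T_s = O(s^2)$ to conclude. The paper does not get this from a naive sum $\sum_r d_r \le C\sum_r \psi_r$ (which would be circular without an a priori upper bound on $\psi_r$); instead it extracts it from the two-sided estimate in Lemma~\ref{Lem:DiagGeneral}, solving a quadratic inequality for $\sqrt{T_{s+1}-T_1}$ (Lemma~\ref{Lem:LengthUpper}). Your outline does not indicate how you would avoid this circularity once (ii$'$) is weakened to the block-of-$n$ version.
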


We provide a proof of Theorem~\ref{Thm:Lower} in Section~\ref{Sec:LowerProof}.
Our proof relies on three main properties.
First we note that since the duality gap is integral for $A=I_n$; this means that if the duality gap increases on some round then it must increase by at least $1$.
Second, we show that within any sequence of $n$ sync-split pairs, the duality gap must increase at least once.
Third, we show that the length of a sync-split pair is proportional to the duality gap. These properties imply the desired lower bound.

We note this lower bound has the same dependence on $t$ as the upper bound in Theorem~\ref{Thm:Upper}.
Thus, in the identity case, the Fictitious Play dynamic indeed has $\psi(\hat x^{(t)}, \hat y^{(t)}) = \Theta(t^{-1/2})$.
We also note this identity case is the same setting used by Daskalakis and Pan~\cite{daskalakis2014counter}, who show that under the arbitrary tie-breaking condition, the Fictitious Play dynamic slows down to $\psi(\hat x^{(t)}, \hat y^{(t)}) = \Omega(t^{-1/n})$.

While analogous lower bounds exist for the case $n=2$~\cite{daskalakis2014counter},
to our knowledge, this result is the first lower bound for general dimension $n$.
However, we note the lower bound has a dependence on $n$, which is likely suboptimal.
We leave improving that dependence to future work.

\section{Analysis of Fictitious Play}
\label{Sec:Diagonal}

In this section we present the analysis of Fictitious Play in the diagonal case.
We begin by defining the lexicographic tie-breaking condition in Section~\ref{Sec:TieBreak}.
In Section~\ref{Sec:SyncSplit} we describe the structure of the Fictitious Play dynamic as a sequence of sync and split phases.
We provide the arguments for the upper bound in Section~\ref{Sec:UpperProof}
and for the lower bound in Section~\ref{Sec:LowerProof}.
We defer proofs and further details to Sections~\ref{Sec:DetailsUpper} and~\ref{Sec:DetailsLower}.

\subsection{Tie-breaking condition}
\label{Sec:TieBreak}

For the Fictitious Play algorithm to be well-defined, we need to specify how to choose $i^{(t)}$, $j^{(t)}$ when the $\arg\min$ or $\arg\max$ in~\eqref{Eq:FP} is not unique.
We assume we break ties following a {\em lexicographic} tie-breaking order, which is defined via two arbitrary but fixed permutations, $\sigma_x, \sigma_y \colon [n] \to [n]$.
Concretely, at time $t \ge 1$ we define the set of (possibly non-unique) best-responses as
\begin{align*}
\I^{(t)} &:= \arg\min_{i \in [n]} \, e_i^\top Ay^{(t)}  \\
&\,\,= \textstyle \left\{i \in [n] \colon \, e_i^\top Ay^{(t)} = \min_{h \in [n]} e_h^\top A y^{(t)} \right\}, \\
\J^{(t)} &:= \arg\max_{j \in [n]} \, (x^{(t)})^\top Ae_j \\
&\,\,= \textstyle \left\{j \in [n] \colon \, (x^{(t)})^\top Ae_j = \max_{h \in [n]} (x^{(t)})^\top Ae_h \right\}.
\end{align*}
We may now precisely specify the Fictitious Play dynamic~\eqref{Eq:FP}, under~\Cref{as:no_tie}, as
\begin{eqnarray*}
	i^{(t)} := \arg\min_{i \in \I^{(t)}} \sigma_x(i) 
	& \text{ and } &
	j^{(t)} := \arg\min_{j \in \J^{(t)}} \sigma_y(j).
\end{eqnarray*}
These $\arg\min$'s are guaranteed to be unique since $\sigma_x, \sigma_y$ are permutations.

This lexicographic tie-breaking condition is stronger than the arbitrary (adversarial) tie-breaking condition in Daskalakis and Pan~\cite{daskalakis2014counter}, which corresponds to allowing the permutations $\sigma_x, \sigma_y$ to change with time $t$.
Therefore, this lexicographic tie-breaking condition imposes more structure on the Fictitious Play dynamic, which allows us to escape the lower bound construction of~\cite{daskalakis2014counter}.
Indeed, under Assumption~\ref{as:no_tie}, we can characterize the FP dynamic in the diagonal case as an alternating sequence of phases, which allows us to prove Karlin's conjecture.

\subsection{Sync-split pairs}
\label{Sec:SyncSplit}

A key property that the lexicographic tie breaking implies is that the Fictitious Play dynamic goes through a sequence of sync and split phases.
Let us define some terms to help our discussion.

\begin{Definition}[Sync and split rounds]\label{def:round}
At round $t \ge 1$, suppose $x$ plays action $i^{(t)} = i$ and $y$ plays action $j^{(t)} = j$.
Then we say the {\em type} of round $t$ is $(i,j)$. Moreover:

\begin{enumerate}
\item If $i = j$, then we say round $t$ is a {\em sync round} (in particular, a {\em sync$(i,i)$ round}).
\item If $i \neq j$, then we say round $t$ is a {\em split round} (in particular, a {\em split$(i,j)$ round}).
\end{enumerate}

\end{Definition}

\noindent Along the Fictitious Play dynamic, the duality gap only increases when the type changes. It is for this reason that we partition the rounds into phases of the same type.

\begin{Definition}[Phase]\label{def:phase}
A {\em phase} is a maximal block of consecutive rounds of the same type.
That is, rounds $\{t,\dots,t+s-1\}$ form a phase if
\begin{itemize}
	\item{they all have the same type $(i,j)$;}
	\item{round $t+s$ is not type $(i,j)$; and }
	\item{round $t-1$ (if $t \ge 2$) is not type $(i,j)$.}
\end{itemize}
We also define the following:

\begin{enumerate}
\item If $i = j$, then we say this phase is a {\em sync phase} (in particular, a {\em sync$(i,i)$ phase}).
\item If $i \neq j$, then we say this phase is a {\em split phase} (in particular, a {\em split$(i,j)$ phase}).
\end{enumerate}
\end{Definition}

\begin{figure}[h!]
	\centering
	\includegraphics[width=.48\textwidth]{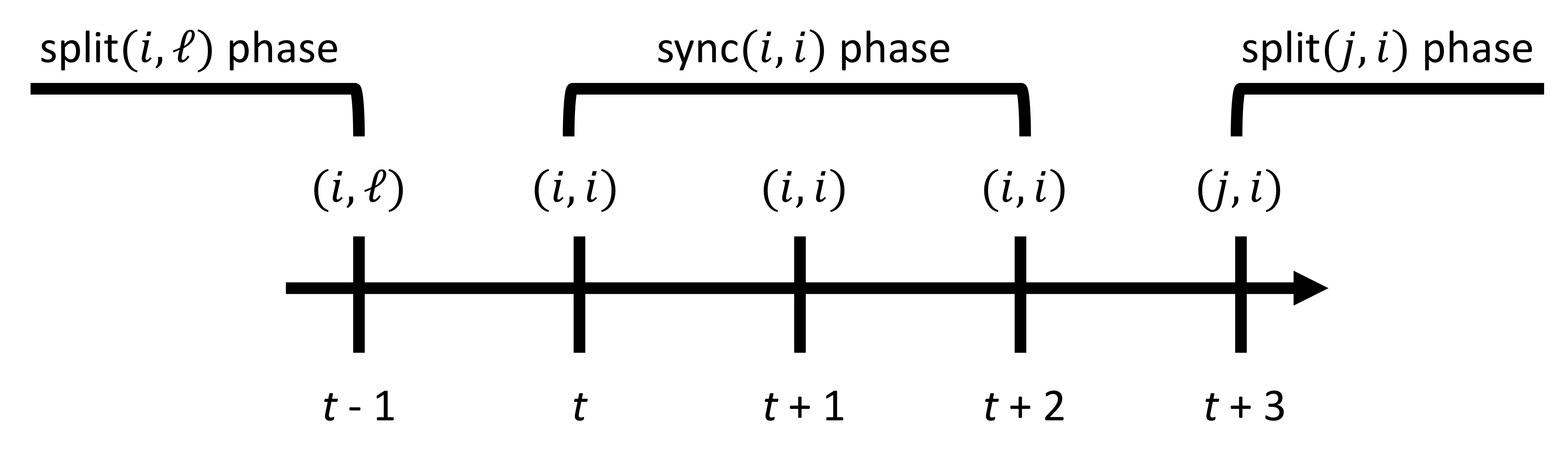}
	\caption{Illustration of Definitions~\ref{def:round} and~\ref{def:phase}. Each tick on the line represents a round. The round number is below the tick and the round type is above the tick (e.g.\ round $t-1$ is an $(i,\ell)$ round). Rounds $\{t, \, t+1, \, t+2\}$ form a sync$(i, i)$ phase.\label{fig:phase}}
\end{figure}

A nice property is that the Fictitious Play dynamic alternates through sync and split phases, as we show in \Cref{Lem:SyncSplit}. This motivates the following definition:

\begin{Definition}[Sync-split pair]
A {\em sync-split pair} (in particular, a {\em sync-split$(i \to j)$ pair}) is a pair of phases consisting of a sync$(i,i)$ phase followed by a split$(j,i)$ phase for some $j \neq i$.
\end{Definition}

\begin{restatable}{lemma}{LemSyncSplit}\label{Lem:SyncSplit}
Assume $A$ is a diagonal matrix.
Under~\Cref{as:no_tie}, the FP dynamic~\eqref{Eq:FP} proceeds through a sequence of sync and split phases:
\begin{enumerate}
  \item A sync$(i,i)$ phase is followed by a split$(j,i)$ phase for some $j \neq i$.
  \item A split$(j,i)$ phase is followed by a sync$(j,j)$ phase.
\end{enumerate}
Therefore, every sync-split$(i,j)$ pair is followed by a sync-split$(j,k)$ pair for some $k \neq j$.
\end{restatable}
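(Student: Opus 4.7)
The plan is to exploit the coordinate-wise structure of the diagonal case: during any phase, exactly two coordinates evolve (one in $x$, one in $y$) while all others are frozen. This lets me track the argmin and argmax sets explicitly throughout each phase, and then decide via the lexicographic rule what the type of the next round must be.

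For Part 1, during a sync$(i,i)$ phase only $x_i$ and $y_i$ change, so among the entries of $Ax$ only $A_{ii}x_i$ grows. Since $i$ was the lex argmax of $A_{kk}x_k$ at the start of the phase, after a single increment it is the unique argmax and remains so, meaning the column player keeps playing $i$. Meanwhile $A_{ii}y_i$ grows without bound while every $A_{kk}y_k$ for $k \neq i$ is frozen, so in finitely many rounds $i$ falls out of the argmin for the row player, ending the phase. When the row player switches to some $j \neq i$, the column player is still playing $i$, so the next round has type $(j,i)$ and starts a split$(j,i)$ phase.

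For Part 2 the argument is symmetric but needs one extra step. During split$(j,i)$ only $A_{ii}y_i$ changes among the entries of $Ay$, and it grows strictly; since $j$ was the lex argmin of $A_{kk}y_k$ at the start of the phase, $j$ remains the lex argmin, so the row player keeps playing $j$. Meanwhile $A_{jj}x_j$ grows while all other $A_{kk}x_k$ are fixed; eventually $A_{jj}x_j$ exceeds $A_{ii}x_i$, which was the maximum at the start of the phase because $y$ was playing $i$, forcing $y$ to switch. The extra step is to show the switch is to $j$ rather than to some $\ell \notin \{i,j\}$: any such $\ell$ has $A_{\ell\ell}x_\ell$ frozen throughout the phase at a value at most $A_{ii}x_i$ and not lex-preferred to $i$, so $\ell$ cannot enter the argmax once $A_{jj}x_j$ strictly passes $A_{ii}x_i$. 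Since the row player is still playing $j$, the transition round has type $(j,j)$ and begins a sync$(j,j)$ phase. The chaining statement about sync-split pairs is then immediate by composing Parts 1 and 2.

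The hard part, and the reason the lemma would fail under adversarial tie-breaking, is disciplined handling of ties. The lex rule must be invoked at three distinct moments: at the start of each phase, where the just-incremented coordinate breaks pre-existing ties in a predictable direction so that the chosen action becomes the unique argmin or argmax from the following round onward; throughout the phase, where the monotone coordinate keeps the lex winner intact; and at the switching round, where lex is used to rule out that the switching player picks an action other than the natural partner. Organizing these three tie analyses into a single clean case split is the technical heart of the proof.
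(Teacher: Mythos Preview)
Your approach is correct and essentially the same as the paper's: both exploit the diagonal structure to see that only one coordinate of $Ay$ and one of $Ax$ move per round, then use the lexicographic rule to argue that the argmin/argmax winner either stays put or moves to the expected partner ($j$ in the split case, via the observation that any frozen $\ell$ tied with $i$ is lex-dominated by $i$ and hence, by transitivity, by $j$ once $j$ enters). You additionally supply the termination argument for each phase (that $A_{ii}y_i$ or $A_{jj}x_j$ eventually overtakes), which the paper leaves implicit; your phrasing of the ``not lex-preferred to $i$'' claim in Part~2 is slightly loose---it only holds for $\ell$ already tied with $i$, not all $\ell$---but the case split (strictly below versus tied) makes the conclusion go through exactly as in the paper.
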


The above lemma establishes the key property that allows us to analyze the evolution of the duality gap along the FP dynamic.
We note that this regularity property is missing under the arbitrary tie-breaking condition of Daskalakis and Pan~\cite{daskalakis2014counter}, which enables them to construct a counterexample.
We also note this property 
does not always hold for general $A$, which presents an obstacle to generalizing our result.

Over the course of a sync-split pair, the duality gap increases by at most a constant, as we show in Section~\ref{Sec:PsiSyncSplit}.
Thus, if we can control the length of each sync-split pair, then we can bound the growth of the duality gap.

\subsection{Arguments for upper bound in diagonal case}
\label{Sec:UpperProof}

We now present the key arguments for the fast convergence of Fictitious Play in the diagonal case. In this section, we assume $A$ is a diagonal matrix with $0 < A_{ii} \le A_{\max}$ for all $i \in [n]$, and we let $(x^{(1)}, y^{(1)}) \in \Delta_n \times \Delta_n$ be arbitrary.

Our approach is to invent a potential function (the weight vector) which stays close to the duality gap and can be easily tracked.

\begin{Definition}[Weight vector]
At each time $t \ge 1$, we define the {\em weight vector} $w^{(t)} \in \R^n$ by
\begin{align}\label{Eq:w}
w^{(t)}_i = \frac{\psi(x^{(t)}, y^{(t)})}{A_{ii}} + y^{(t)}_i - x^{(t)}_i ~~ \text{ for }~ i \in [n].
\end{align}
\end{Definition}

By construction, the entries of the weight vector are nonnegative: $w^{(t)}_i \ge 0$ for all $t \ge 1$, $i \in [n]$. Furthermore, 
we can interpret each entry of the weight vector as the sum of the regrets of the $x$ and $y$ players for that action; see Section~\ref{Sec:DetailsDef}.

By Lemma~\ref{Lem:SyncSplit}, we know the FP dynamic proceeds through a sequence of sync-split pairs.
Let $T_1, T_2, \dots$ denote the starting times of the sync-split pairs.
For $s \ge 1$, let the $s^{\text{th}}$ pair be a sync-split$(i_s \to i_{s+1})$ pair (for some $i_{s+1} \neq i_s$) which starts at time $T_s$. Then we can show that the entries of the weight vector are bounded below by the duality gap at the beginning of each sync-split pair.

\begin{lemma}\label{Lem:Increase} 
   In the diagonal case, for all $s \ge 1$:
   \begin{align*}
   w^{(T_s)}_i \ge  \frac{\psi(x^{(T_s)},y^{(T_s)}) - \psi(x^{(T_1)},y^{(T_1)})}{A_{\max}} ~ \text{ for all } i \neq i_s.
   \end{align*}
\end{lemma}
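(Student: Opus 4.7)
The plan is to prove the bound by induction on $s$, with base case $s=1$ being immediate since $w \ge 0$ entrywise while $\psi_1 - \psi_1 = 0$. For the inductive step, I would explicitly track how each coordinate of the weight vector evolves across the $s$-th sync-split pair and then split into cases based on whether the index $i$ equals $i_s$ or not.

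First I would establish two evolution rules. From $w^{(t+1)}_i - w^{(t)}_i = (\psi(t+1)-\psi(t))/A_{ii} + \mathbf{1}[j^{(t)}=i] - \mathbf{1}[i^{(t)}=i]$, together with the fact that within a single phase the max and min defining $\psi$ track linearly so $\psi$ stays constant, one obtains: within a sync phase nothing changes; within a split$(i_{s+1},i_s)$ phase only $w_{i_s}$ grows by $+1$ per round and $w_{i_{s+1}}$ shrinks by $-1$ per round; and at each of the two phase transitions in pair $s$, every $w_i$ gains an extra $\Delta\psi/A_{ii}$ for some $\Delta\psi \ge 0$ summing to $\Psi_s := \psi_{s+1}-\psi_s$. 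Summing these contributions gives, for $i \neq i_{s+1}$,
\begin{align*}
w^{(T_{s+1})}_i &= w^{(T_s)}_i + \Psi_s/A_{ii} \quad \text{if } i \neq i_s,\\
w^{(T_{s+1})}_{i_s} &= w^{(T_s)}_{i_s} + \Psi_s/A_{i_s i_s} + L_s,
\end{align*}
where $L_s$ is the length of the split$(i_{s+1},i_s)$ phase. For $i \notin \{i_s, i_{s+1}\}$, combining $A_{ii}\le A_{\max}$ with the induction hypothesis at step $s$ immediately yields $w^{(T_{s+1})}_i \ge (\psi_s-\psi_1)/A_{\max} + \Psi_s/A_{\max} = (\psi_{s+1}-\psi_1)/A_{\max}$.

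The main obstacle is the case $i = i_s$: the induction hypothesis at step $s$ says nothing about $w^{(T_s)}_{i_s}$, so only $w^{(T_s)}_{i_s} \ge 0$ is available, which loses the needed $(\psi_s-\psi_1)/A_{\max}$ contribution. The trick is that this loss is exactly paid for by the split-phase length $L_s$. To show this, I would compute $w^{(T_s+a_s)}_{i_{s+1}}$ at the start of the split phase and verify that it simplifies (using $\psi(T_s+a_s) = A_{i_s i_s} x_{i_s} - A_{i_{s+1} i_{s+1}} y_{i_{s+1}}$) to the normalized excess $(A_{i_s i_s} x_{i_s} - A_{i_{s+1} i_{s+1}} x_{i_{s+1}})/A_{i_{s+1} i_{s+1}}$ that coordinate $i_{s+1}$ must close before overtaking $i_s$ as the column player's argmax. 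Since this excess shrinks by $A_{i_{s+1} i_{s+1}}$ per split round, we get $L_s \ge w^{(T_s+a_s)}_{i_{s+1}} \ge w^{(T_s)}_{i_{s+1}}$, where the second inequality uses that $w_{i_{s+1}}$ is unchanged during the preceding sync phase and only grows at the sync-to-split transition. Applying the induction hypothesis to the coordinate $i_{s+1} \neq i_s$ then gives $L_s \ge (\psi_s - \psi_1)/A_{\max}$, and substituting into the expression for $w^{(T_{s+1})}_{i_s}$ produces the desired $(\psi_{s+1}-\psi_1)/A_{\max}$, closing the induction.
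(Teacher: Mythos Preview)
Your proof is correct and follows essentially the same inductive scheme as the paper: induct on $s$, handle $i\notin\{i_s,i_{s+1}\}$ directly from the evolution of $w$, and for the delicate case $i=i_s$ reduce to the inductive hypothesis at coordinate $i_{s+1}$. The difference is in how that reduction is made. The paper (via its Lemma~\ref{lem:mass-cons}/Corollary~\ref{cor:w_increase}) establishes the clean ``swap'' identity
\[
w^{(T_{s+1})}_{i_s} \;=\; w^{(T_s)}_{i_{s+1}} + \bigl(A_{i_s i_s}^{-1}+A_{i_{s+1} i_{s+1}}^{-1}\bigr)\Psi_s,
\]
so the $i_s$ case follows in one line from the hypothesis at $i_{s+1}$. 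You instead write $w^{(T_{s+1})}_{i_s}=w^{(T_s)}_{i_s}+\Psi_s/A_{i_s i_s}+L_s$ and then prove $L_s\ge w^{(T_s)}_{i_{s+1}}$; since in fact $L_s = w^{(T_s)}_{i_{s+1}} + \Psi_s/A_{i_{s+1} i_{s+1}}$ (as follows from $w^{(T_{s+1})}_{i_{s+1}}=0$), your two displayed formulas are algebraically equivalent to the paper's swap identity. What your detour through $L_s$ buys is that you have essentially proved Lemma~\ref{Lem:Length} along the way, whereas the paper treats that separately; what the paper's formulation buys is a slightly more transparent one-step reduction without introducing the phase length as an auxiliary quantity.
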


We also show the length of a sync-split pair is bounded below by an entry of the weight vector.

\begin{lemma}\label{Lem:Length}
  In the diagonal case, for all $s \ge 1$:
  \begin{align*}
  T_{s+1}-T_s \ge w^{(T_s)}_{i_{s+1}}.
  \end{align*}
\end{lemma}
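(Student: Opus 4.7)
The plan is to express both $T_{s+1} - T_s$ and $w^{(T_s)}_{i_{s+1}}$ directly in terms of a few coordinates of $x^{(T_s)}$ and $y^{(T_s)}$, and then combine the two best-response inequalities that govern the sync-to-split and split-to-sync transitions within the pair.

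First I would simplify $w^{(T_s)}_{i_{s+1}}$ using the fact that at round $T_s$ the sync$(i_s,i_s)$ phase begins, so both players choose $i_s$ as their (lex-tie-broken) best response. In the diagonal case this forces $A_{i_s i_s} x^{(T_s)}_{i_s} = \max_j A_{jj} x^{(T_s)}_j$ and $A_{i_s i_s} y^{(T_s)}_{i_s} = \min_i A_{ii} y^{(T_s)}_i$, so the duality gap collapses to $\psi(x^{(T_s)}, y^{(T_s)}) = A_{i_s i_s}\left(x^{(T_s)}_{i_s} - y^{(T_s)}_{i_s}\right)$. Writing $a = A_{i_s i_s}$ and $b = A_{i_{s+1} i_{s+1}}$, this yields $w^{(T_s)}_{i_{s+1}} = \frac{a(x^{(T_s)}_{i_s} - y^{(T_s)}_{i_s})}{b} + y^{(T_s)}_{i_{s+1}} - x^{(T_s)}_{i_{s+1}}$.

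Next I would let $L_1$ and $L_2$ be the lengths of the sync and split phases, so $T_{s+1} - T_s = L_1 + L_2$, and track which coordinates advance during the pair. The sync phase increments only $x_{i_s}$ and $y_{i_s}$; the split$(i_{s+1}, i_s)$ phase increments only $x_{i_{s+1}}$ and $y_{i_s}$. At the sync-to-split transition at round $T_s + L_1$ the row player switches to $i_{s+1}$, so the inequality $A_{i_{s+1} i_{s+1}} y^{(T_s+L_1)}_{i_{s+1}} \leq A_{i_s i_s} y^{(T_s+L_1)}_{i_s}$ (valid whether the switch is strict or resolved by lex preference of $i_{s+1}$) simplifies, using $y^{(T_s+L_1)}_{i_{s+1}} = y^{(T_s)}_{i_{s+1}}$ and $y^{(T_s+L_1)}_{i_s} = y^{(T_s)}_{i_s} + L_1$, to $a L_1 \geq b y^{(T_s)}_{i_{s+1}} - a y^{(T_s)}_{i_s}$. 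Analogously, at the split-to-sync transition at round $T_{s+1}$ the column player switches to $i_{s+1}$, producing $b L_2 \geq a(x^{(T_s)}_{i_s} + L_1) - b x^{(T_s)}_{i_{s+1}}$.

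The final step is algebraic: adding $L_1$ to the split inequality gives $L_1 + L_2 \geq \frac{L_1(a+b) + a x^{(T_s)}_{i_s} - b x^{(T_s)}_{i_{s+1}}}{b}$, and after cancellation the target bound $L_1 + L_2 \geq w^{(T_s)}_{i_{s+1}}$ reduces to $L_1(a+b) \geq b y^{(T_s)}_{i_{s+1}} - a y^{(T_s)}_{i_s}$, which follows immediately from the sync inequality $a L_1 \geq b y^{(T_s)}_{i_{s+1}} - a y^{(T_s)}_{i_s}$ together with $b L_1 \geq 0$. The main subtlety I anticipate is not the algebra but the bookkeeping around lex tie-breaking at the two transitions: each switch can occur either through a strict inequality or through an equality resolved by a lex preference, and I will need both cases to yield the same non-strict transition inequality. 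A supporting point is that Lemma~\ref{Lem:SyncSplit} guarantees the new argmins are exactly $i_{s+1}$ (not some third action), so the binding comparison at each transition is the one between $i_{s+1}$ and $i_s$ used above.
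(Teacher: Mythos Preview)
Your proof is correct. The two transition inequalities you extract---$aL_1 \ge b\,y^{(T_s)}_{i_{s+1}} - a\,y^{(T_s)}_{i_s}$ from the sync-to-split switch and $bL_2 \ge a(x^{(T_s)}_{i_s}+L_1) - b\,x^{(T_s)}_{i_{s+1}}$ from the split-to-sync switch---are exactly the inequalities $\epsilon_1 \ge 0$ and $\epsilon_2 \ge 0$ hidden inside Lemmas~\ref{lem:sync} and~\ref{lem:split}, just rewritten in the raw $(x,y)$ coordinates instead of the gap vectors $u,v$. So the substance of the argument is the same as the paper's proof via Lemma~\ref{Lem:LengthGeneral}; only the packaging differs.

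One minor comparison worth noting: the paper's chain is slightly tighter than yours. In the gap-vector language it reads $L_2 \ge w^{(T_s+L_1)}_{i_{s+1}} \ge w^{(T_s)}_{i_{s+1}}$, so the split phase \emph{alone} already dominates $w^{(T_s)}_{i_{s+1}}$, and the sync length $L_1$ is never needed on the left. Your algebra instead carries $L_1$ through to the end and uses $bL_1 \ge 0$ only at the last step; this works, but it obscures that the bound is really a bound on $L_2$. On the other hand, your direct-coordinate derivation avoids having to set up the $u,v,w$ machinery, which is a reasonable trade if this lemma were being proved in isolation.
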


Lemma~\ref{Lem:Increase} and Lemma~\ref{Lem:Length} give a recurrence between $T_s$ and $\psi$, which implies the following result.

\begin{lemma}\label{Lem:Diag}
In the diagonal case, for all $s \ge 1$:
\begin{align*}
\psi(x^{(T_s)},y^{(T_s)}) \le \psi(x^{(T_1)},y^{(T_1)}) + 3 A_{\max} \sqrt{T_{s+1} - T_1}.
\end{align*}
\end{lemma}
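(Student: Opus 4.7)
The plan is to combine Lemmas~\ref{Lem:Increase} and~\ref{Lem:Length} into a pointwise recurrence that lower bounds the length of each sync-split pair by the accumulated increase of the duality gap, and then to sum this recurrence over all $s$ pairs.

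First, I would set $\Psi_r := \psi(x^{(T_r)}, y^{(T_r)}) - \psi(x^{(T_1)}, y^{(T_1)})$ for $r \ge 1$. Since $\psi(x^{(t)}, y^{(t)})$ is non-decreasing along the FP dynamic (as noted in the preliminaries), the sequence $\Psi_r$ is non-decreasing with $\Psi_1 = 0$. By Lemma~\ref{Lem:SyncSplit}, we have $i_{r+1} \ne i_r$, so $i = i_{r+1}$ is a valid choice of index in Lemma~\ref{Lem:Increase}. Chaining Lemma~\ref{Lem:Length} with Lemma~\ref{Lem:Increase} applied at index $i_{r+1}$ then gives
$$T_{r+1} - T_r \;\ge\; w^{(T_r)}_{i_{r+1}} \;\ge\; \frac{\Psi_r}{A_{\max}}.$$
Telescoping this inequality over $r = 1, \dots, s$ yields
$$T_{s+1} - T_1 \;\ge\; \frac{1}{A_{\max}} \sum_{r=1}^{s} \Psi_r.$$

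To convert a bound on $\sum_r \Psi_r$ into a bound on $\Psi_s$ itself, I would invoke the auxiliary claim from Section~\ref{Sec:PsiSyncSplit} that the duality gap can grow by at most $C\,A_{\max}$ during a single sync-split pair, for some explicit absolute constant $C$. Iterating gives $\Psi_r \ge \Psi_s - C A_{\max}(s - r)$, so the last $\lceil \Psi_s / (C A_{\max}) \rceil$ terms of the sum dominate an arithmetic progression that starts near $\Psi_s$ and decreases by steps of $C A_{\max}$. A standard triangular-number estimate then gives $\sum_{r=1}^{s} \Psi_r \ge \Psi_s^2 / (2 C A_{\max})$. Substituting into the telescoped bound, $\Psi_s^2 \le 2 C A_{\max}^2 (T_{s+1} - T_1)$, i.e., $\Psi_s \le A_{\max} \sqrt{2 C \,(T_{s+1} - T_1)}$, and the explicit constant from Section~\ref{Sec:PsiSyncSplit} produces the factor $3$ claimed in the lemma.

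The main obstacle is the per-pair growth bound: without a uniform $O(A_{\max})$ bound on $\Psi_{r+1} - \Psi_r$, a single large late jump in $\Psi$ could break the quadratic relation and hence the $\sqrt{t}$ rate, since $\sum_{r=1}^s \Psi_r$ could be dominated by the final term alone. All three ingredients---Lemma~\ref{Lem:Increase}, Lemma~\ref{Lem:Length}, and the per-sync-split-pair increase bound---are genuinely needed, and this is the step where the diagonal structure (together with lexicographic tie-breaking) is used in its sharpest form via the phase alternation of Lemma~\ref{Lem:SyncSplit}. Handling the boundary case $\Psi_s < C A_{\max}$ in the triangular-number estimate is a minor technicality that can be absorbed into the constant, and the loose factor $3$ (rather than the sharp $\sqrt{2C}$) provides the slack needed.
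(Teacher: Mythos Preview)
Your proposal is correct and follows essentially the same route as the paper: chain Lemmas~\ref{Lem:Increase} and~\ref{Lem:Length} to get $T_{r+1}-T_r \ge \Psi_r/A_{\max}$, telescope, and then use the per-pair increment bound $\epsilon_r \le 2A_{\max}$ from Section~\ref{Sec:PsiSyncSplit} (Lemma~\ref{lem:mass-cons}) together with a triangular-number estimate to turn $\sum_r \Psi_r$ into a quadratic lower bound in $\Psi_s$. The paper packages that last step as a separate helper lemma (Lemma~\ref{Lem:Sum}) applied to the increments $\epsilon_r$ rather than to the partial sums $\Psi_r$, but the two formulations are equivalent and yield the same constant (the paper's $\sqrt{8}$, rounded up to $3$).
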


Lemma~\ref{Lem:Diag} is essentially our main theorem since it already gives the correct dependence of the duality gap on time, but only at the beginning of each sync-split pair.
We can easily bound the beginning and end times to deduce the dependence holds for all time as claimed in Theorem~\ref{Thm:Upper}. We give the full details in Section~\ref{Sec:ThmUpperProof}.

\subsection{Arguments for lower bound in identity case}
\label{Sec:LowerProof}

We now present the arguments for the lower bound in the identity case.

In this section, we assume $A = I_n$ and $(x^{(1)},y^{(1)}) = (e_i,e_j)$ for some $i,j \in [n]$.
By Lemma~\ref{Lem:SyncSplit}, we know the FP dynamic goes through a sequence of sync-split pairs.
As before, let $T_1,T_2,\dots$ denote the starting times of the sync-split pairs.
Let the $s^{\text{th}}$ pair be a sync-split$(i_s \to i_{s+1})$ pair for some $i_{s+1} \neq i_s$.

By construction, the duality gap always has an integral value.
In particular, if the duality gap increases, then it must increase by at least $1$.
We show that after at most $n$ sync-split pairs, the duality gap must increase.
This property crucially uses the tie-breaking~\Cref{as:no_tie}.

\begin{restatable}{lemma}{LemDG}\label{Lem:DG}
In the identity case, for all $s \ge 1$,
\begin{align*}
\psi(x^{(T_{s+n})},y^{(T_{s+n})}) \ge \psi(x^{(T_s)},y^{(T_s)}) + 2.
\end{align*}
\end{restatable}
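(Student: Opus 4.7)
The plan is to compute the change in the duality gap per sync-split pair exactly as a function of the lex-tie-breaking permutations, and then apply a pigeonhole argument across $n$ consecutive pairs.

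Consider a single sync-split$(i \to j)$ pair starting at time $T_s$, consisting of a sync$(i,i)$ phase of length $a$ followed by a split$(j,i)$ phase of length $b$. Only three coordinates change during the pair: $x_i$ grows by $a$ (during sync), $x_j$ grows by $b$ (during split), and $y_i$ grows by $a+b$ (since $j^{(t)} = i$ throughout both phases). Introduce the nonnegative integers $\delta_x = x_i^{(T_s)} - x_j^{(T_s)}$ and $\delta_y = y_j^{(T_s)} - y_i^{(T_s)}$, together with indicators $\epsilon_x, \epsilon_y \in \{0,1\}$ for the conditions $\sigma_x(i) < \sigma_x(j)$ and $\sigma_y(i) < \sigma_y(j)$, respectively. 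Since \Cref{Lem:SyncSplit} forces the next sync to be sync$(j,j)$, one can identify $j$ as the $\sigma_x$-lex-first action attaining the second-lowest value in $y^{(T_s)}$; tracking the sync phase until $y_i$ catches up gives $a = \delta_y + \epsilon_x$, where the extra $\epsilon_x$ accounts for one additional ``tied'' sync round when $\sigma_x$ prefers $i$ over $j$. The analogous analysis of when $x_j$ catches up with $x_i^{(T_s)} + a$ gives $b = \delta_x + a + \epsilon_y$. Since at time $T_{s+1} = T_s + a + b$ the next phase is sync$(j,j)$, we have $\max x^{(T_{s+1})} = x_j^{(T_s)} + b$ and $\min y^{(T_{s+1})} = y_j^{(T_s)}$, so
\begin{align*}
\psi(x^{(T_{s+1})}, y^{(T_{s+1})}) - \psi(x^{(T_s)}, y^{(T_s)}) = b - \delta_x - \delta_y = \epsilon_x + \epsilon_y.
\end{align*}

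With this per-pair identity, consider the sync indices $i_s, i_{s+1}, \ldots, i_{s+n}$ of $n$ consecutive pairs. By \Cref{Lem:SyncSplit}, $i_r \neq i_{r+1}$ for every $r$, and since $\sigma_x$ is a permutation, $\sigma_x(i_r) \neq \sigma_x(i_{r+1})$. The sequence $\sigma_x(i_s), \sigma_x(i_{s+1}), \ldots, \sigma_x(i_{s+n})$ consists of $n+1$ values in $\{1, \ldots, n\}$ and so cannot be strictly decreasing (this would force $n+1$ distinct values in an $n$-element set). Hence at least one $r \in \{s, \ldots, s+n-1\}$ satisfies $\sigma_x(i_r) < \sigma_x(i_{r+1})$, so $\sum_{r=s}^{s+n-1} \epsilon_x^{(r)} \ge 1$, and the identical argument for $\sigma_y$ gives $\sum_r \epsilon_y^{(r)} \ge 1$. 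Summing the per-pair identity over $r = s, \ldots, s+n-1$ yields $\psi(x^{(T_{s+n})}, y^{(T_{s+n})}) - \psi(x^{(T_s)}, y^{(T_s)}) \ge 2$, as required.

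The hardest part is the bookkeeping behind $a = \delta_y + \epsilon_x$ and $b = \delta_x + a + \epsilon_y$ in the presence of auxiliary ties: several coordinates of $y^{(T_s)}$ could attain its second-lowest value, or some $x_k$ with $k \neq i, j$ could be tied with $x_i^{(T_s)} + a$ at the start of the split phase. The key structural fact from \Cref{Lem:SyncSplit}---that the very next sync phase is precisely sync$(j, j)$---pins down the lex-winner at the end of the pair to be $j$ and forces all such auxiliary ties to resolve in a direction compatible with the per-pair identity. Once this case analysis is done, the pigeonhole step above is immediate.
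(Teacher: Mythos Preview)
Your proof is correct and takes essentially the same approach as the paper. The paper decomposes each sync-split pair into its two phase transitions and, via Lemmas~\ref{lem:large_small} and~\ref{lem:large_small_split}, shows directly that the duality gap increases by exactly $\epsilon_x$ at the sync$\to$split transition and by exactly $\epsilon_y$ at the split$\to$sync transition; you instead compute the phase lengths $a$ and $b$ and read off the same per-pair identity $\epsilon_x + \epsilon_y$ from the endpoints. The pigeonhole step over the sequence $\sigma_x(i_s),\dots,\sigma_x(i_{s+n})$ (and likewise for $\sigma_y$) is identical in both arguments.
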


Next, we show that the lengths of the sync-split pairs cannot be too long, so the starting times are increasing at most quadratically.

\begin{restatable}{lemma}{LemLengthUpper}\label{Lem:LengthUpper}
In the identity case, for all $s \ge 1$,
\begin{align*}
T_{s+1} \le 196 \, s^2.
\end{align*}
\end{restatable}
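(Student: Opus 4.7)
The plan is to prove that each sync-split pair has length bounded by a constant multiple of the current duality gap $\psi(T_r)$, and then to sum over pairs using Lemma~\ref{Lem:Diag} to solve a self-referential inequality for $T_{s+1}$.

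First, I would establish the upper bound $T_{r+1} - T_r \le 2\, w^{(T_r)}_{i_{r+1}} + C_0$ on the length of the $r$-th sync-split pair, for a small absolute constant $C_0$. Tracing the sync$(i_r, i_r)$ phase, it lasts until $y^{(t)}_{i_r}$, incremented by one each round, catches up under lex tie-breaking with the second-smallest entry $y^{(T_r)}_{i_{r+1}}$; so its length satisfies $k_1 = y^{(T_r)}_{i_{r+1}} - y^{(T_r)}_{i_r} + \delta_1$ for some $\delta_1 \in \{0,1\}$. The subsequent split$(i_{r+1}, i_r)$ phase ends when $x^{(t)}_{i_{r+1}}$ catches up with $x^{(T_r)}_{i_r} + k_1$, giving $k_2 = (x^{(T_r)}_{i_r} - x^{(T_r)}_{i_{r+1}}) + k_1 + \delta_2$ with $\delta_2 \in \{0,1\}$. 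Using $w^{(T_r)}_{i_r}=0$ and $\psi(T_r) = x^{(T_r)}_{i_r} - y^{(T_r)}_{i_r}$, a direct calculation gives $w^{(T_r)}_{i_{r+1}} = (x^{(T_r)}_{i_r} - x^{(T_r)}_{i_{r+1}}) + (y^{(T_r)}_{i_{r+1}} - y^{(T_r)}_{i_r})$, and summing $k_1 + k_2$ combines the two catch-up distances into exactly $2\, w^{(T_r)}_{i_{r+1}}$ plus the constant from $\delta_1, \delta_2$.

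Next, I would bound $w^{(T_r)}_{i_{r+1}}$ by $C_1 \psi(T_r)$ with an $n$-independent constant $C_1$; the naive bound $w_h \le n\psi$ coming from $\sum_h w^{(T_r)}_h = n\psi(T_r)$ is too loose by a factor of $n$. The proof would proceed via the invariant $\max_h w^{(t)}_h \le 2\psi^{(t)}$, equivalent to $y^{(t)}_h - x^{(t)}_h \le \psi^{(t)}$ for every $h$, maintained throughout the FP dynamic in the identity case. The invariant holds at $t = 1$ (where $\psi = 1$ and $y_h - x_h \le 1$) and is trivially preserved during sync rounds, where every $w_h$ and $\psi$ are unchanged. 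In a split round the only weight that increases is $w^{(t)}_{j^{(t)}}$ where $j^{(t)} = \arg\max_h x^{(t)}_h$; here $w^{(t)}_{j^{(t)}} = y^{(t)}_{j^{(t)}} - \min_h y^{(t)}_h$ because $x^{(t)}_{j^{(t)}} = \max x^{(t)}$, and one checks that the specific $\arg\min/\arg\max$ structure under lex tie-breaking in the identity case keeps this quantity strictly below $2\psi^{(t)}$ so the $+1$ increment preserves the invariant; at phase transitions the ceiling $2\psi^{(t)}$ only jumps up, which further helps. Applying the invariant at $t = T_r, h = i_{r+1}$ yields $w^{(T_r)}_{i_{r+1}} \le 2\psi(T_r)$, hence $T_{r+1} - T_r \le 4\psi(T_r) + C_0$.

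Finally, combining with Lemma~\ref{Lem:Diag} (which gives $\psi(T_r) \le \psi(T_1) + 3\sqrt{T_{r+1}}$ in the identity case with $A_{\max} = 1$) produces $T_{r+1} - T_r \le 12\sqrt{T_{r+1}} + C_2$. Summing from $r=1$ to $s$ gives $T_{s+1} - T_1 \le 12 s \sqrt{T_{s+1}} + C_2 s$. Writing $u = \sqrt{T_{s+1}}$, this is a quadratic inequality $u^2 - 12 s u \le T_1 + C_2 s$ which solves to $u \le 12 s + O(\sqrt{s})$, so $T_{s+1} \le 144 s^2 + O(s^{3/2}) \le 196 s^2$ after absorbing lower-order terms (with a small base-case verification bounding $T_1$ by a constant, since the pre-$T_1$ preamble in the identity case is of bounded length). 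The hard part will be the invariant $\max_h w^{(t)}_h \le 2\psi^{(t)}$ at split rounds: one must use the specific structure of lex tie-breaking in the identity case to argue that the one entry of $w$ that is about to increment has enough slack below the $2\psi$ ceiling.
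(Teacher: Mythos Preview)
Your approach is essentially the same as the paper's, arriving at the same quadratic inequality $T_{s+1} \lesssim 12s\sqrt{T_{s+1}} + O(s)$ and solving it. The paper's proof is much shorter because all of your steps 1--3 are already packaged in Lemma~\ref{Lem:DiagGeneral}: its \emph{lower} bound on $\psi(T_s)-\psi(T_1)$ encodes exactly your ``$T_{r+1}-T_r \le O(\psi(T_r))$'' (via Lemma~\ref{Lem:LengthGeneral} and Lemma~\ref{Lem:wPsi}), and its \emph{upper} bound is your cited Lemma~\ref{Lem:Diag}. The paper simply combines the two sides of~\eqref{Eq:DiagGeneral} with $\kappa=1$ to get $\phi^2-12s\phi-22s\le 0$ for $\phi=\sqrt{T_{s+1}-T_1}$, then solves.

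The one genuine difference is how you obtain $w^{(T_r)}_{i_{r+1}}\le C\psi(T_r)$. The paper proves this only at the start-of-pair times $T_s$ (Lemma~\ref{Lem:wPsi}), by induction on $s$ using the exact one-step recursion of Corollary~\ref{cor:w_increase}: the ``moving'' entry satisfies $w^{(T_{s+1})}_{i_s}-2\psi(T_{s+1}) = w^{(T_s)}_{i_{s+1}}-2\psi(T_s)$ and the stationary entries only improve. Your all-$t$ invariant $\max_h w^{(t)}_h\le 2\psi^{(t)}$ is in fact true in the identity case with $(e_i,e_j)$ start, but your split-round sketch is where the work hides: during a split$(i_{s+1},i_s)$ phase $w_{i_s}$ grows by at most the phase length, which is $\le w^{(T_s)}_{i_{s+1}}+O(1)$, so the invariant at index $i_{s+1}$ feeds the invariant at index $i_s$ --- this is the same mechanism as the paper's induction, not an independent ``$\arg\min/\arg\max$ structure'' check. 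So your route is a valid reorganization but does not shortcut the paper's inductive argument.
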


Combining \Cref{Lem:DG,Lem:LengthUpper} yields the following:

\begin{restatable}{lemma}{LemApple}\label{Lem:Apple}
In the identity case, for all $s \ge n+1$ of the form $s = \ell n + 1$ for some $\ell \ge 1$, we have:
\begin{align*}
\psi(x^{(T_s)},y^{(T_s)}) \ge \frac{\sqrt{T_s}}{7n}.
\end{align*}
\end{restatable}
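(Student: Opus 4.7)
The plan is to combine Lemma~\ref{Lem:DG} and Lemma~\ref{Lem:LengthUpper} directly; no additional analysis of the FP dynamic is needed since the two lemmas already encode the two ingredients (the duality gap grows, and the starting times do not grow too fast).

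First I would telescope Lemma~\ref{Lem:DG} along the arithmetic progression $1, \, 1+n, \, 1+2n, \, \ldots, \, 1+\ell n = s$. Applying the lemma $\ell$ times yields
$$\psi(x^{(T_s)}, y^{(T_s)}) \;\ge\; \psi(x^{(T_1)}, y^{(T_1)}) + 2\ell \;\ge\; 2\ell,$$
where the last inequality just uses the nonnegativity of the duality gap on $\Delta_n \times \Delta_n$. The restriction $s \ge n+1$, i.e.\ $\ell \ge 1$, is exactly what is needed to apply Lemma~\ref{Lem:DG} at least once.

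Next I would invoke Lemma~\ref{Lem:LengthUpper} at index $s-1 = \ell n \ge 1$, which gives
$$T_s \;=\; T_{(s-1)+1} \;\le\; 196\, (s-1)^2 \;=\; 196\, \ell^2 n^2,$$
so that $\sqrt{T_s} \le 14 \ell n$, and hence $\frac{\sqrt{T_s}}{7n} \le 2\ell$.

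Chaining the two bounds gives the claim $\psi(x^{(T_s)}, y^{(T_s)}) \ge 2\ell \ge \frac{\sqrt{T_s}}{7n}$. There is essentially no obstacle: everything deep has been absorbed into the two preceding lemmas, and the only thing to verify is that the constants match, i.e.\ that $196 = (2\cdot 7)^2$ and the factor $7$ in the denominator are precisely calibrated to the factor of $2$ appearing in Lemma~\ref{Lem:DG}. The real content of this lemma lies entirely upstream, in establishing the integrality-based growth of $\psi$ every $n$ sync-split pairs and the quadratic upper bound on $T_s$.
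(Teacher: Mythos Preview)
Your proposal is correct and follows essentially the same argument as the paper's proof: iterate Lemma~\ref{Lem:DG} $\ell$ times to get $\psi(x^{(T_s)},y^{(T_s)}) \ge 2\ell$, then apply Lemma~\ref{Lem:LengthUpper} at index $s-1$ to obtain $\sqrt{T_s} \le 14(s-1) = 14\ell n$, and combine. The only cosmetic difference is that the paper writes the intermediate step as $\frac{2}{n}(s-1)$ rather than $2\ell$ before invoking Lemma~\ref{Lem:LengthUpper}.
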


This is essentially our result, and we extend it to all time in Theorem~\ref{Thm:Lower}. The full details are in Section~\ref{Sec:DetailsLower}.

\section{Proofs for upper bound in diagonal case}
\label{Sec:DetailsUpper}

In this section we assume we are in the diagonal case,
namely when $A$ is a diagonal matrix with entries $A_{ii} > 0$.
Let $A_{\min} = \min_{i \in [n]} A_{ii}$ and $A_{\max} = \max_{i \in [n]} A_{ii}$. We let $(x^{(1)}, y^{(1)}) \in \Delta_n \times \Delta_n$ be arbitrary, and we let $(x^{(t)}, y^{(t)}) \in \R^n \times \R^n$ be the iterates of the Fictitious Play dynamic~\eqref{Eq:FP} under the tie-breaking~\Cref{as:no_tie}.

We start by introducing some helpful new notation in \Cref{Sec:DetailsDef}. We then state some properties of the FP dynamic in the subsequent sections to prove the results claimed in Sections~\ref{Sec:SyncSplit} and~\ref{Sec:UpperProof}. Finally, we prove Theorem~\ref{Thm:Upper} in 
\Cref{Sec:ThmUpperProof}.

\subsection{Definitions}
\label{Sec:DetailsDef}

In this section, we define some helpful new notation. 

At time $t \ge 1$, we define the state variables $p^{(t)}, q^{(t)} \in \R^n$ by
\begin{align*}
p^{(t)} &= A y^{(t)} \\
q^{(t)} &= A x^{(t)}.
\end{align*}
In particular, $p_i^{(t)} = e_i^\top Ay^{(t)}$ and $q_j^{(t)} = (x^{(t)})^\top A e_j$ for all $i,j \in [n]$.
Then we see that $i^{(t)}$ and $j^{(t)}$ from \eqref{Eq:FP} can be defined as follows:
\begin{align*}
i^{(t)} &= \arg\min_{i \in [n]} p_i^{(t)} \\
j^{(t)} &= \arg\max_{j \in [n]} q_j^{(t)}.
\end{align*}
We can think of each entry of $p^{(t)}$ as the loss to the $x$ player for playing a given action against the $y$ player's history, and likewise we can think of $q^{(t)}$ as the vector of payoffs for the $y$ player.

Let
\begin{align*}
p_\ast^{(t)} &= \min_{i\in [n]} p_i^{(t)} = p^{(t)}_{i^{(t)}} \\
q^{\ast (t)} &= \max_{j\in [n]} q_j^{(t)} = q^{(t)}_{j^{(t)}}. 
\end{align*}
We can interpret $p_\ast^{(t)}$ as the loss to the $x$ player of her best action against the $y$ player's history up to round $t$, and we can interpret $q^{\ast (t)}$ analogously.

Then we can write the duality gap at time $t$ as
\begin{align}\label{Eq:Psi}
\psi(x^{(t)},y^{(t)}) = q^{\ast (t)} - p^{(t)}_\ast.
\end{align}

We also define the {\em gap vectors} $u^{(t)}, v^{(t)} \in \R^n$ by
\begin{align*}
\begin{split}
u^{(t)}_i &= p^{(t)}_i - p_\ast^{(t)} \\
v^{(t)}_j &= q^{\ast (t)}- q^{(t)}_j
\end{split}
\end{align*}
for all $i,j \in [n]$. 
The entries of $u^{(t)}$ and $v^{(t)}$ are the regrets of the actions, namely how far they are from being optimal for the $x$ and $y$ players, respectively. 
Note that $u^{(t)}$ and $v^{(t)}$ have nonnegative entries, and each of them has at least one entry equal to $0$ (since $u^{(t)}_{i^{(t)}} = v^{(t)}_{j^{(t)}} = 0$).
Furthermore, observe that we can write the weight vector~\eqref{Eq:w} as the sum of the gap vectors:
\begin{align}\label{Eq:w2}
w^{(t)}_i = \frac{u^{(t)}_i + v^{(t)}_i}{A_{ii}} ~~~ \text{ for all } i \in [n].
\end{align}
In particular, this shows that the entries of the weight vector are always nonnegative.

Since $A$ is diagonal, the FP update~\eqref{Eq:FP} implies the following dynamic on the state variables:
\begin{align}\label{Eq:FPState}
\begin{split}
p^{(t+1)} &= p^{(t)} + A_{j^{(t)} j^{(t)}} \, e_{j^{(t)}} \\
q^{(t+1)} &= q^{(t)} + A_{i^{(t)} i^{(t)}} \, e_{i^{(t)}}.
\end{split}
\end{align}

\subsection{Duality gap never decreases}
\label{Sec:PsiIncreasing}

Recall the type of round $t$ is $(i^{(t)},j^{(t)})$.
Let $\Delta \psi^{(t)}$ denote the change in the duality gap:
\begin{align*}
\Delta \psi^{(t)} =  \psi(x^{(t+1)},y^{(t+1)}) - \psi(x^{(t)},y^{(t)}).
\end{align*}
A basic property of the FP dynamic is that the duality gap never decreases, and it only increases when the type changes. We prove this by carefully tracking $p^{(t)}$ and $q^{(t)}$ across subsequent rounds.

\begin{lemma}\label{Lem:DeltaPsi}
For $t \ge 1$, $0 \le \Delta \psi^{(t)} \le A_{\max}$.
If rounds $t$ and $t+1$ have the same type, then $\Delta \psi^{(t)} = 0$.
\end{lemma}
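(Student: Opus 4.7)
The plan is to directly unpack $\Delta \psi^{(t)} = (q^{\ast (t+1)} - q^{\ast (t)}) - (p_\ast^{(t+1)} - p_\ast^{(t)})$ via \eqref{Eq:Psi}, using the one-coordinate updates \eqref{Eq:FPState} and a case split on whether round $t$ is sync ($i^{(t)} = j^{(t)}$) or split ($i^{(t)} \ne j^{(t)}$). In the sync case with $i^{(t)} = j^{(t)} = k$, the coordinate lifted in $q$ is exactly the current argmax, so $q^{\ast (t+1)} = q^{\ast (t)} + A_{kk}$; the coordinate lifted in $p$ is the current argmin, so the new minimum is attained either at $k$ (value $p_\ast^{(t)} + A_{kk}$) or at some untouched coordinate that tied for the old minimum, giving $p_\ast^{(t+1)} - p_\ast^{(t)} \in [0, A_{kk}]$ and hence $\Delta \psi^{(t)} \in [0, A_{kk}] \subseteq [0, A_{\max}]$. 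In the split case with $i^{(t)} = i \ne j = j^{(t)}$, the lifted coordinate $i$ in $q$ was not the argmax, so $q^{\ast (t+1)} = \max\{q^{\ast (t)},\, q_i^{(t)} + A_{ii}\}$ and the change lies in $[0, A_{ii}]$; symmetrically, the lifted coordinate $j$ in $p$ is not the argmin while $i$ is untouched, so $p_\ast^{(t+1)} = p_\ast^{(t)}$. Thus $\Delta \psi^{(t)} \in [0, A_{ii}] \subseteq [0, A_{\max}]$.

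For the same-type claim I would use the argmin/argmax condition at round $t+1$ to pin down the extremes exactly. In a sync phase, $i^{(t+1)} = k$ remaining the argmin of $p^{(t+1)}$ even though $p_k$ has just risen by $A_{kk}$ forces $p_{k'}^{(t)} \ge p_\ast^{(t)} + A_{kk}$ for every $k' \ne k$; hence $p_\ast^{(t+1)} = p_\ast^{(t)} + A_{kk}$, and combined with $q^{\ast (t+1)} = q^{\ast (t)} + A_{kk}$ from above, the two changes cancel. In a split phase, $j^{(t+1)} = j$ remaining the argmax of $q^{(t+1)}$ forces $q_i^{(t)} + A_{ii} \le q^{\ast (t)}$, so $q^{\ast (t+1)} = q^{\ast (t)}$, and combined with $p_\ast^{(t+1)} = p_\ast^{(t)}$ from the split-case analysis this again gives $\Delta \psi^{(t)} = 0$.

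The proof is essentially a one-step inspection, and the main subtlety is bookkeeping the tie possibilities carefully: both endpoints of the sync-case interval $[p_\ast^{(t)}, p_\ast^{(t)} + A_{kk}]$ are genuinely attainable (the lower one arising precisely when another coordinate of $p^{(t)}$ tied for the old minimum), and the same-type conclusion must extract exact equalities from the argmin/argmax constraints at round $t+1$ rather than from any strict ordering of the coordinates. No deeper machinery is needed; in particular, the lexicographic nature of the tie-breaking rule is not actually invoked for this lemma, only the fact that $i^{(t)}$ and $j^{(t)}$ do attain the relevant extremes of $p^{(t)}$ and $q^{(t)}$.
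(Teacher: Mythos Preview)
Your proposal is correct. The paper's proof is organized differently: rather than case-splitting on whether round $t$ is sync or split, it writes a single closed-form expression
\[
\Delta \psi^{(t)} = -v^{(t)}_{j'} - u^{(t)}_{i'} + A_{j'j'}\, e_i(j') - A_{i'i'}\, e_j(i')
\]
in terms of the gap vectors $u^{(t)}, v^{(t)}$ and the type $(i',j')$ at round $t+1$, and then reads off all three claims from this one formula. Your argument instead tracks the changes in $q^\ast$ and $p_\ast$ separately in each of the two cases for round $t$. Both are elementary one-step computations of comparable length; the paper's unified formula is a bit more systematic and dovetails with the gap-vector machinery used later, while your version is more self-contained and does not require introducing $u,v$ at this point. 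Your closing remark that the lexicographic nature of the tie-breaking is not actually used here is correct and worth making explicit.
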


\begin{proof}
Let the type of round $t$ be $(i,j)$ and the type of round $t+1$ be $(i',j')$.
By definition, this means:  $q^{(t)}_{j} \ge q^{(t)}_{j'}$, $q^{(t+1)}_{j'} \ge q^{(t+1)}_{j}$, and $p^{(t)}_{i} \le p^{(t)}_{i'}$, $p^{(t+1)}_{i'} \le p^{(t+1)}_{i}$.

By definition of the duality gap~\eqref{Eq:Psi} and the FP update rule~\eqref{Eq:FPState}, we have 
\begin{align*}
\psi&(x^{(t+1)},y^{(t+1)}) = q^{(t+1)}_{j'} - p^{(t+1)}_{i'} \\
&=  q^{(t)}_{j'} + A_{j'j'} e_i(j') - ( p^{(t)}_{i'} + A_{i'i'} e_j(i') ) \\
&= \psi(x^{(t)},y^{(t)}) - v^{(t)}_{j'} - u^{(t)}_{i'} + A_{j'j'} e_i(j')  - A_{i'i'} e_j(i').
\end{align*}
Here $e_i(j') = 1$ if $i = j'$, and $e_i(j') = 0$ else.
Therefore, the change in the duality gap is
\begin{align*}
\Delta \psi^{(t)} =  - v^{(t)}_{j'} - u^{(t)}_{i'} + A_{j'j'} e_i(j')  - A_{i'i'} e_j(i').
\end{align*}

First, note that since $u^{(t)}_{i'}, v^{(t)}_{j'} \ge 0$, we have 
$\Delta \psi^{(t)} \le A_{i'i'} e_i(j')  \le A_{\max}$.

Second, from $q^{(t+1)}_{j'} \ge q^{(t+1)}_{j}$ we have
  $q^{(t)}_{j'} + A_{j'j'} e_i(j') \ge q^{(t)}_{j} + A_{jj} e_i(j)$, 
  so $v^{(t)}_{j'} \le A_{j'j'} e_i(j') - A_{jj} e_i(j)$.
  Similarly,  
  from $p^{(t+1)}_{i'} \le p^{(t+1)}_{i}$ we have
  $p^{(t)}_{i'} + A_{i'i'} e_j(i') \le p^{(t)}_{j} + A_{ii} e_j(i)$, 
  so $u^{(t)}_{i'} \le A_{ii} e_j(i) - A_{i'i'} e_j(i')$.
  Since $A_{ii} e_j(i) = A_{jj} e_i(j)$, adding these two inequalities implies $\Delta \psi^{(t)} \ge 0$.

Finally, if $(i',j') = (i,j)$, then we have $v^{(t)}_{j'} = v^{(t)}_{j} = 0$, $u^{(t)}_{i'} = u^{(t)}_{i}  = 0$, and $A_{ii} e_i(j') = A_{jj} e_j(i')$, so $\Delta \psi^{(t)} = 0$.
\end{proof}

\subsection{Proof of Lemma~\ref{Lem:SyncSplit}}

Now we can prove that the FP dynamic alternates through sync and split phases.

\LemSyncSplit*

\begin{proof}
Suppose we are in round $t \ge 1$.

\begin{enumerate}

\item Suppose round $t$ is a sync$(i,i)$ round for some $i \in [n]$, so we are in a sync phase.
This means $q^{\ast (t)} = q^{(t)}_i$ and $p^{(t)}_\ast = p^{(t)}_i$.
The FP update~\eqref{Eq:FP} is $p^{(t+1)} = p^{(t)} + A_{ii} e_i$ and $q^{(t+1)} = q^{(t+r)} + A_{ii} e_i$.
In particular, $p^{(t+1)}_h = p^{(t)}_h$ for $h \neq i$, and $p^{(t+1)}_i = p^{(t)}_i+A_{ii}$.
Similarly,  $q^{(t+1)}_h = q^{(t)}_h$ for $h \neq i$, and $q^{(t+1)}_i = q^{(t)}_i+A_{ii}$.
Observe the maximum of $q$ is still on the $i^{\text{th}}$ entry, and it is unique: $q_\ast^{(t+1)} = q^{(t+1)}_i \ge q^{(t+1)}_h + A_{ii}$, $h \neq i$.
This means the $y$ player does not change action, so $j^{(t+1)} = i$.
Thus, round $t+1$ is either a sync$(i,i)$ round (if $i^{(t+1)} = i$) or a split$(j,i)$ round for some $j \neq i$ (if $i^{(t+1)} = j \neq i$).
Therefore, this sync phase is followed by a split$(j,i)$ phase.

\item Now suppose round $t$ is a split$(j,i)$ round for some $j \neq i$, so we are in a split phase.
This means $q^{\ast (t)} = q^{(t)}_i$ and $p^{(t)}_\ast = p^{(t)}_j$.
The FP update~\eqref{Eq:FP} is $p^{(t+1)} = p^{(t)} + A_{ii} e_i$ and $q^{(t+1)} = q^{(t)} + A_{jj} e_j$.
In particular, $p^{(t+1)}_h = p^{(t)}_h$ for $h \neq i$, and $p^{(t+1)}_i = p^{(t)}_i+A_{ii}$.
Observe the minimum of $p$ in round $t+1$ is still on the $j^{\text{th}}$ entry, and it is still chosen by the lexicographic order in round $t+1$ (because it was chosen in round $t$, and the set of minimizers cannot gain any new entry).
This means the $x$ player does not change action, so $i^{(t+1)} = j$.
On the other hand, because $q$ only changes in the $j^{\text{th}}$ entry, the set of maximizers of $q$ in round $t+1$ either stays the same or it can only gain $j$ as a new maximizer; concretely, $\J^{(t+1)}$ is either $\J^{(t)}$ or $\J^{(t)} \cup \{j\}$.
If $\J^{(t+1)} = \J^{(t)}$, then $j^{(t+1)} = j^{(t)} = i$ since the lexicographic ordering is fixed.
If $\J^{(t+1)} = \J^{(t)} \cup \{j\}$, then $j^{(t+1)}$ is either $i$ or $j$.
Thus, round $t+1$ is either a split$(j,i)$ round or a sync$(j,j)$ round.
Therefore, this split phase is followed by a sync$(j,j)$ phase.
\end{enumerate}
\end{proof}

\subsection{Behavior of weight vector}

In this section, we characterize how the duality gap and $w$ change over the course of sync and split phases. 
We show that the duality gap only increases by at most $A_{\max}$ during each sync and split phase and that each entry of $w$ increases by an amount proportional to the increase in the duality gap. We accomplish this by carefully tracking how $p^{(t)}, q^{(t)}, u^{(t)}$, and $v^{(t)}$ change during and between phases.

\subsubsection{Behavior after a sync phase} 

\begin{lemma}\label{lem:sync}
Suppose rounds $\{t, \dots, t+s-1\}$ form a sync$(i,i)$ phase for some $s \ge 1$, and round $t+s+1$ is a split$(j,i)$ round for some $j \neq i$. 
Let $\epsilon = sA_{ii} - u^{(t)}_j$.
Then
	\begin{enumerate}
		\item $0 \le \epsilon \le A_{ii}$. 
		\item $w^{(t+s)}_\ell = w^{(t-1)}_\ell + A_{\ell\ell}^{-1}\epsilon$ for all $\ell \in [n]$.
		\item $\epsilon = \psi(x^{(t+s)},y^{(t+s)}) - \psi(x^{(t)},y^{(t)})$
	\end{enumerate}
\end{lemma}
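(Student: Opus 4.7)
The plan is a direct calculation tracking how the state variables $p^{(r)}, q^{(r)}$ evolve over the $s$-round sync$(i,i)$ phase. Because the FP update~\eqref{Eq:FPState} adds $A_{ii}e_i$ to both $p$ and $q$ at each sync$(i,i)$ round, iterating $s$ times gives $p^{(t+s)} = p^{(t)} + sA_{ii}e_i$ and $q^{(t+s)} = q^{(t)} + sA_{ii}e_i$; only the $i$-th coordinates of $p$ and $q$ change. Combining this with the best-response conditions at the two boundary rounds pins down where $p_\ast$ and $q^\ast$ are attained: since $i^{(t)} = j^{(t)} = i$ we have $p_\ast^{(t)} = p^{(t)}_i$ and $q^{\ast(t)} = q^{(t)}_i$; since $i^{(t+s)} = j$ and $j^{(t+s)} = i$ we have $p_\ast^{(t+s)} = p^{(t+s)}_j = p^{(t)}_j$ and $q^{\ast(t+s)} = q^{(t+s)}_i = q^{(t)}_i + sA_{ii}$. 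Plugging into $\psi = q^\ast - p_\ast$ immediately gives Part~3: $\psi^{(t+s)} - \psi^{(t)} = sA_{ii} - (p^{(t)}_j - p^{(t)}_i) = sA_{ii} - u^{(t)}_j = \epsilon$.

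For Part~1, both inequalities come from best-response conditions at adjacent rounds. The bound $\epsilon \ge 0$ follows because $j \in \arg\min p^{(t+s)}$ forces $p^{(t)}_j \le p^{(t)}_i + sA_{ii}$, i.e.\ $u^{(t)}_j \le sA_{ii}$. For $\epsilon \le A_{ii}$, I use that round $t+s-1$ is still sync$(i,i)$, so action $i$ was chosen over $j$ under the lexicographic tie-breaker (\Cref{as:no_tie}); hence $p^{(t+s-1)}_j \ge p^{(t+s-1)}_i$, which unwinds to $u^{(t)}_j \ge (s-1)A_{ii}$. Assumption~\ref{as:no_tie} is essential here: under adversarial tie-breaking $j$ could win over $i$ at tied $p$-values, breaking this step.

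For Part~2, I plug the explicit expressions for $p^{(t+s)}$ and $q^{(t+s)}$ into the identity $w^{(r)}_\ell = (u^{(r)}_\ell + v^{(r)}_\ell)/A_{\ell\ell}$ from~\eqref{Eq:w2} and split into cases. For $\ell = i$: $u^{(t+s)}_i = p^{(t)}_i + sA_{ii} - p^{(t)}_j = \epsilon$ and $v^{(t+s)}_i = 0$, giving $w^{(t+s)}_i = \epsilon/A_{ii}$; and $w^{(t)}_i = 0$ because round $t$ is sync$(i,i)$. For $\ell \ne i$: the coordinate changes yield $u^{(t+s)}_\ell = u^{(t)}_\ell - u^{(t)}_j$ and $v^{(t+s)}_\ell = v^{(t)}_\ell + sA_{ii}$, whose sum exceeds $u^{(t)}_\ell + v^{(t)}_\ell$ by exactly $\epsilon$; dividing by $A_{\ell\ell}$ yields $w^{(t+s)}_\ell - w^{(t)}_\ell = \epsilon/A_{\ell\ell}$. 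Both cases collapse to $w^{(t+s)}_\ell = w^{(t)}_\ell + A_{\ell\ell}^{-1}\epsilon$, matching Part~2.

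There is essentially no hard step: once the coordinate-wise updates of $p$ and $q$ are in hand, everything reduces to substitution. The only substantive ingredient is the use of lexicographic tie-breaking in bounding $\epsilon \le A_{ii}$; otherwise the proof is pure bookkeeping.
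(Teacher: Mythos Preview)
Your proof is correct and follows essentially the same approach as the paper: track $p^{(t+s)} = p^{(t)} + sA_{ii}e_i$ and $q^{(t+s)} = q^{(t)} + sA_{ii}e_i$, then read off the changes in $p_\ast$, $q^\ast$, $u$, $v$ coordinate-wise. One small over-claim: the bound $\epsilon \le A_{ii}$ follows just from $i^{(t+s-1)} = i$ (hence $p^{(t+s-1)}_i \le p^{(t+s-1)}_j$), which is part of the hypothesis that rounds $\{t,\dots,t+s-1\}$ are sync$(i,i)$ and does not itself require lexicographic tie-breaking.
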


\begin{proof}
By assumption, for $0 \le r \le s$ we have $j^{(t+r)} = i$.
This means $q_\ast^{(t+r)} = q_i^{(t+r)}$ for  $0 \le r \le s$.
By assumption, we also have $i^{(t+r)} = i$ for $0 \le r \le s-1$, and $i^{(t+s)} = j$.
This means $p_{\ast}^{(t+r)} = p_i^{(t+r)}$ for  $0 \le r \le s-1$, but  $p_{\ast}^{(t+s)} = p_j^{(t+s)}$.

For  $0 \le r \le s-1$, the FP update is $p^{(t+r+1)} = p^{(t+r)} + A_{ii} e_i$ and $q^{(t+r+1)} = q^{(t+r)} + A_{ii} e_i$.
In particular, $p$ and $q$ only change in the $i^{\text{th}}$ coordinate, which increases by $A_{ii}$ in each round.
Explicitly, we have
$p^{(t+s)} = p^{(t)} + s A_{ii} e_i$
and $q^{(t+s)} = q^{(t)} + s A_{ii} e_i$.

By the properties above, we can deduce the following.
First,
\begin{enumerate}
  \item[(a)] $q^{\ast (t+s)} = q^{(t+s)}_i = q^{(t)}_i + sA_{ii} = q^{\ast (t)} + sA_{ii}$.
  \item[(b)] $v^{(t+s)}_i = v^{(t)}_i = 0$.
  \item[(c)] $v^{(t+s)}_\ell = q^{\ast (t+s)} - q^{(t+s)}_\ell = q^{\ast (t)} + sA_{ii} - q^{(t)}_\ell  = v^{(t)}_\ell + s A_{ii}$ for $\ell \neq i$.
\end{enumerate}
Second, we also have
\begin{enumerate}
  \item[(d)] $p^{(t+s)}_\ast = p^{(t+s)}_j = p^{(t)}_j = p^{(t)}_\ast + u^{(t)}_j =  p^{(t)}_\ast + s A_{ii} - \epsilon$
  \item[(e)] $u^{(t+s)}_i = p^{(t+s)}_i - p^{(t+s)}_\ast = p^{(t)}_i + s A_{ii} - ( p^{(t)}_\ast + s A_{ii} - \epsilon ) = u^{(t)}_i + \epsilon = \epsilon$.
  \item[(f)] $u^{(t+s)}_\ell = p^{(t+s)}_\ell - p^{(t+s)}_\ast = p^{(t)}_\ell - ( p^{(t)}_\ast + s A_{ii} - \epsilon )  = u^{(t)}_\ell - s A_{ii} + \epsilon$ for $\ell \neq i$. \\
  (In particular, $u^{(t+s)}_j = u^{(t)}_j - sA_{ii} + \epsilon = 0$.)
\end{enumerate}
In the above, we have used the definition $\epsilon = s A_{ii} - u_j^{(t)}$.
Let us now prove the claimed properties.
\begin{enumerate}
  \item Since $i^{(t+s)} = j$, we have $p^{(t)}_j = p^{(t+s)}_j \le p^{(t+s)}_i = p^{(t)}_i + sA_{ii} = p^{(t)}_\ast + sA_{ii}$, so $\epsilon = s A_{ii} - u_j^{(t)}\ge 0$.
  
  Since $i^{(t+s-1)} = i$, we have $p^{(t)}_j = p^{(t+s-1)}_j \ge p^{(t+s-1)}_i = p^{(t)}_i + (s-1)A_{ii} = p^{(t)}_\ast + (s-1)A_{ii}$, so $\epsilon = s A_{ii} - u_j^{(t)}\le A_{ii}$.

  \item From (b), (c), (e), and (f), we have: $w^{(t+s)}_\ell = A_{\ell\ell}^{-1}(v^{(t+s)}_\ell + u^{(t+s)}_\ell) = w^{(t)}_\ell + A_{\ell\ell}^{-1}\epsilon$ for all $\ell \in [n]$.
  \item From (a) and (d), we have $\psi(x^{(t+s)},y^{(t+s)}) = q^{\ast (t+s)} - p^{(t+s)}_\ast = q^{\ast (t)} - p^{(t)}_\ast + \epsilon = \psi(x^{(t)},y^{(t)}) + \epsilon$.
\end{enumerate}
\end{proof}

\subsubsection{Behavior after a split phase} 

\begin{lemma}\label{lem:split}
Suppose rounds $\{t, \dots, t+s-1\}$ form a split$(j ,i)$ phase for some $s \ge 1$, $j \neq i$, and round $t+s+1$ is a sync$(j,j)$ round.
Let $\epsilon = sA_{jj} - v^{(t)}_j$.
Then
	\begin{enumerate}
		\item $0 \le \epsilon \le A_{jj}$. 
		\item $w^{(t+s)}_\ell = w^{(t-1)}_\ell + A_{\ell\ell}^{-1}\epsilon$ for $\ell \notin \{i,j\}$,
		and
		$w^{(t+s)}_i = w^{(t)}_i + w^{(t)}_j + (A_{ii}^{-1} + A_{jj}^{-1})\epsilon$, and $w^{(t+s)}_j = 0$.
		\item $\epsilon = \psi(x^{(t+s)},y^{(t+s)}) - \psi(x^{(t)},y^{(t)})$.
	\end{enumerate}
\end{lemma}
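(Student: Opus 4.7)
The plan is to mirror the argument used in Lemma~\ref{lem:sync}, exchanging the roles of sync and split, while carefully tracking how the coordinates $i$ and $j$ interact. Since all of rounds $t,\dots,t+s-1$ are split$(j,i)$ rounds, the FP update~\eqref{Eq:FPState} reads $p^{(r+1)} = p^{(r)} + A_{ii} e_i$ and $q^{(r+1)} = q^{(r)} + A_{jj} e_j$ for every $t \le r \le t+s-1$. Telescoping yields $p^{(t+s)} = p^{(t)} + s A_{ii} e_i$ and $q^{(t+s)} = q^{(t)} + s A_{jj} e_j$. The new best responses follow from the transition: since round $t+s$ is sync$(j,j)$, we have $i^{(t+s)} = j^{(t+s)} = j$, so $p_\ast^{(t+s)} = p_j^{(t+s)} = p_j^{(t)} = p_\ast^{(t)}$ and $q^{\ast(t+s)} = q_j^{(t+s)} = q_j^{(t)} + s A_{jj}$. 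Substituting $q_j^{(t)} = q_\ast^{(t)} - v_j^{(t)}$ and using the definition of $\epsilon$ immediately gives $q^{\ast(t+s)} = q_\ast^{(t)} + \epsilon$, which establishes item 3 upon subtracting $p_\ast^{(t+s)} = p_\ast^{(t)}$.

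For item 1, I would invoke the lexicographic tie-breaking exactly as in the proof of Lemma~\ref{Lem:SyncSplit}. From $j^{(t+s)} = j$ we have $q_j^{(t+s)} \ge q_i^{(t+s)} = q_\ast^{(t)}$, which rearranges to $\epsilon \ge 0$. From $j^{(t+s-1)} = i$ (round $t+s-1$ is still inside the split phase) we have $q_i^{(t+s-1)} \ge q_j^{(t+s-1)}$, i.e., $q_\ast^{(t)} \ge q_j^{(t)} + (s-1) A_{jj}$, which rearranges to $\epsilon \le A_{jj}$.

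For item 2, I would compute $u^{(t+s)}$ and $v^{(t+s)}$ coordinatewise and then use $w_\ell^{(t+s)} = A_{\ell\ell}^{-1}(u_\ell^{(t+s)} + v_\ell^{(t+s)})$. For $\ell \notin \{i,j\}$, neither $p_\ell$ nor $q_\ell$ changes, $p_\ast$ is unchanged, and $q^{\ast}$ increases by $\epsilon$, so $u_\ell^{(t+s)} = u_\ell^{(t)}$ and $v_\ell^{(t+s)} = v_\ell^{(t)} + \epsilon$, giving $w_\ell^{(t+s)} = w_\ell^{(t)} + A_{\ell\ell}^{-1}\epsilon$. For $\ell = j$, the sync$(j,j)$ condition at round $t+s$ forces $u_j^{(t+s)} = v_j^{(t+s)} = 0$. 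For $\ell = i$, I would use $v_i^{(t)} = 0$ (since $q^{\ast(t)} = q_i^{(t)}$) and $u_j^{(t)} = 0$ (since $p_\ast^{(t)} = p_j^{(t)}$) to obtain $w_i^{(t)} = A_{ii}^{-1} u_i^{(t)}$ and $w_j^{(t)} = A_{jj}^{-1} v_j^{(t)}$. Then $u_i^{(t+s)} = u_i^{(t)} + s A_{ii}$ and $v_i^{(t+s)} = \epsilon$, and substituting $s = w_j^{(t)} + \epsilon/A_{jj}$ (which follows from $sA_{jj} = v_j^{(t)} + \epsilon$) yields $w_i^{(t+s)} = w_i^{(t)} + w_j^{(t)} + (A_{ii}^{-1} + A_{jj}^{-1})\epsilon$.

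The main obstacle is the $\ell = i$ case. Unlike the sync-phase analysis, where the new split coordinate sees a clean transfer $w_j^{(t+s)} = w_j^{(t-1)} + A^{-1}_{jj}\epsilon$, here the $i$ coordinate carries both the accumulated $s A_{ii}$ shift in $u_i$ and a newly created $v_i = \epsilon$. The combined quantity must be re-expressed in terms of the weight at the other active coordinate $j$, which is what produces the extra $w_j^{(t)}$ term and the doubled $(A_{ii}^{-1} + A_{jj}^{-1})\epsilon$ in the statement. Everything else is routine bookkeeping once the split-phase updates and the lexicographic transition conditions are in place.
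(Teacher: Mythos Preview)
Your proposal is correct and follows essentially the same approach as the paper's proof: both telescope the split-phase updates to obtain $p^{(t+s)} = p^{(t)} + sA_{ii}e_i$ and $q^{(t+s)} = q^{(t)} + sA_{jj}e_j$, read off the new $p_\ast,q^\ast,u,v$ coordinatewise, and then assemble the claims, with the $\ell=i$ case handled by the same substitution $s = A_{jj}^{-1}v_j^{(t)} + A_{jj}^{-1}\epsilon = w_j^{(t)} + A_{jj}^{-1}\epsilon$. The only cosmetic difference is that you invoke lexicographic tie-breaking for item~1, whereas the paper's argument only needs the bare inequalities $q_j^{(t+s)} \ge q_i^{(t+s)}$ and $q_i^{(t+s-1)} \ge q_j^{(t+s-1)}$ that come directly from the $\arg\max$ definitions; tie-breaking is not actually required there.
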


\begin{proof}
By assumption, for $0 \le r \le s$ we have $i^{(t+r)} = j$.
This means $p_{\ast}^{(t+r)} = p_j^{(t+r)}$ for  $0 \le r \le s$.
By assumption, we also have $j^{(t+r)} = i$ for $0 \le r \le s-1$, and $j^{(t+s)} = j$.
This means $q_\ast^{(t+r)} = q_i^{(t+r)}$ for  $0 \le r \le s-1$, but  $q_\ast^{(t+s)} = q_j^{(t+s)}$.

For  $0 \le r \le s-1$, the FP update is $p^{(t+r+1)} = p^{(t+r)} + A_{ii} e_i$ and $q^{(t+r+1)} = q^{(t+r)} + A_{jj} e_j$.
In particular, $p$ only changes in the $i^{\text{th}}$ coordinate, while $q$ only changes in the $j^{\text{th}}$ coordinate.
Explicitly,
$p^{(t+s)} = p^{(t)} + s A_{ii} e_i$
and $q^{(t+s)} = q^{(t)} + s A_{jj} e_j$.

By the properties above, we can deduce the following.
First,
\begin{enumerate}
  \item[(a)] $p^{(t+s)}_\ast = p^{(t+s)}_j = p^{(t)}_j = p^{(t)}_\ast$. 
  \item[(b)] $u^{(t+s)}_i = p^{(t+s)}_i - p^{(t+s)}_\ast = p^{(t)}_i + s A_{ii} - p^{(t)}_\ast = u^{(t)}_i + sA_{ii}$.
  \item[(c)] $u^{(t+s)}_\ell = p^{(t+s)}_\ell - p^{(t+s)}_\ast = p^{(t)}_\ell - p^{(t)}_\ast  = u^{(t)}_\ell$ for $\ell \neq i$.
  (In particular, $u^{(t+s)}_j = u^{(t)}_j = 0$.)
\end{enumerate}
Second, we also have
\begin{enumerate}
  \item[(d)] $q^{\ast (t+s)} = q^{(t+s)}_j = q^{(t)}_j + sA_{jj} = q^{\ast (t)} - v^{(t)}_j   + sA_{jj} = q^{\ast (t)} + \epsilon$.
  \item[(e)] $v^{(t+s)}_j = 0$. 
  \item[(f)] $v^{(t+s)}_\ell = q^{\ast (t+s)} - q^{(t+s)}_\ell = q^{\ast (t)} + \epsilon - q^{(t)}_\ell  = v^{(t)}_\ell + \epsilon$ for $\ell \neq j$.
  (In particular, $v^{(t+s)}_i = v^{(t)}_i + \epsilon = \epsilon$.)
\end{enumerate}
In the above, we have used the definition $\epsilon = s A_{jj} - v_j^{(t)}$.
Let us now prove the claimed properties.
\begin{enumerate}
  \item Since $j^{(t+s)} = j$, we have $q^{\ast (t)} = q^{(t)}_i = q^{(t+s)}_i  \le q^{(t+s)}_j = q^{(t)}_j + s A_{jj}$, so $\epsilon = s A_{jj} - v_j^{(t)}\ge 0$.
  
  Since $j^{(t+s-1)} = i$, 
  $q^{\ast (t)} = q^{(t)}_i  = q^{(t+s-1)}_i \ge q^{(t+s-1)}_j = q^{(t)}_j + (s-1) A_{jj}$, so $\epsilon = s A_{jj} - v_j^{(t)} \le A_{jj}$.

  \item From (b), (c), (e), and (f), we have: $w^{(t+s)}_\ell = A_{\ell\ell}^{-1}(v^{(t+s)}_\ell + u^{(t+s)}_\ell) = w^{(t)}_\ell + A_{\ell\ell}^{-1}\epsilon$ for $\ell \notin\{i,j\}$.
  We also have 
  $w^{(t+s)}_j = 0$.
  Moreover, 
  $w^{(t+s)}_i = A_{ii}^{-1}(v^{(t+s)}_i + u^{(t+s)}_i) = A_{ii}^{-1}( \epsilon + u^{(t)}_i + sA_{ii}) = 
  (A_{ii}^{-1} + A_{jj}^{-1}) \epsilon + w^{(t)}_i + s - A_{jj}^{-1} \epsilon = (A_{ii}^{-1} + A_{jj}^{-1}) \epsilon + w^{(t)}_i + w^{(t)}_j$.
    \item  From (a) and (d), we have $\psi(x^{(t+s)},y^{(t+s)}) = q^{\ast (t+s)} - p^{(t+s)}_\ast = q^{\ast (t)} - p^{(t)}_\ast + \epsilon = \psi(x^{(t)},y^{(t)}) + \epsilon$.

\end{enumerate}
\end{proof}

\subsubsection{Behavior after a sync-split pair}
\label{Sec:PsiSyncSplit}

Over the course of a sync-split pair, the weight vector $w$ changes in a precise way. At the start of the sync-split pair, $w$ has $n-1$ non-zero values. Afterward, at the start of the next sync-split pair, each of these values has increased by an amount proportional to the increase in the duality gap, and the value in the $j^{\text{th}}$ coordinate has moved to the $i^{\text{th}}$ coordinate.

\begin{lemma}\label{lem:mass-cons}
	Suppose rounds $\{t, \dots, t+s-1\}$ form a sync-split$(i\rightarrow j)$ pair for some $i \neq j$.
	Let $\epsilon = \psi(x^{(t+s)},y^{(t+s)}) - \psi(x^{(t)},y^{(t)})$. Then $0 \le \epsilon \le 2A_{\max}$, and we have
	\begin{enumerate}
		\item $w^{(t+s)}_\ell = w^{(t)}_\ell + A^{-1}_{\ell\ell}\epsilon$ for $\ell \neq \{i,j\}$.
		\item $w^{(t+s)}_i = w^{(t)}_j + (A^{-1}_{ii} + A_{jj}^{-1})\epsilon$.
		\item $w^{(t+s)}_j = w^{(t)}_i = 0$.
	\end{enumerate}
\end{lemma}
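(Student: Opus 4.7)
The plan is to decompose the sync-split$(i \to j)$ pair into its two constituent phases and chain Lemmas~\ref{lem:sync} and~\ref{lem:split}. Write the sync$(i,i)$ phase as rounds $\{t,\dots,t+s_1-1\}$ and the split$(j,i)$ phase as rounds $\{t+s_1,\dots,t+s-1\}$, where $s = s_1 + s_2$ and both $s_1, s_2 \ge 1$ since phases are by definition nonempty. Define the per-phase increments
\begin{align*}
\epsilon_1 &= \psi(x^{(t+s_1)},y^{(t+s_1)}) - \psi(x^{(t)},y^{(t)}), \\
\epsilon_2 &= \psi(x^{(t+s)},y^{(t+s)}) - \psi(x^{(t+s_1)},y^{(t+s_1)}),
\end{align*}
so that $\epsilon = \epsilon_1 + \epsilon_2$.

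First I would apply Lemma~\ref{lem:sync} to the sync$(i,i)$ phase; the required next-round hypothesis (a split$(j,i)$ round) is built into the definition of a sync-split$(i\to j)$ pair. This yields $0 \le \epsilon_1 \le A_{ii}$ and the additive update $w^{(t+s_1)}_\ell = w^{(t)}_\ell + A_{\ell\ell}^{-1}\epsilon_1$ for every $\ell \in [n]$. Next I would apply Lemma~\ref{lem:split} to the split$(j,i)$ phase, using Lemma~\ref{Lem:SyncSplit} to certify that the round following the split phase is a sync$(j,j)$ round. Noting that the split$(j,i)$ phase already uses the same index pair as Lemma~\ref{lem:split}, so no relabeling is needed, this gives $0 \le \epsilon_2 \le A_{jj}$, together with $w^{(t+s)}_\ell = w^{(t+s_1)}_\ell + A_{\ell\ell}^{-1}\epsilon_2$ for $\ell \notin \{i,j\}$, the transfer identity $w^{(t+s)}_i = w^{(t+s_1)}_i + w^{(t+s_1)}_j + (A_{ii}^{-1}+A_{jj}^{-1})\epsilon_2$, and $w^{(t+s)}_j = 0$.

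The final step is pure substitution, hinging on one small observation: $w^{(t)}_i = 0$ at the start of a sync$(i,i)$ phase. Indeed, round $t$ has $i^{(t)} = j^{(t)} = i$, so $p_\ast^{(t)} = p_i^{(t)}$ and $q^{\ast(t)} = q_i^{(t)}$, giving $u_i^{(t)} = v_i^{(t)} = 0$ and hence $w^{(t)}_i = 0$ via~\eqref{Eq:w2}. Plugging the sync-phase identities into the split-phase identities, for $\ell \notin \{i,j\}$ one obtains $w^{(t+s)}_\ell = w^{(t)}_\ell + A_{\ell\ell}^{-1}\epsilon$; for $\ell = i$ one gets $w^{(t+s)}_i = w^{(t)}_i + w^{(t)}_j + (A_{ii}^{-1}+A_{jj}^{-1})\epsilon$, which collapses to $w^{(t)}_j + (A_{ii}^{-1}+A_{jj}^{-1})\epsilon$ after using $w^{(t)}_i = 0$; and $w^{(t+s)}_j = 0$ passes through unchanged. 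Summing the per-phase bounds gives $0 \le \epsilon \le A_{ii} + A_{jj} \le 2 A_{\max}$.

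I do not expect a substantive obstacle. The lemma is essentially a telescoping of the two single-phase results, and the only nontrivial ingredient is the identity $w^{(t)}_i = 0$, which is what turns the two-term expression $w^{(t)}_i + w^{(t)}_j$ into the single $w^{(t)}_j$ appearing in the claim. The potential pitfall is purely bookkeeping—keeping straight which index plays the role of "old sync action" and "new sync action" across the two lemmas—but this is managed by the natural labeling sync$(i,i) \to$ split$(j,i) \to$ sync$(j,j)$ furnished by Lemma~\ref{Lem:SyncSplit}.
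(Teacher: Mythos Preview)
Your proposal is correct and follows essentially the same route as the paper: decompose the sync-split pair into its sync and split phases, apply Lemmas~\ref{lem:sync} and~\ref{lem:split} in sequence, and use the observation $w^{(t)}_i = 0$ at the start of the sync phase to collapse the resulting expression. The only cosmetic difference is that you record the slightly sharper intermediate bound $\epsilon \le A_{ii} + A_{jj}$ before passing to $2A_{\max}$.
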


\begin{proof}
This follows from the characterizations in \Cref{lem:sync,lem:split}.
Let round $t+r-1$ be the last round of the sync phase for this sync-split pair. 
Let $\epsilon_1 = \psi(x^{(t+r)},y^{(t+r)}) - \psi(x^{(t)},y^{(t)})$ and $\epsilon_2 =  \psi(x^{(t+s)},y^{(t+s)}) -  \psi(x^{(t+r)},y^{(t+r)})$, so $\epsilon = \epsilon_1 + \epsilon_2$.
Since $0 \le \epsilon_1,\epsilon_2 \le A_{\max}$, we have $0 \le \epsilon \le 2A_{\max}$.

At the beginning of the sync phase at round $t$, we have $w^{(t)}_i = 0$.
After the sync phase at round $t+r$, by \Cref{lem:sync},
\begin{align*} 
w^{(t+r)}_\ell = w^{(t)}_\ell + A_{\ell\ell}^{-1}\epsilon_1 ~~~~ \text{ for all } ~ \ell \in [n].
\end{align*}
After the split phase at round $t+s$, by \Cref{lem:split}, we have $w^{(t+s)}_j = 0$.
We also have
\begin{align*}
w^{(t+s)}_i &= w^{(t+r)}_i + w^{(t+r)}_j + (A_{ii}^{-1} + A_{jj}^{-1})\epsilon_2 \\
&= w^{(t)}_i + A^{-1}_{ii} \epsilon_1 + w^{(t)}_j + A^{-1}_{jj} \epsilon_1 + (A_{ii}^{-1} + A_{jj}^{-1})\epsilon_2 \\
&= w^{(t)}_j + (A^{-1}_{ii} + A_{jj}^{-1})\epsilon.
\end{align*}
Finally, for $\ell \notin \{i,j\}$, we also have
$w^{(t+s)}_\ell = w^{(t+r)}_\ell + A_{\ell\ell}^{-1} \epsilon_2 
= w^{(t)}_r + A^{-1}_{\ell\ell} \epsilon_1 + A_{\ell\ell}^{-1} \epsilon_2 
= w^{(t)}_r + A^{-1}_{\ell\ell} \epsilon$.
\end{proof}

Lemma~\ref{lem:mass-cons} implies the following bound on the change in weight function in terms of duality gap.
Recall $A_{\min} = \min_{i \in [n]} A_{ii}$ is the minimum and $A_{\max} = \max_{i \in [n]} A_{ii}$ is the maximum entry of $A$.

\begin{corollary}\label{cor:w_increase}
Suppose rounds $\{t, \dots, t+s-1\}$ form a sync-split$(i \rightarrow j)$ pair for some $i \neq j$. 
Let $\epsilon = \psi(x^{(t+s)},y^{(t+s)}) - \psi(x^{(t)},y^{(t)})$. 
Then $0 \le \epsilon \le 2 A_{\max}$, and we have
\begin{enumerate}
  \item $\frac{\epsilon}{A_{\max}} \le w^{(t+s)}_\ell - w^{(t)}_\ell \le \frac{\epsilon}{A_{\min}}$ for $\ell \not\in \{i,j\}$.
  \item $\frac{2\epsilon}{A_{\max}} \le w^{(t+s)}_i - w^{(t)}_j \le \frac{2\epsilon}{A_{\min}}$.
  \item $w^{(t+s)}_j = w^{(t)}_i = 0$.
\end{enumerate}
\end{corollary}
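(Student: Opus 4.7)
The plan is to derive this corollary as a direct consequence of Lemma~\ref{lem:mass-cons}, using only the uniform bounds $A_{\min} \le A_{\ell\ell} \le A_{\max}$ to convert the three equalities stated there into the three inequalities stated here. There is no fundamentally new content: Lemma~\ref{lem:mass-cons} already computes the change in $w$ in closed form in terms of $\epsilon$, and this corollary simply reformulates the result in a shape that will be more convenient for the inductive/recurrence arguments underlying Lemmas~\ref{Lem:Increase}, \ref{Lem:Length}, and~\ref{Lem:Diag}.

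First, I would invoke Lemma~\ref{lem:mass-cons} to immediately obtain both endpoints of the claim: the bound $0 \le \epsilon \le 2A_{\max}$ is restated verbatim, and item 3 (namely $w^{(t+s)}_j = w^{(t)}_i = 0$) is simply copied over. Next, for item 1, starting from the identity $w^{(t+s)}_\ell - w^{(t)}_\ell = A_{\ell\ell}^{-1}\epsilon$ for $\ell \notin \{i,j\}$, I would use $\epsilon \ge 0$ together with $A_{\min} \le A_{\ell\ell} \le A_{\max}$, which gives $A_{\max}^{-1} \le A_{\ell\ell}^{-1} \le A_{\min}^{-1}$, and multiplying by $\epsilon \ge 0$ yields the two-sided bound stated in item 1.

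For item 2, I would begin from the identity $w^{(t+s)}_i - w^{(t)}_j = (A_{ii}^{-1} + A_{jj}^{-1})\epsilon$. Since $A_{\min} \le A_{ii}, A_{jj} \le A_{\max}$, we have $2A_{\max}^{-1} \le A_{ii}^{-1} + A_{jj}^{-1} \le 2A_{\min}^{-1}$. Again using $\epsilon \ge 0$, multiplication through gives exactly the bound $2\epsilon/A_{\max} \le w^{(t+s)}_i - w^{(t)}_j \le 2\epsilon/A_{\min}$ claimed in item 2.

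I do not foresee any substantive obstacle here: the corollary is a direct quantitative relaxation of Lemma~\ref{lem:mass-cons} via the scalar bounds on the diagonal of $A$, and each of the three items of the corollary corresponds one-to-one with an item of the lemma. The only care required is to note that $\epsilon \ge 0$ is needed so that the inequalities $A_{\max}^{-1} \le A_{\ell\ell}^{-1} \le A_{\min}^{-1}$ survive multiplication by $\epsilon$ in the correct direction; this nonnegativity of $\epsilon$ is itself part of Lemma~\ref{lem:mass-cons} (and more generally of Lemma~\ref{Lem:DeltaPsi}), so no additional argument is required.
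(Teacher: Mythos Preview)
Your proposal is correct and matches the paper's approach: the paper states this corollary as an immediate consequence of Lemma~\ref{lem:mass-cons} without giving a separate proof, and your derivation via the bounds $A_{\min} \le A_{\ell\ell} \le A_{\max}$ (together with $\epsilon \ge 0$) is exactly the intended one-line justification.
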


\subsection{The weight vector and the duality gap}
\label{Sec:DetailGap}

In this section we describe a general relation between the weight vector and the duality gap over the course of the FP dynamic.
This follows by inductively applying Corollary~\ref{cor:w_increase}.

Recall by Lemma~\ref{Lem:SyncSplit}, starting from any $(x^{(1)}, y^{(1)}) \in \Delta_n \times \Delta_n$, we know the FP dynamic proceeds through a sequence of sync-split pairs.
Let $T_1,T_2,\dots$ denote the starting times of the sync-split pairs.
Let the $s^{\text{th}}$ pair be a sync-split$(i_s \to i_{s+1})$ pair for some $i_{s+1} \neq i_s$.
Let $\kappa = A_{\max}/A_{\min} \ge 1$ denote the condition number of $A$, where recall $A_{\min} = \min_{i \in [n]} A_{ii}$ and $A_{\max} = \max_{i \in [n]} A_{ii}$.

\subsubsection{Bound on first sync phase}

We first bound $T_1 = \min\{t \ge 1 \colon i^{(t)} = j^{(t)}\}$, the first time we are in a sync phase.

\begin{lemma}\label{Lem:T1}
We have $1 \le T_1 \le \kappa + 2$.
Furthermore, $0 \le w^{(T_1)}_i \le 3\kappa+2$ for all $i \in [n]$.
\end{lemma}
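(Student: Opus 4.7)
The plan is to split on $T_1 = 1$ vs.\ $T_1 \ge 2$. Since $(x^{(1)},y^{(1)}) \in \Delta_n \times \Delta_n$ and $A$ is diagonal with entries in $(0,A_{\max}]$, the vectors $p^{(1)} = Ay^{(1)}$ and $q^{(1)} = Ax^{(1)}$ lie entrywise in $[0,A_{\max}]$, so $\psi(x^{(1)},y^{(1)}) = q^{\ast(1)} - p^{(1)}_\ast \le A_{\max}$ and $|y^{(1)}_\ell - x^{(1)}_\ell| \le 1$. If $T_1 = 1$, plugging these into~\eqref{Eq:w} yields $w^{(1)}_\ell \le \kappa + 1 \le 3\kappa + 2$, and the $T_1$ bound is trivial.

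Now suppose $T_1 \ge 2$. By Lemma~\ref{Lem:SyncSplit}, rounds $\{1,\dots,T_1-1\}$ form a single split$(j,i)$ phase for some $j \neq i$, and round $T_1$ opens a sync$(j,j)$ phase. By~\eqref{Eq:FPState}, during this split phase only the $i^{\text{th}}$ entry of $p$ and the $j^{\text{th}}$ entry of $q$ evolve, each growing by $A_{ii}$ and $A_{jj}$ per round respectively. The fact that round $T_1 - 1$ is still split yields $q^{(T_1-1)}_j \le q^{(T_1-1)}_i$, i.e.,
\begin{align*}
(T_1-2)\,A_{jj} \;\le\; q^{(1)}_i - q^{(1)}_j \;\le\; A_{\max},
\end{align*}
which gives $T_1 \le 2 + A_{\max}/A_{jj} \le \kappa + 2$.

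For the weight-vector bound at $T_1$, I will compute $u^{(T_1)}$ and $v^{(T_1)}$ coordinate-wise and invoke~\eqref{Eq:w2}. Coordinate $j$ has $u^{(T_1)}_j = v^{(T_1)}_j = 0$, so $w^{(T_1)}_j = 0$. Coordinate $i$ has $u^{(T_1)}_i = p^{(1)}_i + (T_1-1)A_{ii} - p^{(1)}_j \le T_1 A_{ii}$ (using $0 \le p^{(1)}_j \le p^{(1)}_i \le A_{ii}$), and the inequality $(T_1-2)A_{jj} \le q^{(1)}_i - q^{(1)}_j$ established above forces $v^{(T_1)}_i = q^{(1)}_j + (T_1-1)A_{jj} - q^{(1)}_i \le A_{jj}$, so $w^{(T_1)}_i \le T_1 + \kappa \le 2\kappa + 2$. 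For $\ell \notin \{i,j\}$ the frozen coordinates give $u^{(T_1)}_\ell = p^{(1)}_\ell - p^{(1)}_j \le A_{\max}$ and $v^{(T_1)}_\ell = q^{(T_1)}_j - q^{(1)}_\ell \le q^{(1)}_i + A_{jj} \le 2A_{\max}$, yielding $w^{(T_1)}_\ell \le 3\kappa$. Each of the three bounds is at most $3\kappa + 2$, and nonnegativity is immediate from~\eqref{Eq:w2}.

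The main obstacle is securing the tight estimate $v^{(T_1)}_i \le A_{jj}$: using only the crude $v^{(T_1)}_i \le (T_1-1)A_{jj}$ would make $w^{(T_1)}_i$ scale as $\kappa^2$ and defeat the claimed $O(\kappa)$ bound. This step crucially exploits that round $T_1-1$ is a split (not sync) round, which is precisely where the alternation structure from Lemma~\ref{Lem:SyncSplit} enters.
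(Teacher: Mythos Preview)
Your proof is correct. The $T_1 \le \kappa+2$ bound proceeds along the same lines as the paper (tracking the split$(j,i)$ phase and using that round $T_1-1$ is still split); the paper simply packages this as an invocation of Lemma~\ref{lem:split}.

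For the weight-vector bound, you take a different route from the paper. The paper does not split into the three cases $\ell=j$, $\ell=i$, $\ell\notin\{i,j\}$; instead it uses the definition~\eqref{Eq:w} directly: dropping the $-x^{(T_1)}_\ell$ term gives the uniform estimate
\[
w^{(T_1)}_\ell \;\le\; \frac{\psi(x^{(T_1)},y^{(T_1)})}{A_{\ell\ell}} + y^{(T_1)}_\ell \;\le\; \frac{2A_{\max}}{A_{\min}} + T_1 \;\le\; 3\kappa+2,
\]
where $\psi(x^{(T_1)},y^{(T_1)}) \le \psi(x^{(1)},y^{(1)}) + A_{\max} \le 2A_{\max}$ by Lemma~\ref{Lem:DeltaPsi} and $y^{(T_1)}_\ell \le T_1$ trivially. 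This is a two-line argument that avoids the coordinate-wise bookkeeping you carry out via~\eqref{Eq:w2}. Your approach, by contrast, computes $u^{(T_1)}$ and $v^{(T_1)}$ explicitly and, as you note, requires the sharper observation $v^{(T_1)}_i \le A_{jj}$ to keep $w^{(T_1)}_i$ linear in $\kappa$. The payoff of your route is slightly sharper constants on individual coordinates (e.g.\ $w^{(T_1)}_j = 0$ exactly, $w^{(T_1)}_i \le 2\kappa+2$), which are not needed downstream; the paper's route is shorter and more uniform.
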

\begin{proof}
If $T_1 = 1$ then we are done,
so assume $T_1 \ge 2$.
The rounds $\{1,\dots,T_1-1\}$ form a split phase, say a split$(j,h)$ phase (and round $T_1$ is a sync$(j,j)$ phase).
By Lemma~\ref{lem:split}, this split phase has length $T_1-1 \le 1 + A_{jj}^{-1} v_j^{(1)} \le 1 + A_{\min}^{-1} v_j^{(1)}$.
Since $v_j^{(1)} \le q_\ast^{(1)} \le A_{\max}$, this implies $T_1 \le \kappa+2$.

From Lemma~\ref{Lem:DeltaPsi} we have $\psi(x^{(T_1)}, y^{(T_1)}) \le \psi(x^{(1)}, y^{(1)}) + A_{\max} \le 2A_{\max}$.
Then from the definition of $w$ in~\eqref{Eq:w}, we have for any $i \in [n]$,
$w^{(T_1)}_i \le A_{ii}^{-1} \psi(x^{(T_1)}, y^{(T_1)}) + y^{(T_1)}_i \le A_{\min}^{-1} (2A_{\max}) + T_1 \le 3\kappa + 2$.
From the expression of $w$ in~\eqref{Eq:w2} 
we also have $w^{(T_1)}_i \ge 0$.
\end{proof}

\subsubsection{Bound between weight vector and duality gap}

We show that the entries of the weight vector are proportional to the duality gap at the beginning of each sync-split pair.
This is a more general form of Lemma~\ref{Lem:Increase}.

\begin{lemma}\label{Lem:wPsi} 
   Under the setting above, for all $s \ge 1$,
   \begin{align*} 
   &\frac{\psi(x^{(T_s)},y^{(T_s)}) - \psi(x^{(T_1)},y^{(T_1)})}{A_{\max}} \\
   &\le w^{(T_s)}_i \le 2\frac{\psi(x^{(T_s)},y^{(T_s)}) - \psi(x^{(T_1)},y^{(T_1)})}{A_{\min}} + 3\kappa+2
   \end{align*}
   for all  $i \neq i_s$. 
\end{lemma}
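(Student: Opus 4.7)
The plan is a straightforward induction on $s$ that telescopes the per-pair updates from Corollary~\ref{cor:w_increase}. For each $s \geq 1$, let $\epsilon_s = \psi(x^{(T_{s+1})},y^{(T_{s+1})}) - \psi(x^{(T_s)},y^{(T_s)})$, so that $\sum_{r=1}^{s-1}\epsilon_r = \psi(x^{(T_s)},y^{(T_s)}) - \psi(x^{(T_1)},y^{(T_1)})$. Since we only care about the coordinates $i \neq i_s$ (and recall $w^{(T_s)}_{i_s} = 0$), the natural quantities to track are
\[
m_s := \min_{i \neq i_s} w^{(T_s)}_i, \qquad M_s := \max_{i \neq i_s} w^{(T_s)}_i.
\]
I would first prove, using Corollary~\ref{cor:w_increase} applied to the $s$-th sync-split$(i_s\to i_{s+1})$ pair, the recurrences
\[
m_{s+1} \geq m_s + \tfrac{\epsilon_s}{A_{\max}}, \qquad M_{s+1} \leq M_s + \tfrac{2\epsilon_s}{A_{\min}}.
\]

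The argument for these recurrences is the main bookkeeping step and will rely on a two-case analysis over $i \neq i_{s+1}$. If $i \notin \{i_s,i_{s+1}\}$, Corollary~\ref{cor:w_increase}(1) gives $w^{(T_{s+1})}_i = w^{(T_s)}_i + A_{ii}^{-1}\epsilon_s$, bounded below by $m_s + \epsilon_s/A_{\max}$ and above by $M_s + \epsilon_s/A_{\min}$. The slightly subtle case is $i = i_s$: here Corollary~\ref{cor:w_increase}(2) gives $w^{(T_{s+1})}_{i_s} = w^{(T_s)}_{i_{s+1}} + (A_{i_s i_s}^{-1}+A_{i_{s+1} i_{s+1}}^{-1})\epsilon_s$. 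Since $i_{s+1} \neq i_s$, the value $w^{(T_s)}_{i_{s+1}}$ is one of the entries over which $m_s$ and $M_s$ are taken, hence is sandwiched between them, which yields the desired bounds for this coordinate as well.

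Telescoping these two recurrences from $s=1$ gives
\[
m_s \geq m_1 + \tfrac{\psi(x^{(T_s)},y^{(T_s)}) - \psi(x^{(T_1)},y^{(T_1)})}{A_{\max}},
\]
and
\[
M_s \leq M_1 + \tfrac{2(\psi(x^{(T_s)},y^{(T_s)}) - \psi(x^{(T_1)},y^{(T_1)}))}{A_{\min}}.
\]
For the base case, $m_1 \geq 0$ since all entries of $w$ are nonnegative (by~\eqref{Eq:w2}), and $M_1 \leq 3\kappa + 2$ by the second part of Lemma~\ref{Lem:T1}. Combining these with the fact that any $w^{(T_s)}_i$ with $i \neq i_s$ satisfies $m_s \leq w^{(T_s)}_i \leq M_s$ yields the claimed two-sided bound.

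I do not anticipate any real obstacle beyond carefully handling the index relabeling in the $i=i_s$ case of the inductive step: the fact that the weight at the ``old'' sync index $i_s$ in pair $s+1$ inherits from the weight at the ``new'' sync index $i_{s+1}$ in pair $s$ (which is precisely one of the non-sync entries of pair $s$) is what makes the telescoping close cleanly, and it is really all that prevents the argument from being a single line.
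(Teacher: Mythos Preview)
Your proposal is correct and follows essentially the same approach as the paper: induction on $s$, with the base case supplied by Lemma~\ref{Lem:T1} and the inductive step by the two-case analysis (either $i \notin\{i_s,i_{s+1}\}$ or $i=i_s$) using Corollary~\ref{cor:w_increase}. The only cosmetic difference is that you package the induction hypothesis via the min and max $m_s,M_s$ over the non-sync coordinates, whereas the paper carries the per-coordinate bound~\eqref{Eq:wPsi} directly through the induction; the underlying argument and the key observation about how $w^{(T_{s+1})}_{i_s}$ inherits from $w^{(T_s)}_{i_{s+1}}$ are identical.
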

\begin{proof}
For $s \ge 1$, let $\epsilon_s = \psi(x^{(T_{s+1})},y^{(T_{s+1})}) - \psi(x^{(T_s)},y^{(T_s)})$, and let $C = 3\kappa+2$, so we want to show that
\begin{align}\label{Eq:wPsi}
\frac{1}{A_{\max}} \sum_{r=1}^{s-1} \epsilon_r \le w^{(T_s)}_i \le \frac{2}{A_{\min}} \sum_{r=1}^{s-1} \epsilon_r + C
\end{align}
for all $i \neq i_s$.

We prove this by induction.
The base case $s = 1$ follows from Lemma~\ref{Lem:T1}.
Assume the claim~\eqref{Eq:wPsi} holds for some $s$.
We will show it also holds for $s+1$.

First, for $i \neq \{i_s, i_{s+1}\}$, by Corollary~\ref{cor:w_increase} we have
\begin{align*}
\frac{\epsilon_s}{A_{\max}} \le w^{(T_{s+1})}_i - w^{(T_s)}_i \le \frac{\epsilon_s}{A_{\min}} \le 2\frac{\epsilon_s}{A_{\min}}.
\end{align*}
Combining this with the hypothesis~\eqref{Eq:wPsi} for $i$ at time $T_s$ gives the claim for $i$ at time $T_{s+1}$.

Now for $i = i_{s}$, by Corollary~\ref{cor:w_increase} we also have
\begin{align*}
\frac{\epsilon_s}{A_{\max}} \le 2\frac{\epsilon_s}{A_{\max}} \le w^{(T_{s+1})}_{i_s} - w^{(T_s)}_{i_{s+1}} \le 2\frac{\epsilon_s}{A_{\min}}.
\end{align*}
Combining this with the hypothesis~\eqref{Eq:wPsi} for $i_{s+1}$ at time $T_s$ gives the claim for ${i_s}$ at time $T_{s+1}$.
Thus, we have shown the claim~\eqref{Eq:wPsi} also holds for all $i \neq i_{s+1}$ at time $T_{s+1}$, completing the induction step.
\end{proof}

\subsubsection{Proof of Lemma~\ref{Lem:Increase}}

\begin{proof}[Proof of Lemma~\ref{Lem:Increase}]
This is the lower bound in Lemma~\ref{Lem:wPsi}.
\end{proof}

\subsection{Length of sync-split pairs}

We show the length of a sync-split pair is proportional to an entry of the weight vector.
Here we recall $\kappa = A_{\max}/A_{\min} \ge 1$.

\begin{lemma}\label{Lem:LengthGeneral}
Suppose rounds $\{t,\dots,t+\ell-1\}$ form a sync-split$(i \to j)$ pair for some $i \neq j$.
The length $\ell$ of this sync-split pair is bounded by:
\begin{align*}
w^{(t)}_j \le \ell \le (\kappa+1) w^{(t)}_j + \kappa + 2.
\end{align*}
\end{lemma}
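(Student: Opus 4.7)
The plan is to decompose the sync-split$(i\to j)$ pair into a sync$(i,i)$ phase of some length $s_1\ge 1$ and a following split$(j,i)$ phase of some length $s_2\ge 1$, so $\ell=s_1+s_2$. Lemma~\ref{lem:sync} applied to the sync phase gives $s_1A_{ii}=u_j^{(t)}+\epsilon_1$ with $0\le \epsilon_1\le A_{ii}$, and Lemma~\ref{lem:split} applied to the split phase (which starts at time $t+s_1$) gives $s_2A_{jj}=v_j^{(t+s_1)}+\epsilon_2$ with $0\le \epsilon_2\le A_{jj}$. The link between the two phases is the clean update $v_j^{(t+s_1)}=v_j^{(t)}+s_1A_{ii}$ from property (c) in the proof of Lemma~\ref{lem:sync} (valid since $j\neq i$), so that $s_2A_{jj}=v_j^{(t)}+s_1A_{ii}+\epsilon_2$.

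For the lower bound, I would simply drop the $\epsilon_k\ge 0$ terms to obtain $s_1\ge u_j^{(t)}/A_{ii}$ and $s_2\ge v_j^{(t)}/A_{jj}+s_1A_{ii}/A_{jj}$. Adding the estimates and using $s_1A_{ii}/A_{jj}\ge u_j^{(t)}/A_{jj}$ yields $\ell\ge (u_j^{(t)}+v_j^{(t)})/A_{jj}=w^{(t)}_j$, which is the desired inequality. This step is essentially arithmetic once the correct formula for $v_j^{(t+s_1)}$ is in hand.

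For the upper bound I would use $\epsilon_1\le A_{ii}$ to get $s_1\le u_j^{(t)}/A_{ii}+1$ and $\epsilon_2\le A_{jj}$ to get $s_2\le v_j^{(t)}/A_{jj}+s_1 A_{ii}/A_{jj}+1$. Summing gives
\begin{equation*}
\ell \le s_1\bigl(1+A_{ii}/A_{jj}\bigr)+v_j^{(t)}/A_{jj}+1.
\end{equation*}
Substituting the bound on $s_1$ and expanding produces $\ell\le u_j^{(t)}/A_{ii}+(u_j^{(t)}+v_j^{(t)})/A_{jj}+A_{ii}/A_{jj}+2$. The key conversion is $u_j^{(t)}/A_{ii}=(A_{jj}/A_{ii})\cdot u_j^{(t)}/A_{jj}\le \kappa\,w^{(t)}_j$ and $A_{ii}/A_{jj}\le \kappa$, which together yield $\ell\le (\kappa+1)w^{(t)}_j+\kappa+2$.

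The main obstacle is simply tracking the right quantity at the right time: the split-phase analysis in Lemma~\ref{lem:split} uses $v_j$ at the \emph{start of the split phase}, not at the start of the sync-split pair, so one must correctly invoke property (c) of the sync-phase analysis to transport $v_j^{(t)}$ to $v_j^{(t+s_1)}$. Once that bookkeeping is done, both bounds reduce to routine algebra with the two $\epsilon_k$'s taking the extreme values $0$ (for the lower bound) or $A_{ii},A_{jj}$ (for the upper bound), and the $\kappa$ factors arise only from comparing $A_{ii}$ to $A_{jj}$ when converting $u_j^{(t)}/A_{ii}$ into a multiple of $w^{(t)}_j$.
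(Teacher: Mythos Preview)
Your argument is correct and follows essentially the same route as the paper: decompose the pair into its sync and split phases, read off the phase lengths from Lemmas~\ref{lem:sync} and~\ref{lem:split}, transport the relevant quantity across the phase boundary, and finish with elementary inequalities in $\kappa$. The only cosmetic difference is that the paper tracks $w_j^{(t+k)}$ directly via conclusion~(2) of Lemma~\ref{lem:sync} (so that the split-phase length is sandwiched between $w_j^{(t+k)}$ and $w_j^{(t+k)}+1$, and $w_j^{(t)}\le w_j^{(t+k)}\le w_j^{(t)}+\kappa$), whereas you dig into the proof of Lemma~\ref{lem:sync} to pull out property~(c) for $v_j$; both are equivalent bookkeeping.
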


\begin{proof}
Let $t+k$ denote the first split round in this sync-split pair.

By Lemma~\ref{lem:sync}, the sync$(i,i)$ phase $\{t,\dots,t+k-1\}$ has length 
$0 \le k \le 1 + A_{ii}^{-1} u^{(t)}_j \le 1 + \kappa w^{(t)}_j$.

By Lemma~\ref{lem:split}, the split$(j,i)$ phase $\{t+k,\dots,t+\ell-1\}$ has length 
$w^{(t+k)}_j  \le \ell-k \le  1 + w^{(t+k)}_j $
where $w^{(t+k)}_j = A_{jj}^{-1}v^{(t+k)}_j$ since $u^{(t+k)}_j = 0$.
By Lemma~\ref{lem:sync}, we know $w^{(t+k)}_j = w^{(t)}_j + A_{jj}^{-1} \epsilon_1$
where $\epsilon_1 = \psi(x^{(t+k)},y^{(t+k)}) - \psi(x^{(t)},y^{(t)}) \in [0,A_{\max}]$,
so $w^{(t)}_j \le w^{(t+k)}_j \le w^{(t)}_j + \kappa$.
Therefore, we have $w^{(t)}_j \le \ell-k \le w^{(t)}_j + \kappa + 1$.

Combining the two cases above yields 
$w^{(t)}_j \,\le\, \ell = k + (\ell-k) \,\le\, (\kappa+1) w^{(t)}_j + \kappa + 2$.
\end{proof}

\subsubsection{Proof of Lemma~\ref{Lem:Length}}

\begin{proof}[Proof of Lemma~\ref{Lem:Length}]
This is the lower bound in Lemma~\ref{Lem:LengthGeneral} for the sync-split pair $\{T_s,\dots,T_{s+1}-1\}$ with length $T_{s+1}-T_s$.
\end{proof}

\subsection{Bound on duality gap}

We prove the following result on the behavior of the duality gap over the sync-split pairs.
This is a more general form of Lemma~\ref{Lem:Diag}, which is just the upper bound.
As above, let $T_1,T_2,\dots$ denote the starting times of the sync-split pairs.
Let $\kappa = A_{\max}/A_{\min} \ge 1$ where $A_{\min} = \min_{i \in [n]} A_{ii}$ and $A_{\max} = \max_{i \in [n]} A_{ii}$.

\begin{lemma}\label{Lem:DiagGeneral} 
   Under the setting above, for all $s \ge 1$:
   \begin{equation}
   \begin{split}\label{Eq:DiagGeneral}
   \frac{A_{\min}}{2} &\left( \frac{T_{s+1}-T_1}{s (\kappa+1)} - (7\kappa+4) \right)  \\
   &\le\, \psi(x^{(T_s)},y^{(T_s)}) - \psi(x^{(T_1)},y^{(T_1)}) \\
   &\le\, 3A_{\max} \sqrt{T_{s+1}-T_1}.
   \end{split}
   \end{equation}
\end{lemma}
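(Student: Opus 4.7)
\medskip

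\noindent\textbf{Proof proposal.} Let $\Psi_s := \psi(x^{(T_s)}, y^{(T_s)}) - \psi(x^{(T_1)}, y^{(T_1)})$, so both inequalities in \eqref{Eq:DiagGeneral} are statements about $\Psi_s$. The plan is to get a two-sided per-pair estimate of the form
\begin{align*}
\frac{\Psi_r}{A_{\max}} \,\le\, T_{r+1}-T_r \,\le\, \frac{2(\kappa+1)}{A_{\min}}\Psi_r + C_\kappa,
\end{align*}
sum this telescopically from $r=1$ to $s$, and then convert sums of $\Psi_r$ into a function of $\Psi_s$ by exploiting monotonicity.

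To establish the per-pair estimate, apply \Cref{Lem:LengthGeneral} to the sync-split$(i_r \to i_{r+1})$ pair, so that $T_{r+1}-T_r$ is sandwiched by $w^{(T_r)}_{i_{r+1}}$ and $(\kappa+1)w^{(T_r)}_{i_{r+1}} + \kappa+2$. Since $i_{r+1} \neq i_r$, \Cref{Lem:wPsi} applies with $i = i_{r+1}$ and gives $\Psi_r/A_{\max} \le w^{(T_r)}_{i_{r+1}} \le 2\Psi_r/A_{\min} + 3\kappa+2$. Combining these yields the claimed estimate with $C_\kappa := (\kappa+1)(3\kappa+2)+\kappa+2 = 3\kappa^2+6\kappa+4$, and one checks $C_\kappa/(\kappa+1) \le 7\kappa+4$.

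For the \emph{upper} bound on $\Psi_s$, summing the lower side gives $T_{s+1}-T_1 \ge \frac{1}{A_{\max}}\sum_{r=1}^s \Psi_r$. By \Cref{lem:mass-cons} (or \Cref{cor:w_increase}) the duality gap grows by at most $2A_{\max}$ per sync-split pair, so $\Psi_r \ge \Psi_s - 2A_{\max}(s-r)$. Writing $k = \lfloor \Psi_s/(2A_{\max})\rfloor$ and keeping only the nonnegative terms in this inequality (those with $r \ge s-k$) produces $\sum_{r=1}^s \Psi_r \ge \Psi_s^2/(4A_{\max})$ after a short arithmetic-series calculation. Plugging this in gives $\Psi_s^2 \le 4A_{\max}^2(T_{s+1}-T_1)$, which implies $\Psi_s \le 3A_{\max}\sqrt{T_{s+1}-T_1}$ (with room to spare on the constant).

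For the \emph{lower} bound, summing the upper side of the per-pair estimate and using $\Psi_r \le \Psi_s$ by monotonicity of the duality gap (\Cref{Lem:DeltaPsi}) yields
\begin{align*}
T_{s+1}-T_1 \,\le\, \frac{2(\kappa+1)s\Psi_s}{A_{\min}} + sC_\kappa.
\end{align*}
Rearranging produces $\Psi_s \ge \frac{A_{\min}}{2(\kappa+1)}\cdot\frac{T_{s+1}-T_1}{s} - \frac{A_{\min}C_\kappa}{2(\kappa+1)}$, and the earlier inequality $C_\kappa/(\kappa+1) \le 7\kappa+4$ converts this into the form stated in \eqref{Eq:DiagGeneral}. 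The main nuisance is the constant bookkeeping in the lower bound (tracking the additive $\kappa+2$ term and the first sync phase contribution correctly); the substantive step is the quadratic lower bound on $\sum_r \Psi_r$ used in the upper half, which is where the $\sqrt{T_{s+1}-T_1}$ rate is actually produced.
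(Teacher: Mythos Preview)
Your proposal is correct and follows essentially the same route as the paper: combine \Cref{Lem:wPsi} and \Cref{Lem:LengthGeneral} into a two-sided per-pair estimate on $T_{r+1}-T_r$, then sum. The only organizational difference is that you track the partial sums $\Psi_r$ directly, whereas the paper tracks the increments $\epsilon_r = \Psi_{r+1}-\Psi_r$ and offloads the arithmetic to the helper \Cref{Lem:Sum}; your phrasing is a bit cleaner (it avoids the paper's case split for small $E_s$ in the upper bound and the extra $4\kappa(\kappa+1)$ term in the lower bound), but the substance is the same.
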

\begin{proof}
For $s \ge 1$, let $\epsilon_s = \psi(x^{(T_{s+1})},y^{(T_{s+1})}) - \psi(x^{(T_s)},y^{(T_s)})$,
so by Lemma~\ref{lem:mass-cons}, $0 \le \epsilon_s \le 2A_{\max}$.
Let $E_s = \sum_{r=1}^s \epsilon_r = \psi(x^{(T_{s+1})},y^{(T_{s+1})}) - \psi(x^{(T_1)},y^{(T_1)})$.

The $s^{\text{th}}$ sync-split pair $\{T_s,\dots,T_{s+1}-1\}$ is a sync-split$(i_s \to i_{s+1})$ pair.
Since $i_s \neq i_{s+1}$, by Lemma~\ref{Lem:wPsi} we have
\begin{align*}
   \frac{1}{A_{\max}} \sum_{r=1}^{s-1} \epsilon_r \,\le\, w^{(T_s)}_{i_{s+1}} \,\le\, \frac{2}{A_{\min}}  \sum_{r=1}^{s-1} \epsilon_r + 3\kappa+2.
\end{align*}
Furthermore, by Lemma~\ref{Lem:LengthGeneral}, the length of this sync-split pair is bounded by
\begin{align*}
w^{(T_s)}_{i_{s+1}} \,\le\, T_{s+1}-T_s \,\le\, (\kappa+1) w^{(T_s)}_{i_{s+1}} + \kappa + 2.
\end{align*}
Combining the two results above yields the following:

\begin{enumerate}
  \item First, combining the lower bounds,
  \begin{align*}
  T_{s+1} - T_1 &= \sum_{r=1}^s (T_{r+1}-T_r) \\
  &\ge \frac{1}{A_{\max}}  \sum_{r=1}^s \sum_{\ell=1}^{r-1} \epsilon_\ell \\
  &= \frac{1}{A_{\max}} \sum_{r=1}^s (s-r+1) \epsilon_r.
  \end{align*}
  Assume for now $E_s \ge 4A_{\max}$.
  Since $0 \le \epsilon_r \le 2A_{\max}$, by the lower bound in Lemma~\ref{Lem:Sum} below we know
  $\sum_{r=1}^s (s-r+1) \epsilon_r \ge \frac{1}{8A_{\max}} E_s^2$.
  Thus, $T_{s+1}-T_1 \ge \frac{1}{8A_{\max}^2} E_s^2$, which implies the desired upper bound in~\eqref{Eq:DiagGeneral}:
  \begin{align*}
  E_s \,\le\, \sqrt{8} A_{\max} \sqrt{T_{s+1}-T_1} \,\le\, 3 A_{\max} \sqrt{T_{s+1}-T_1}.
  \end{align*}
  Now if $E_s < 4A_{\max}$, then we still have the upper bound $E_s < 4A_{\max} \le 3A_{\max} \sqrt{T_{s+1}-T_1}$ since $T_{s+1} \ge T_2 \ge T_1 + 2$ for $s \ge 1$.

  \item Second, combining the upper bounds,
\begin{align*}
&T_{s+1}-T_1 = \sum_{r=1}^s (T_{r+1}-T_r) \\
 &\le (\kappa+1) \sum_{r=1}^s w^{(T_r)}_{i_{r+1}} + s(\kappa + 2) \\
&\le (\kappa+1)  \sum_{r=1}^s \left( \frac{2}{A_{\min}}  \sum_{\ell=1}^{r-1} \epsilon_\ell + 3\kappa+2 \right) + s(\kappa + 2) \\
&= \frac{2(\kappa+1) }{A_{\min}}  \sum_{r=1}^s (s-r+1) \epsilon_r + s (\kappa+1)(3\kappa+2) \\
&~~~~ + s (\kappa + 2).
\end{align*}
Let $E_s = \sum_{r=1}^s \epsilon_r = \psi(x^{(T_{s+1})},y^{(T_{s+1})}) - \psi(x^{(T_1)},y^{(T_1)})$.
Since $0 \le \epsilon_r \le 2A_{\max}$, by the upper bound in Lemma~\ref{Lem:Sum} below we know
$\sum_{r=1}^s (s-r+1) \epsilon_r \le s\left(E_s + 2A_{\max}\right)$.
Therefore,
\begin{align*}
\frac{T_{s+1}-T_1}{s} &\le 
\frac{2(\kappa+1) }{A_{\min}} E_s + 4\kappa(\kappa+1) \\
&~~~~~ + (\kappa+1)(3\kappa+2) + (\kappa + 2).
\end{align*}
Since $4\kappa(\kappa+1) + (\kappa+1)(3\kappa+2) + (\kappa + 2)  = 7\kappa^2 + 10 \kappa + 4 \le (\kappa+1)(7\kappa+4)$, this implies the desired lower bound in~\eqref{Eq:DiagGeneral}.
\end{enumerate}
\end{proof}

\subsubsection{Proof of Lemma~\ref{Lem:Diag}}

\begin{proof}[Proof of Lemma~\ref{Lem:Diag}]
This is the upper bound in Lemma~\ref{Lem:DiagGeneral}.
\end{proof}

\subsection{Proof of Theorem~\ref{Thm:Upper}}
\label{Sec:ThmUpperProof}

We finally prove the upper bound on the fast convergence of the Fictitious Play dynamic.

\ThmUpper*

\begin{proof}[Proof of Theorem~\ref{Thm:Upper}]
Let $T_1,T_2,\dots$ denote the starting times of the sync-split pairs.

Suppose we are at round $t \ge 1$.
Let $s \ge 1$ be such that $T_{s+1} \le t < T_{s+2}$ (if $t < T_2$, see below).
We write the duality gap at time $t$ in terms of at time $T_s$:
\begin{align*}
\psi(x^{(t)},y^{(t)}) \,&\le\, \psi(x^{(T_{s+1})},y^{(T_{s+1})}) + A_{\max} \\
&\le\, \psi(x^{(T_{s})},y^{(T_{s})}) + 3A_{\max}. 
\end{align*}
Furthermore, by Lemma~\ref{Lem:Diag} we know that
\begin{align*}
\psi(x^{(T_s)},y^{(T_s)}) \,&\le\, \psi(x^{(T_1)},y^{(T_1)}) + 3A_{\max} \sqrt{T_{s+1} - T_1} \\
&\le\, 2A_{\max} + 3A_{\max} \sqrt{t}.
\end{align*}
Therefore, at round $t$ we have
$\psi(x^{(t)},y^{(t)}) \,\le\, 5A_{\max} + 3A_{\max}\sqrt{t} \,\le\, 8A_{\max} \sqrt{t}$.
Now if $1 \le t < T_2$, then we also have 
$\psi(x^{(t)},y^{(t)}) \,\le\, \psi(x^{(T_2)},y^{(T_2)})  \le 4A_{\max} < 8A_{\max}\sqrt{t}$.

Thus, we have shown $\psi(x^{(t)},y^{(t)}) \le 8A_{\max}\sqrt{t}$ for all $t \ge 1$.
Then for the scaled iterate,
\begin{align*}
\psi(\hat x^{(t)},\hat y^{(t)})  = \frac{\psi(x^{(t)},y^{(t)})}{t}
\,\le\, \frac{8A_{\max}}{\sqrt{t}}
= O\left(\frac{A_{\max}}{\sqrt{t}}\right)
\end{align*}
as desired.
\end{proof}

\section{Proofs for lower bound in identity case} 
\label{Sec:DetailsLower}

We now give the details for the lower bound in the identity case.

In this section we assume $A = I_n$ is the identity matrix
and we start at the vertices of the simplex, $(x^{(1)}, y^{(1)}) = (e_i, e_j)$ for some $i,j \in \Delta_n$.
Since the updates in the FP dynamic~\eqref{Eq:FP} only involve integer values, all the iterates $(x^{(t)}, y^{(t)})$ also have integer entries.
Since $A = I_n$, the duality gap is also an integer.
In particular, if the duality gap increases, then it must increase by at least $1$.

\begin{remark}
Our lower bound can be generalized, for example to the case when $A$ is a diagonal matrix with rational entries, and the starting points $x^{(1)}, y^{(1)}$ also have rational entries.
Then when the duality gap increases, it must increase by at least $1/K$, where $K$ is the smallest integer such that $KA$, $Kx^{(1)}$, and $Ky^{(1)}$ all have integer entries.
Then the same line of arguments below holds and the bound scales by $1/K$.
For ease of exposition, in this section we present the simple identity case.
\end{remark}

\subsection{Increase in duality gap}

The duality gap can only increase when there is an action switch, namely in between different phases.
Sometimes the duality gap does not increase when there is a tie.
In fact we can characterize explicitly the change in the duality gap and the explicit dependence on the lexicographic tie-breaking.

\subsubsection{Increase after a sync phase}

We recall from Section~\ref{Sec:TieBreak} that $\sigma_x$ and $\sigma_y$ are the permutations generating the lexicographic order in Assumption~\ref{as:no_tie}, and $\I^{(t)}, \J^{(t)}$ are the sets of minimizers and maximizers in each round.

\begin{lemma}\label{lem:large_small}
	Suppose round $t$ is a sync$(i,i)$ round and round $t+1$ is a split$(j,i)$ round for some $j \neq i$. 
	Let $\epsilon = \psi(x^{(t+1)}, y^{(t+1)}) - \psi(x^{(t)},y^{(t)})$. Then:
	\begin{enumerate}
	\item If $\sigma_x(i) > \sigma_x(j)$, then $\epsilon = 0$.
	\item If $\sigma_x(i) < \sigma_x(j)$, then $\epsilon = 1$.
	\end{enumerate}
\end{lemma}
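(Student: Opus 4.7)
The plan is to invoke Lemma~\ref{lem:sync} with $s=1$ to reduce the claim to a discrete dichotomy, then use the lexicographic tie-breaking rule at rounds $t$ and $t+1$ to pin down which side of the dichotomy holds. Since $A = I_n$, every entry of $p^{(t)}$ is an integer, so $u^{(t)}_j$ is a nonnegative integer. Applying Lemma~\ref{lem:sync} to the sync$(i,i)$ phase of length $s=1$ gives $\epsilon = A_{ii} - u^{(t)}_j = 1 - u^{(t)}_j$ together with $0 \le \epsilon \le 1$. Hence $u^{(t)}_j \in \{0,1\}$, so $\epsilon \in \{0,1\}$, and the task becomes showing that $u^{(t)}_j = 0$ holds precisely when $\sigma_x(i) < \sigma_x(j)$.

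Next I would track how the loss vector $p$ changes between rounds $t$ and $t+1$. Since $j^{(t)} = i$, the update~\eqref{Eq:FPState} gives $p^{(t+1)} = p^{(t)} + e_i$, so $p^{(t+1)}_\ell = p^{(t)}_\ell$ for all $\ell \neq i$ and $p^{(t+1)}_i = p^{(t)}_\ast + 1$. This is the only change to $p$, and it is the lever that translates tie-breaking at round $t$ into tie-breaking at round $t+1$.

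Now I would handle the two cases. If $u^{(t)}_j = 0$, then $j \in \I^{(t)}$, and since the lex rule at round $t$ selected $i$ over $j$ we obtain $\sigma_x(i) < \sigma_x(j)$, giving case~2 with $\epsilon = 1$. Conversely, if $u^{(t)}_j = 1$, then $p^{(t+1)}_j = p^{(t)}_j = p^{(t)}_\ast + 1 = p^{(t+1)}_i$; combined with the fact that $p^{(t+1)}_\ell \ge p^{(t)}_\ast$ for every $\ell \neq i$, the minimum in round $t+1$ is attained both at $i$ and at $j$, so $i,j \in \I^{(t+1)}$. Since the lex rule at round $t+1$ selects $j$, we conclude $\sigma_x(j) < \sigma_x(i)$, giving case~1 with $\epsilon = 0$. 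Because $\sigma_x$ is a permutation, these two cases exhaust the possibilities and are mutually exclusive, proving the biconditional.

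There is no serious obstacle here; the one point worth verifying carefully is the claim $p^{(t+1)}_\ast = p^{(t)}_\ast + 1$ in the case $u^{(t)}_j = 1$, which is needed to ensure $j$ really is a new minimizer. This follows from $j \in \I^{(t+1)}$: any $\ell \neq i$ with $p^{(t)}_\ell = p^{(t)}_\ast$ would give $p^{(t+1)}_\ell = p^{(t)}_\ast < p^{(t)}_\ast + 1 = p^{(t+1)}_j$, contradicting $j \in \I^{(t+1)}$. With this consistency check in hand, the argument is a direct consequence of Lemma~\ref{lem:sync} together with the definition of lexicographic tie-breaking.
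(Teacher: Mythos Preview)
Your argument is correct. The route differs from the paper's in two respects worth noting.

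First, you invoke Lemma~\ref{lem:sync} to obtain $\epsilon = 1 - u^{(t)}_j \in [0,1]$ and then use integrality to force $u^{(t)}_j \in \{0,1\}$, whereas the paper re-derives these facts from scratch by tracking $p^{(t)}$ and $q^{(t)}$ directly. Your shortcut is legitimate, with one caveat: Lemma~\ref{lem:sync} is stated for a sync \emph{phase} (a maximal block), and nothing in the hypotheses of Lemma~\ref{lem:large_small} guarantees that round $t-1$ is not also sync$(i,i)$. This is harmless because the proof of Lemma~\ref{lem:sync} never uses maximality on the left, but you should say so explicitly rather than silently treating $\{t\}$ as a phase.

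Second, the case split runs in opposite directions. The paper conditions on the sign of $\sigma_x(i) - \sigma_x(j)$ and deduces the value of $u^{(t)}_j$ (hence $\epsilon$); you condition on $u^{(t)}_j$ and deduce the sign of $\sigma_x(i) - \sigma_x(j)$. These are contrapositives of one another, and the tie-breaking argument is the same in both: when $u^{(t)}_j = 0$ there is a tie at round $t$ resolved in favour of $i$, and when $u^{(t)}_j = 1$ there is a tie at round $t+1$ resolved in favour of $j$. Your direction has the minor advantage that the dichotomy $u^{(t)}_j \in \{0,1\}$ is established up front, so exhaustiveness is immediate; the paper's direction has the advantage of being self-contained. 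Your final consistency check (that no $\ell \neq i$ can have $p^{(t)}_\ell = p^{(t)}_\ast$ when $u^{(t)}_j = 1$) is exactly the right point to verify and is handled correctly.
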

\begin{proof} 
By assumption, 
$p^{(t)}_\ast = p^{(t)}_i$ and $q^{\ast (t)} = q^{(t)}_i$,
while $p^{(t+1)}_\ast = p^{(t+1)}_j$ and $q^{\ast (t+1)} = q^{(t+1)}_i$.
Since round $t$ is a sync$(i,i)$ round,
the FP update is $p^{(t+1)} = p^{(t)} + e_i$ and $q^{(t+1)} = q^{(t)} + e_i$.
In particular, we have $q^{\ast (t+1)}  = q^{\ast (t)}  + 1$.

\begin{enumerate}
  \item  If $\sigma_x(i) > \sigma_x(j)$, then we must have $p^{(t)}_i = p^{(t)}_\ast < p^{(t)}_j$ (otherwise $j \in \I^{(t)}$, and $x$ would have played $j$ at time $t$ due to the tie-break order).
  In particular, since the entries of $p^{(t)}$ are integers, $p^{(t)}_i + 1 \le p^{(t)}_j$.
  From the assumption $p^{(t+1)}_* = p^{(t+1)}_j$, we have $p^{(t)}_j = p^{(t+1)}_j \le p^{(t+1)}_i = p^{(t)}_i + 1$.
  Therefore, we in fact have $p^{(t)}_j = p^{(t)}_i + 1$, and thus $p^{(t+1)}_\ast = p^{(t)}_\ast + 1$.

  This implies $\epsilon = q^{\ast (t+1)}  - p^{(t+1)}_\ast - (q^{\ast (t)} - p^{(t)}_\ast) = 1-1 = 0$, as desired.

  \item If $\sigma_x(i) < \sigma_x(j)$, then we must have $p^{(t+1)}_i > p^{(t+1)}_\ast = p^{(t+1)}_j$ (otherwise $i \in \I^{(t+1)}$, and $x$ would have played $i$ at time $t+1$ due to the tie-break order).
  In particular, since the entries of $p^{(t+1)}$ are integers, $p^{(t+1)}_i \ge p^{(t+1)}_j + 1$.
  Therefore, $p^{(t)}_i = p^{(t+1)}_i  - 1 \ge  p^{(t+1)}_j = p^{(t)}_j$.
  Since $p^{(t)}_\ast = p^{(t)}_i$ by assumption, we also have $p^{(t)}_i \le p^{(t)}_j$, and thus in fact $p^{(t)}_i = p^{(t)}_j$ (this means there is a tie and $i,j \in \I^{(t)}$, but $x$ chooses $i$ since $\sigma_x(i) < \sigma_x(j)$).
  In particular, $p^{(t+1)}_\ast = p^{(t+1)}_j = p^{(t)}_j = p^{(t)}_i = p^{(t)}_\ast$.

  This implies $\epsilon = q^{\ast (t+1)}  - p^{(t+1)}_\ast - (q^{\ast (t)} - p^{(t)}_\ast) = 1-0 = 1$, as desired.

\end{enumerate}
\end{proof}

\subsubsection{Increase after a split phase}

\begin{lemma}\label{lem:large_small_split}
	Suppose round $t$ is a split$(j,i)$ round and round $t+1$ is a sync$(j,j)$ round for some $j \neq i$. 
	Let $\epsilon = \psi(x^{(t+1)}, y^{(t+1)}) - \psi(x^{(t)},y^{(t)})$. Then:
	\begin{enumerate}
	\item If $\sigma_y(i) > \sigma_y(j)$, then $\epsilon = 0$.
	\item If $\sigma_y(i) < \sigma_y(j)$, then $\epsilon = 1$.
	\end{enumerate}
\end{lemma}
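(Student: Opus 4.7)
The plan is to mirror the argument for Lemma~\ref{lem:large_small}, exchanging the roles of the row and column players. Since round $t$ is split$(j,i)$ and round $t+1$ is sync$(j,j)$, the $x$ player continues to play $j$ throughout, while it is the $y$ player who switches action from $i$ to $j$. All inequalities that determine $\epsilon$ therefore come from the $q$ vector (and the permutation $\sigma_y$), whereas $p$ plays only a passive role.

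First I would record what changes across the transition. Since round $t$ is split$(j,i)$, $p^{(t)}_\ast = p^{(t)}_j$ and $q^{\ast(t)} = q^{(t)}_i$; since round $t+1$ is sync$(j,j)$, $p^{(t+1)}_\ast = p^{(t+1)}_j$ and $q^{\ast(t+1)} = q^{(t+1)}_j$. Because $A = I_n$, the FP update~\eqref{Eq:FPState} gives $p^{(t+1)} = p^{(t)} + e_{j^{(t)}} = p^{(t)} + e_i$ and $q^{(t+1)} = q^{(t)} + e_{i^{(t)}} = q^{(t)} + e_j$. Since $j \neq i$, the $j$-th entry of $p$ is unchanged, so $p^{(t+1)}_\ast = p^{(t)}_\ast$, while the $j$-th entry of $q$ increases by one, so $q^{\ast(t+1)} = q^{(t)}_j + 1$. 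Plugging into the definition of the duality gap yields $\epsilon = (q^{(t)}_j + 1) - q^{(t)}_i = 1 - v^{(t)}_j$, so it suffices to show $v^{(t)}_j = 1$ in case (1) and $v^{(t)}_j = 0$ in case (2).

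For case (1), $\sigma_y(i) > \sigma_y(j)$: because $y$ played $i$ at time $t$ under lexicographic tie-breaking, $j$ cannot belong to $\J^{(t)}$, so $q^{(t)}_j < q^{(t)}_i$; integrality then forces $q^{(t)}_j \le q^{(t)}_i - 1$. On the other hand, $j \in \J^{(t+1)}$ forces $q^{(t+1)}_j \ge q^{(t+1)}_i$, i.e. $q^{(t)}_j + 1 \ge q^{(t)}_i$. The two inequalities combine to $q^{(t)}_j = q^{(t)}_i - 1$, hence $v^{(t)}_j = 1$ and $\epsilon = 0$. For case (2), $\sigma_y(i) < \sigma_y(j)$: since $y$ played $j$ at time $t+1$ but would have preferred $i$ under a tie, we must have $q^{(t+1)}_j > q^{(t+1)}_i$, i.e. $q^{(t)}_j + 1 > q^{(t)}_i$, which by integrality gives $q^{(t)}_j \ge q^{(t)}_i$. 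Coupled with $q^{(t)}_j \le q^{\ast(t)} = q^{(t)}_i$, this pins down equality, so $v^{(t)}_j = 0$ and $\epsilon = 1$.

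The main obstacle, such as it is, is merely keeping the indices straight: in Lemma~\ref{lem:large_small} the switching player was $x$ and the relevant squeeze took place on the $p$ vector using $\sigma_x$, whereas here the split$(j,i)$ precedes a sync$(j,j)$, so the switching player is $y$ and the squeeze takes place on the $q$ vector using $\sigma_y$. Otherwise the structural ingredients, namely integrality of the state variables together with the two-sided best-response constraints before and after the action switch, are identical to those used in the proof of Lemma~\ref{lem:large_small}.
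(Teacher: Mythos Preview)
Your proof is correct and follows essentially the same approach as the paper's: both arguments record that $p^{(t+1)}_\ast = p^{(t)}_\ast$ and then use the lexicographic tie-breaking constraint at the appropriate round (round $t$ in case~(1), round $t+1$ in case~(2)) together with integrality to squeeze $q^{(t)}_i - q^{(t)}_j$ to the required value. The only cosmetic difference is that you package the conclusion as $\epsilon = 1 - v^{(t)}_j$ and then pin down $v^{(t)}_j$, whereas the paper tracks $q^{\ast(t+1)} - q^{\ast(t)}$ directly.
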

\begin{proof} 
By assumption, 
$p^{(t)}_\ast = p^{(t)}_j$ and $q^{\ast (t)} = q^{(t)}_i$, while $p^{(t+1)}_\ast = p^{(t+1)}_j$ and $q^{\ast (t+1)} = q^{(t+1)}_j$.
Since round $t$ is a split$(j,i)$ round,
the FP update is $p^{(t+1)} = p^{(t)} + e_i$ and $q^{(t+1)} = q^{(t)} + e_j$.
In particular, we have $p^{(t+1)}_\ast  = p^{(t)}_\ast$.

\begin{enumerate}
  \item  If $\sigma_y(i) > \sigma_y(j)$, then we must have $q^{(t)}_i = q^{\ast (t)} > q^{(t)}_j$ (otherwise $j \in \J^{(t)}$, and $y$ would have played $j$ at time $t$ due to the tie-break order).
  In particular, since the entries of $q^{(t)}$ are integers, $q^{(t)}_i \ge q^{(t)}_j + 1$.
  From the assumption $q^{\ast(t+1)} = q^{(t+1)}_j$, we also have $q^{(t)}_j + 1 = q^{(t+1)}_j \ge q^{(t+1)}_i = q^{(t)}_i$.
  Therefore, we in fact have $q^{(t)}_i = q^{(t)}_j + 1$, and thus $q^{\ast (t+1)} = q^{(t+1)}_j = q^{(t)}_j + 1 = q^{(t)}_i = q^{\ast (t)}$.
  
  This implies $\epsilon = q^{\ast (t+1)}  - p^{(t+1)}_\ast - (q^{\ast (t)} - p^{(t)}_\ast) = 0-0 = 0$, as desired.

  \item If $\sigma_y(i) < \sigma_y(j)$, then we must have $q^{(t+1)}_i < q_\ast^{(t+1)} = q^{(t+1)}_j$ (otherwise $i \in \J^{(t+1)}$, and $y$ would have played $i$ at time $t+1$ due to the tie-break order).
  In particular, since the entries of $q^{(t+1)}$ are integers, $q^{(t+1)}_i + 1 \le q^{(t+1)}_j$.
  Therefore, $q^{(t)}_i = q^{(t+1)}_i  \le  q^{(t+1)}_j-1 = q^{(t)}_j$.

  Since $q^{\ast (t)} = q^{(t)}_i$ by assumption, we also have $q^{(t)}_i \ge q^{(t)}_j$, and thus in fact $q^{(t)}_i = q^{(t)}_j$ (this means there is a tie and $i,j \in \J^{(t)}$, but $y$ chooses $i$ since $\sigma_y(i) < \sigma_y(j)$).
  In particular, $q^{\ast (t+1)} = q^{(t+1)}_j = q^{(t)}_j + 1 = q^{(t)}_i + 1 = q^{\ast (t)} + 1$.
  
  This implies $\epsilon = q^{\ast (t+1)}  - p^{(t+1)}_\ast - (q^{\ast (t)} - p^{(t)}_\ast) = 1-0 = 1$, as desired.

\end{enumerate}
\end{proof}

\subsubsection{Proof of Lemma~\ref{Lem:DG}}

We now prove that the duality gap must strictly increase over any sequence of $n$ sync-split pairs.
Here recall $T_1, T_2, \dots$ are the starting times of the sync-split pairs in the FP dynamic.
Let the $s^{\text{th}}$ sync-split pair be a sync-split$(i_s \to i_{s+1})$ pair.

\LemDG*

\begin{proof} 
From round $T_{s}$ to round $T_{s+n}$, there are $n$ 
transitions from sync$(i_r,i_r)$ phase to split$(i_{r+1},i_r)$ phase, for $s \le r \le s+n-1$.
By Lemma~\ref{lem:large_small}, in each of these transitions, the duality gap stays the same if $\sigma_x$ is decreasing, which can happen at most $n-1$ consecutive times.
Since there are $n$ transitions, the duality gap must increase at least once, and it must increase by at least $1$.

Similarly, from round $T_{s}$ to round $T_{s+n}$, there are $n$ 
transitions from split$(i_{r+1},i_r)$ phase to sync$(i_{r+1},i_{r+1})$ phase, for $s \le r \le s+n-1$.
By Lemma~\ref{lem:large_small_split}, in each of these transitions the duality gap stays the same if $\sigma_y$ is decreasing, which can happen at most $n-1$ consecutive times.
Since there are $n$ transitions, the duality gap must increase at least once, and it must increase by at least $1$.
Combining the two contributions above, we conclude that from round $T_{s}$ to round $T_{s+n}$, the duality gap must increase by at least $2$.
\end{proof}

\subsection{Proof of Lemma~\ref{Lem:LengthUpper}}

We show the starting times of the sync-split pairs are increasing at most quadratically.

\LemLengthUpper*

\begin{proof}
Since we are in the identity case, $\kappa = 1$.
The left and right sides of the bound~\eqref{Eq:DiagGeneral} imply the following inequality for $\phi = \sqrt{T_{s+1}-T_1}$:
\begin{align*}
\phi^2 - 12 s \phi - 22 s \le 0.
\end{align*}
This quadratic inequality implies for $\phi \ge 0$:
\begin{align*}
\phi &\le 6s + \sqrt{36 s^2 + 22s} = 6s \left(1 + \sqrt{1 + \frac{11}{18s}} \right)  \\
&\le\, 6s \left(2 + \frac{11}{36s}\right)
\,\le\, 12s + 2
\end{align*}
where we have used the inequality $\sqrt{1+x} \le 1 + x/2$.
Since $T_1 \le 3$ from Lemma~\ref{Lem:T1}, this implies
\begin{align*}
T_{s+1} = T_1 + \phi^2 
\,\le\, 3 + (12s+2)^2
\,\le\, 196 s^2
\end{align*}
where the last inequality holds for $s \ge 2$.

If $s = 1$, by Lemma~\ref{Lem:LengthGeneral} we know $T_2 < T_1 + 10 < 196$,
so the bound still holds.
\end{proof}

\subsection{Proof of Lemma~\ref{Lem:Apple}}

Then we can prove that the duality gap at the beginning of every $n$ sync-split pairs is bounded below by the square root of the starting time.

\LemApple*

\begin{proof}
Let $s = \ell n + 1$ for some $\ell = \frac{(s-1)}{n} \ge 1$.
By iterating~Lemma~\ref{Lem:DG} for $\ell$ times, we get
\begin{align*}
\psi(x^{(T_{s})},y^{(T_{s})}) \,&\ge\, \psi(x^{(T_{1})},y^{(T_{1})}) + 2 \ell \\
\,&\ge\, 0 + \frac{2}{n} (s-1)
\,\ge\, \frac{2}{n} \frac{\sqrt{T_s}}{14}
\,=\, \frac{\sqrt{T_s}}{7n}
\end{align*}
where in the last inequality we have used Lemma~\ref{Lem:LengthUpper}.
\end{proof}

\subsection{Proof of Theorem~\ref{Thm:Lower}}

We now prove the lower bound in Theorem~\ref{Thm:Lower} by extending the result in Lemma~\ref{Lem:Apple} to all time $t \ge 1$.

\ThmLower*

\begin{proof}
Suppose we are at round $t \ge 1$.
Let $\ell \ge 0$ be such that for $s = \ell n + 1$, $T_{s} \le t < T_{s+n}$.
From round $t$ to round $T_{s+n}$ there are at most $n$ sync-split pairs, during each of which $\psi$ can increase by at most $2$, so
$\psi(x^{(t)},y^{(t)}) \,\ge\, \psi(x^{(T_{s+n})},y^{(T_{s+n})}) - 2n$.
Therefore, by Lemma~\ref{Lem:Apple}, 
\begin{align*}
\psi(x^{(t)},y^{(t)}) 
 \,\ge\, \frac{\sqrt{T_{s+n}}}{7n} - 2n
 \,\ge\, \frac{\sqrt{t}}{7n} - 2n.
\end{align*}
Thus, for the scaled iterate,
\begin{align*}
\psi(\hat x^{(t)},\hat y^{(t)})  &= \frac{\psi(x^{(t)},y^{(t)})}{t}
\,\ge\, \frac{1}{t}\left(\frac{\sqrt{t}}{7n} - 2n\right) \\
&= \Omega\left(\frac{1}{n\sqrt{t}}\right).
\end{align*}
\end{proof}

\section{Discussion}
\label{Sec:Disc}

In this paper we have demonstrated a $\Theta(t^{-1/2})$ convergence rate for FP with lexicographic tie-breaking for diagonal payoff matrices. Our work leaves several possible directions to explore.

One immediate question is whether we can extend the fast convergence result from the diagonal case to more general classes of matrices. For general matrices, the sync-split structure of the FP dynamic (Lemma~\ref{Lem:SyncSplit}) no longer holds, as there can be multiple consecutive split phases.
Moreover, the potential function (the weight vector) is no longer proportional to the duality gap. 
Despite these, we observe that the $O(t^{-1/2})$ convergence rate seems to hold empirically (see \Cref{fig:empirical_karlin}), 
suggesting that Karlin's conjecture holds more generally.

\begin{figure}[ht!]
	\centering
	\begin{subfigure}[t]{0.48\textwidth}
		\centering
		\includegraphics[width=.8\textwidth]{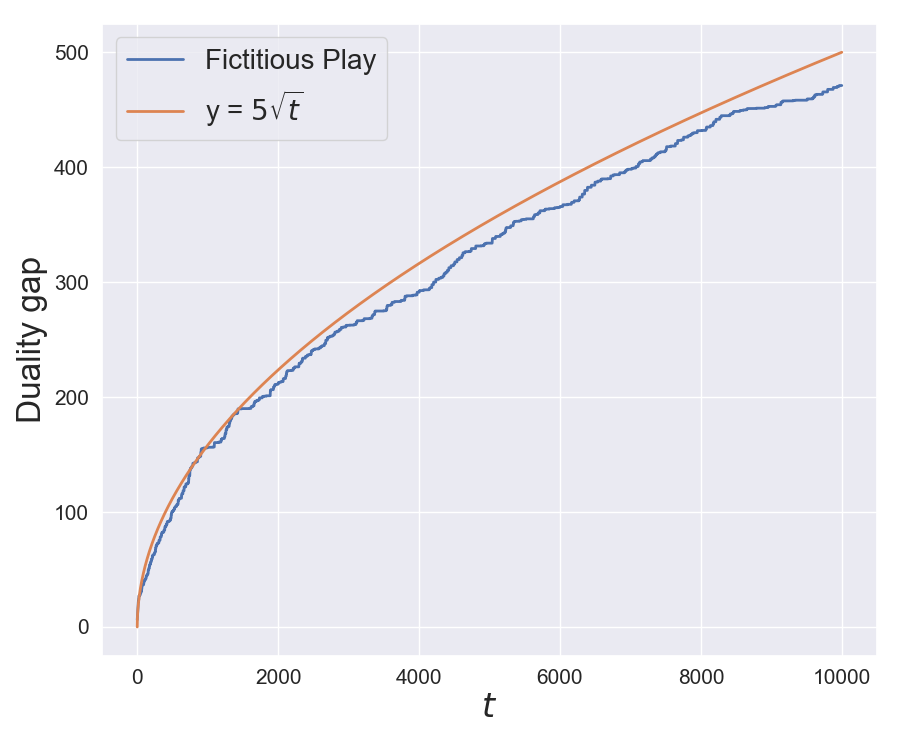}
		\caption{Max duality gap vs. iteration $t$}
	\end{subfigure}
	~ 
	\begin{subfigure}[t]{0.48\textwidth}
		\centering
		\includegraphics[width=.8\textwidth]{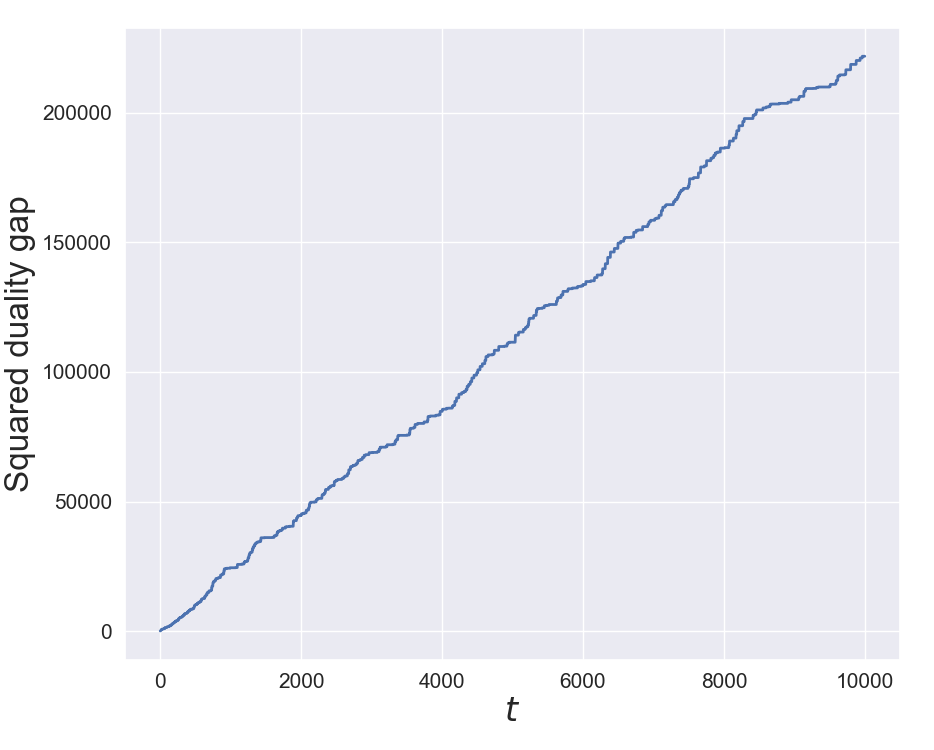}
		\caption{Squared duality gap vs. iteration $t$}
	\end{subfigure}
	\caption{These plots show the maximum duality gap of Fictitious Play at each iteration, where the maximum is over 100 runs of FP, each on a different random $10\times 10$ Gaussian payoff matrix. We see that the duality gap of FP is bounded by $O(\sqrt{t})$ after $t$ iterations. \label{fig:empirical_karlin}
	}
	
\end{figure}

As a first step, we can try to consider a more restricted class of payoff matrices. For example, when the payoff matrix is {\em doubly stochastic}, which means each row and column sums to $1$, one can show an analog of Lemma~\ref{lem:mass-cons}, except that the relations do not hold entrywise, but they hold for the sum of the entries of the weight vector. However, one still lacks an analog for \Cref{Lem:Length} to show that the weight vector relates to the phase length.

Our analysis also shows promise for proving the convergence of related dynamics. As described in \Cref{app:discretization}, FP can be viewed as a forward Euler discretization of continuous-time FP. Two related algorithms are Alternating Fictitious Play (AFP) and Optimistic Fictitious Play (OFP), which correspond to using different methods to discretize the continuous-time flow. 

Alternating Fictitious Play is defined as follows:
\begin{align*}
\begin{split}
x^{(t+1)} &= x^{(t)} + e_{ i^{(t)} }  \\ 
y^{(t+1)} &= y^{(t)} + e_{ j^{(t+1)} }.  
\end{split}
\end{align*}
In AFP, the players take turns to play best response to the opponent's history, rather than best-responding simultaneously as in standard FP.
This method was also described in Brown's original paper on FP \cite{brown1949some}, and Robinson's $O(t^{-\frac{1}{2n-2}})$ convergence result applies to AFP as well \cite{robinson1951iterative}. More recently, alternating versions of dynamics have been shown to have favorable properties compared to their simultaneous counterparts \cite{gidel2019negative, bailey2020finite}. 
Similar to FP, AFP seems to achieve a $O(1/\sqrt{t})$ convergence rate empirically.

Optimistic Fictitious Play is defined as follows:
\begin{align}
\begin{split}
x^{(t+1)} &= x^{(t)} + 2e_{ i^{(t)} } - e_{ i^{(t-1)} }\\
y^{(t+1)} &= y^{(t)} + 2e_{ j^{(t)} } - e_{ j^{(t-1)} }
\end{split}
\end{align}
In OFP, each player predicts that the opponent will repeat their last action one more time, and then plays a best response to the opponent's history plus the predicted action. In doing so, OFP attempts to approximate the Be-The-Leader dynamic, whose duality gap is bounded by $O(1/t)$ \cite{kalai2005efficient}. A recent line of work has showed that optimistic algorithms can improve convergence rates in game settings \cite{rakhlin2013optimization, syrgkanis2015fast, abernethy2018faster, wang2018acceleration}. For zero-sum games, OFP appears to have a $O(1/t)$ bound on its duality gap empirically.

We believe that our analysis tools, such as the weight vector, could be carried over to AFP and OFP.
We conjecture that for diagonal payoff matrices, AFP and OFP have convergence rates $O(1/\sqrt{t})$ and $O(1/t)$, respectively, as this appears to hold empirically.

\appendix

\section{Proof of Lemma~\ref{Lem:Minimax}}
\label{Sec:MinimaxProof}

\begin{proof}[Proof of Lemma~\ref{Lem:Minimax}]
From the definition of the duality gap~\eqref{Eq:DG}, for any $(x,y) \in \Delta_n \times \Delta_n$ we have
\begin{align*}
\psi(x,y) \ge x^\top Ay - x^\top Ay = 0.
\end{align*}

Now if $(x^\ast,y^\ast) \in \Delta_n \times \Delta_n$ is such that $\psi(x^\ast,y^\ast) = 0$, then 
\begin{align*}
\max_{ y \in \Delta_n} \, (x^\ast)^\top A  y  = (x^\ast)^\top Ay^\ast
= \min_{ x \in \Delta_n} \,  x^\top Ay^\ast.
\end{align*}
Therefore, for all $(x,y) \in \Delta_n \times \Delta_n$, we have
$(x^\ast)^\top A y \le (x^\ast)^\top A y^\ast \le x^\top A y^\ast$,
which means $(x^\ast,y^\ast)$ is a minimax point.

Conversely, if $(x^\ast,y^\ast)$ is a minimax point, which means
$(x^\ast)^\top A y \le (x^\ast)^\top A y^\ast \le x^\top A y^\ast$  for all $(x,y) \in \Delta_n \times \Delta_n$,
then we have $\max_{ y \in \Delta_n} \, (x^\ast)^\top A  y = (x^\ast)^\top Ay^\ast =  \min_{ x \in \Delta_n} \,  x^\top Ay^\ast$, 
and therefore $\psi(x^\ast,y^\ast) = 0$.
\end{proof}

\section{A geometric view of Fictitious Play}
\label{App:Geom}

Let $\Z = \Delta_n \times \Delta_n \subset \R^{2n}$.
We write $(x,y) \in \Delta_n \times \Delta_n$ as $z = (x,y) \in \Z$.

We observe that we can write the duality gap $\psi(z) = \psi(x,y)$ from~\eqref{Eq:DG} in terms of the support function of $\Z$:
\begin{align*}
\psi(z) &= \max_{\tilde y \in \Delta_n} \, x^\top A \tilde y - \min_{\tilde x \in \Delta_n} \, \tilde x^\top Ay \\
&= \max_{\tilde z = (\tilde x, \tilde y) \in \Z} \begin{pmatrix}
\tilde x \\ \tilde y
\end{pmatrix}^\top
\begin{pmatrix}
-Ay \\ A^\top x
\end{pmatrix} \\
&= \max_{\tilde z \in \Z} \tilde z^\top Sz \\
&= \phi_\Z(Sz)
\end{align*}
where we have defined the skew-symmetric matrix 
\begin{align*}
S = \begin{pmatrix}
0 & -A \\ A^\top & 0
\end{pmatrix} \in \R^{2n \times 2n}.
\end{align*}

Here $\phi_\Z \colon \R^{2n} \to \R$ is the {\em support function} of $\Z$, which is defined by $\phi_\Z(\theta) = \max_{z \in \Z} \, \theta^\top z$.
We recall the subgradient set\footnote{The subgradient set of a convex function $\phi \colon \R^m \to \R$ at $\theta \in \R^m$ is the set $\partial \phi(\theta) = \{g \in \R^m \colon \phi(x) \ge \phi(\theta) + g^\top (x-\theta) ~ \text{ for all } x \in \R^m\}$} of the support function is the set of maximizers:
\begin{align*}
\partial \phi_\Z(\theta) = \arg\max_{z \in \Z} \, \theta^\top z.
\end{align*}
Therefore, the Fictitious Play dynamic for $z^{(t)} = (x^{(t)}, y^{(t)})$ is an instance of the update rule\footnote{In general, we can choose any element $\omega^{(t)}$ from the subgradient set $\partial \phi_\Z(Sz^{(t)})$ to make the update.
In our formulation of FP~\eqref{Eq:FP}, we choose a particular extreme point $\omega^{(t)} = (e_{i^{(t)}}, e_{j^{(t)}})$ based on the lexicographic ordering.}
\begin{align}\label{Eq:FPGeom}
z^{(t+1)} = z^{(t)} + \omega^{(t)}, ~~~ \omega^{(t)} \in \partial \phi_\Z(Sz^{(t)}).
\end{align}

This geometric view makes it clear that Fictitious Play increases the duality gap.
Indeed, since the vector $S^\top \omega^{(t)}$ is in the subgradient set $\partial \psi(z^{(t)}) = S^\top \partial \phi_\Z(Sz^{(t)})$, this means
\begin{align*}
\psi(z^{(t+1)}) &\ge \psi(z^{(t)}) + (S^\top \omega^{(t)})^\top (z^{(t+1)}-z^{(t)}) \\
&\ge \psi(z^{(t)}) + (\omega^{(t)})^\top S \omega^{(t)} \\
&= \psi(z^{(t)}) .
\end{align*}
In the last equality above we have used the fact $S$ is skew-symmetric ($S^\top = -S$), so the quadratic form defined by $S$ is equal to $0$.

\subsection{Behavior in continuous time}

The FP update~\eqref{Eq:FPGeom} is the $\eta = 1$ case of a discrete-time algorithm $z^{(t+1)} = z^{(t)} + \eta \, \partial \phi_\Z(Sz^{(t)})$, for $t = 1,2,\dots$.
As $\eta \to 0$, this algorithm converges to the continuous-time dynamic $Z(t)$, $t \ge 0$, given by
\begin{align}\label{Eq:Flow}
\dot Z(t) = \partial \phi_\Z(SZ^{(t)}).
\end{align}
If $S$ is invertible, then we can write the above as a skew-gradient flow: 
$\dot Z(t) = (S^\top)^{-1} \partial \psi(Z(t))$.
Since $S$ is skew-symmetric, this flow preserves the duality gap:
\begin{align*}
\frac{d}{dt} \psi(Z(t)) &= \partial \psi(Z(t))^\top \dot Z(t) \\
&= \partial \phi_\Z(Z(t))^\top S \partial \phi_\Z(Z(t)) = 0.
\end{align*}
Therefore, for the scaled iterate $\hat Z(t) = \frac{1}{t} Z(t)$, the duality gap decreases at a $\Theta(t^{-1})$ rate:
\begin{align*}
\psi(\hat Z(t)) = \frac{\psi(Z(t))}{t} = \frac{\psi(Z_1)}{t} = \Theta(t^{-1}).
\end{align*}

\subsubsection{Discretization methods}\label{app:discretization}

We can view the FP update~\eqref{Eq:FPGeom} as a forward discretization (also known as explicit Euler method) of the continuous-time dynamic~\eqref{Eq:Flow}.
In discrete time, this forward discretization does not preserve the duality gap due to the discretization error, and by convexity the duality gap is always increasing.

We can consider other possible algorithms by using other discretization methods.
For example, the backward discretization (or the implicit Euler method) is guaranteed to decrease the duality gap by convexity.
This implicit method is not necessarily implementable as a strategy, but there are approximations to it, for example via the optimistic method.
Another possible discretization is the symplectic Euler method, which should conserve the duality gap better.
This corresponds to the alternating version of Fictitious Play, as we mention in Section~\ref{Sec:Disc}.

\section{A helper lemma}
\label{Sec:Helper}

In the proof of Lemma~\ref{Lem:DiagGeneral} above we use the following result.
Here for $x \in \R$, let $\lfloor x \rfloor$ denote the floor of $x$, which is the largest integer less than or equal to $x$.
In particular, $x-1 \le \lfloor x \rfloor \le x$.

\begin{lemma}\label{Lem:Sum}
Let $0 \le \epsilon_1,\dots,\epsilon_s \le \epsilon_{\max}$ for some $s \ge 1$, and let $E = \sum_{r=1}^s \epsilon_r$.
Then:
\begin{enumerate}
  \item $\sum_{r=1}^s (s-r+1) \epsilon_r \,\le\, s(E+ \epsilon_{\max})$.
  \item If $E \ge 2\epsilon_{\max}$, then $\sum_{r=1}^s (s-r+1) \epsilon_r \,\ge\, \frac{1}{4\epsilon_{\max}} E^2$.
\end{enumerate}
\end{lemma}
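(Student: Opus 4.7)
\textbf{Proof plan for Lemma~\ref{Lem:Sum}.}

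\emph{Part 1 is immediate.} Since $s - r + 1 \le s$ for all $r \in \{1,\ldots,s\}$ and $\epsilon_r \ge 0$, we have $\sum_{r=1}^s (s-r+1)\epsilon_r \le s \sum_{r=1}^s \epsilon_r = sE \le s(E+\epsilon_{\max})$. So Part 1 requires no real argument; the additive slack $s\epsilon_{\max}$ is merely absorbed.

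\emph{Part 2 reduces to a counting argument via Abel summation.} The key identity is
\begin{equation*}
\sum_{r=1}^s (s-r+1)\epsilon_r \;=\; \sum_{r=1}^s \sum_{k=r}^s \epsilon_r \;=\; \sum_{k=1}^s S_k,
\end{equation*}
where $S_k := \sum_{r=1}^k \epsilon_r$ is the partial sum, so $S_s = E$. The bound $\epsilon_r \le \epsilon_{\max}$ translates into the Lipschitz-type property $S_{k} \ge S_{s} - (s-k)\epsilon_{\max} = E - (s-k)\epsilon_{\max}$. In particular, whenever $s - k \le E/(2\epsilon_{\max})$, we get $S_k \ge E/2$.

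\emph{Counting the ``large'' partial sums.} Let $j^\ast := \lfloor E/(2\epsilon_{\max}) \rfloor$. For each $j \in \{0,1,\ldots,j^\ast\}$, the index $k = s - j$ satisfies $S_k \ge E/2$. This gives $j^\ast + 1$ terms, and $j^\ast + 1 \ge E/(2\epsilon_{\max})$ since $\lfloor x \rfloor \ge x - 1$. Hence
\begin{equation*}
\sum_{k=1}^s S_k \;\ge\; (j^\ast+1)\cdot \frac{E}{2} \;\ge\; \frac{E}{2\epsilon_{\max}} \cdot \frac{E}{2} \;=\; \frac{E^2}{4\epsilon_{\max}}.
\end{equation*}
I need to verify that the indices $s, s-1, \ldots, s-j^\ast$ all lie in $\{1,\ldots,s\}$, i.e.\ that $j^\ast \le s-1$. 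Since each $\epsilon_r \le \epsilon_{\max}$ forces $E \le s\epsilon_{\max}$, we get $j^\ast \le E/(2\epsilon_{\max}) \le s/2$, which is at most $s-1$ as soon as $s \ge 2$; and the assumption $E \ge 2\epsilon_{\max}$ combined with $E \le s\epsilon_{\max}$ forces $s \ge 2$, so this is automatic.

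\emph{Where the assumption $E \ge 2\epsilon_{\max}$ enters.} It is used only to ensure $j^\ast \ge 1$ and $s \ge 2$, so that the chain of elementary estimates above is valid; there is no deeper obstacle. The argument is purely combinatorial and I do not anticipate any step being a real difficulty.
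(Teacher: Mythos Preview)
Your proof is correct, but your route differs from the paper's. The paper proves both parts by an extremal argument: for fixed $E$, the weighted sum $\sum_{r}(s-r+1)\epsilon_r$ is maximized when the mass is packed into the smallest indices ($\epsilon_1=\cdots=\epsilon_m=\epsilon_{\max}$) and minimized when packed into the largest indices ($\epsilon_s=\cdots=\epsilon_{s-m+1}=\epsilon_{\max}$), with $m=\lfloor E/\epsilon_{\max}\rfloor$; evaluating at these extremals gives the two bounds. By contrast, your Part~1 is a one-line monotonicity bound (in fact you get the sharper $sE$, making the additive $s\epsilon_{\max}$ superfluous), and your Part~2 rewrites the sum via Abel summation as $\sum_{k=1}^s S_k$ and then counts how many partial sums $S_k$ exceed $E/2$ using the Lipschitz estimate $S_k \ge E-(s-k)\epsilon_{\max}$. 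Your approach is arguably cleaner and avoids the rearrangement reasoning; the paper's approach, on the other hand, identifies the exact extremal configurations, which makes the tightness of the constants more transparent. One minor remark: your index-validity check can be streamlined by noting that $j^\ast$ is an integer with $j^\ast \le s/2$, hence $j^\ast \le \lfloor s/2 \rfloor \le s-1$ for all $s \ge 1$, so the case distinction on $s\ge 2$ is not strictly necessary.
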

\begin{proof}
If $E = 0$, then all $\epsilon_r = 0$ and we are done.
Now assume $E > 0$.

For fixed $E = \sum_{r=1}^s \epsilon_r$, the maximum of $\sum_{r=1}^s (s-r+1) \epsilon_r$ is achieved when $\epsilon_1 = \cdots = \epsilon_{m} = \epsilon_{\max}$ where $m = \lfloor E/\epsilon_{\max} \rfloor$, $\epsilon_{m+1} = E - m\epsilon_{\max}$, and $\epsilon_{m+2} = \cdots = \epsilon_s = 0$.
This gives
\begin{align*}
\sum_{r=1}^s (s-r+1) \epsilon_r &\le \sum_{r=1}^{m+1} (s-r+1) \epsilon_{\max} \le s(m+1) \epsilon_{\max} \\
&\le s \left(\frac{E}{\epsilon_{\max}}+1\right) \epsilon_{\max}
=  s(E + \epsilon_{\max}).
\end{align*}

Similarly, for fixed $E = \sum_{r=1}^s \epsilon_r$, the minimum of $\sum_{r=1}^s (s-r+1) \epsilon_r$ is achieved when $\epsilon_s = \cdots = \epsilon_{s-m+1} = \epsilon_{\max}$ where $m = \lfloor E/\epsilon_{\max} \rfloor$, $\epsilon_{s-m} = E - m\epsilon_{\max}$, and $\epsilon_{s-m-1} = \cdots = \epsilon_1 = 0$.
This gives
\begin{align*}
\sum_{r=1}^s (s-r+1) \epsilon_r &\ge \sum_{r=s-m+1}^s (s-r+1) \epsilon_{\max} \\
&=  \frac{m(m+1)}{2} \epsilon_{\max} \\
&\ge \left(\frac{E}{\epsilon_{\max}}-1\right) \frac{E}{\epsilon_{\max}} \frac{\epsilon_{\max}}{2}
\ge \frac{E^2}{4\epsilon_{\max}}
\end{align*}
where the last inequality holds if $E \ge 2\epsilon_{\max}$.
\end{proof}

\section*{Acknowledgements}
The authors thank Georgios Piliouras for several valuable and insightful discussions.

\bibliographystyle{plain}
\bibliography{FPBib}

\end{document}